\definecolor{ForestGreen}{rgb}{0.1333,0.5451,0.1333}
\definecolor{DarkRed}{rgb}{0.65,0,0}
\definecolor{Red}{rgb}{1,0,0}
\renewcommand\AB@affilsepx{, \protect\Affilfont}
\title{The Average-Value Allocation Problem\thanks{David Wajc is partially supported by a Taub Family Foundation ``Leader in Science and Technology'' fellowship and ISF grant 3200/24. Anupam Gupta is supported in part by NSF awards CCF-1955785 and CCF-2006953. Part of this work was done while David Wajc was visiting Google Research, and Anupam Gupta was with Carnegie Mellon University.}}
\author[1]{Kshipra Bhawalkar}
\author[1]{Zhe Feng}
\author[1,2]{Anupam Gupta}
\author[1]{Aranyak Mehta}
\author[3]{David Wajc}
\author[1]{Di Wang}
\affil[1]{Google Research}
\affil[2]{NYU}
\affil[3]{Technion}
\date{\vspace{-1.5cm}}
\newtheorem{theorem}{Theorem}[section]
\newtheorem{definition}[theorem]{Definition}
\newtheorem{lemma}[theorem]{Lemma}
\newtheorem{fact}[theorem]{Fact}
\newtheorem{example}[theorem]{Example}
\newtheorem{remark}[theorem]{Remark}
\numberwithin{algorithm}{section}
\newcommand{\junk}[1]{}
\newcommand{\ignore}[1]{}
\newcommand{\E}[0]{{\ensuremath{\mathbb{E}}}}
\newcommand{\poly}{\operatorname{poly}}
\newcommand{\sse}{\subseteq}
\newcommand{\calA}{{\mathcal{A}}}
\newcommand{\calE}{{\mathcal{E}}}
\newcommand{\calI}{{\mathcal{I}}}
\newcommand{\calM}{{\mathcal{M}}}
\newcommand{\eps}{\varepsilon}
\newcommand{\pr}[1]{{\mathds{P}r} \left[ #1 \right]}
\newcommand{\ex}[1]{{\mathbb{E}} \left[ #1 \right]}
\newcommand{\Ber}{\textrm{Bernoulli}}
\newcommand{\Bin}{\textrm{Binomial}}
\renewcommand{\Pr}{\mathds{P}r}
\newcounter{note}[section]
\newcommand{\alert}[1]{{\color{red}#1}}
\newcommand{\qedsymb}{\hfill{\rule{2mm}{2mm}}}
\newcommand{\initOneLiners}{%
    \setlength{\itemsep}{0pt}
    \setlength{\parsep }{0pt}
    \setlength{\topsep }{0pt}
}
\newcommand{\squishlist}{
 \begin{list}{$\bullet$}
  { \setlength{\itemsep}{0pt}
     \setlength{\parsep}{3pt}
     \setlength{\topsep}{3pt}
     \setlength{\partopsep}{0pt}
     \setlength{\leftmargin}{1.5em}
     \setlength{\labelwidth}{1em}
     \setlength{\labelsep}{0.5em} } }
\newcommand{\squishend}{
  \end{list}  }
\DeclarePairedDelimiterX{\infdivx}[2]{(}{)}{%
  #1\;\delimsize\|\;#2%
}
\newcommand{\nf}{\nicefrac}
\newcommand{\AVA}{AVA\xspace}
\newcommand{\GAVA}{GenAVA\xspace}
\newcommand{\pedge}{$P$-edge\xspace}
\newcommand{\nedge}{$N$-edge\xspace}
\newcommand{\pedges}{$P$-edges\xspace}
\newcommand{\nedges}{$N$-edges\xspace}
\newcommand{\pitem}{$P$-item\xspace}
\newcommand{\nitem}{$N$-item\xspace}
\newcommand{\pitems}{$P$-items\xspace}
\newcommand{\nitems}{$N$-items\xspace}
\begin{document}
\maketitle

\begin{abstract}
We initiate the study of centralized algorithms for welfare-maximizing allocation of goods to buyers subject to \emph{average-value constraints}. 
We show that this problem is NP-hard to approximate beyond a factor of $\frac{e}{e-1}$, and provide a $\frac{4e}{e-1}$-approximate offline algorithm. 
For the online setting, we show that no non-trivial approximations are achievable under adversarial arrivals.
Under i.i.d.~arrivals, we 
present a polytime online algorithm that provides a constant approximation of the optimal (computationally-unbounded) online algorithm. In contrast, we show that no constant approximation of the ex-post optimum is achievable by an online algorithm.
\end{abstract}

\newpage

\section{Introduction}
\label{sec:introduction}

Allocating goods to buyers to maximize social
welfare is one of the most central 
problems in economics. This problem, even under linear utilities, 
is complicated by buyers' various constraints and the manner 
in which items are revealed.

In this work we introduce the \emph{average-value allocation} problem
(\AVA).  Here, we wish to maximize social welfare (total value of
allocated items), while guaranteeing for each buyer $j$ an
\emph{average} value of allocated items of at least $\rho_j$.
Formally, if the value of item $i$ for buyer $j$ is $v_{ij}$, and
$x_{ij} \in \{0,1\}$ indicates whether item $i$ is allocated to buyer
$j$, we wish to maximize the social welfare,
$\sum_{ij} v_{ij}\;x_{ij}$, subject to each item being allocated to at most one buyer (i.e., $\sum_j x_{ij} \leq 1$), and to the ``average value'' constraint:
\begin{gather}
 \forall j,\;\;\;\; \sum_{i} v_{ij} \; x_{ij}
   \geq \rho_j \cdot \bigg(\sum_i x_{ij}\bigg). \label{eq:average}
\end{gather}

Average-value constraints arise naturally in numerous situations.
E.g., consider settings when goods are to be distributed among
``buyers'', and the (fixed) cost of distributing, receiving, or
deploying each such good allocated is borne by the recipient. Each
buyer wants their average value for their goods to be at least some
parameter $\rho_j$. This parameter $\rho_j$ allows to convert between
units, and so this fixed cost for each buyer can be in money, time,
labor, or any other unit.  So, for example, for allocation and
distribution of donations to a charitable organization, a certain
value-per-item is required to justify the time contributed by
volunteers, or the money spent by government in the form of subsidies.
In other words, the amount of ``benefit'' per task allocated to an
individual $j$ should be above the threshold $\rho_j$, so that even if
some of the tasks are individually less rewarding (i.e., they have
benefit less than $\rho_j$), the total amount of happiness they get
overall justifies their workload.

In addition to this average-value constraint on the allocation, we may
also consider side-constraints (such as the well-studied budget
constraints), but for now we defer their discussion and focus
on the novel Constraint \eqref{eq:average}. At first glance, the AVA
problem may seem similar to other packing problems in the literature,
but there is a salient difference---it is not a packing problem at
all! Indeed, if buyer $i$ gets some subset
$S_i = \{ j \mid x_{ij} = 1\}$ of items in some feasible allocation,
it is possible that a subset $S' \sse S_i$ of this allocation is no
longer feasible, since its average value may be lower. Given that this
packing (subset-closedness) property is crucial to many previous
results on allocation problems, their techniques do not apply. Hence, we have to examine this problem afresh, and we
ask: \emph{how well can the average-value allocation be approximated?}
We investigate this question, both in the offline and online settings.

\subsection{Our Results and Techniques}
\label{sec:our-results}

Recall that the \AVA problem seeks to maximize the social welfare $\sum_{ij} v_{ij} x_{ij}$ subject to each item going to at most one buyer, and also the novel average-value Constraint~\eqref{eq:average} above.
Our first result rules out polynomial-time exact algorithms for \AVA
in an offline setting, or even a PTAS, showing that this problem is as
hard to approximate as the \textsc{Max-Coverage} problem.
\begin{theorem}[Hardness of \AVA]
  For any constant $\eps > 0$, the \AVA problem is NP-hard to
  $(\frac{e}{e-1}-\eps)$-approximate. 
\end{theorem}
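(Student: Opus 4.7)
The plan is to reduce the canonical hard problem Max-$k$-Coverage to \AVA in an approximation-preserving way. Feige's classical result says that for every constant $\eps>0$ it is NP-hard to find a collection of $k$ sets covering more than a $(1-1/e+\eps)$-fraction of the maximum, so a matching lower bound for \AVA suffices.

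Given a Max-$k$-Cover instance with sets $S_1,\ldots,S_m$ over a universe $U$ of size $n$, I would construct an \AVA instance as follows. Create one buyer $b_S$ for each set $S$; introduce $M$ identical copies $i_u^{(1)},\ldots,i_u^{(M)}$ of every element $u\in U$, each valued at $1$ by $b_S$ if $u\in S$ and $0$ otherwise; and add $k$ \emph{premium} items $q_1,\ldots,q_k$ that every buyer values at $2$. Finally set the threshold $\rho_{b_S}:=1+\tfrac{1}{Mn+1}$ for every buyer, where $M$ is a polynomial in $n,k,1/\eps$ to be fixed at the end.

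The job of $\rho$ is to ``polarize'' each buyer. A direct calculation shows that a buyer $b_S$ holding $t$ premium items together with $r$ element-copies from $S$ (and no $0$-valued items) satisfies its average-value constraint iff $r\le tMn$. Consequently (i) no buyer can absorb any element-copy without first receiving a premium, (ii) a single premium already lets a buyer absorb all $M|S|$ copies of its own set $S$, and (iii) taking any $0$-valued item only tightens the constraint, so WLOG such items are never used. Since there are exactly $k$ premium items and each goes to at most one buyer, at most $k$ buyers are active in any feasible solution, and the optimal welfare is exactly
\[
\mathrm{OPT}_{\text{AVA}} \;=\; 2k + M\cdot c^\star,
\]
where $c^\star$ is the optimum of the Max-$k$-Cover instance, achieved by giving one premium to each of the $k$ sets of an optimal cover and then assigning to each such buyer all copies of the elements it covers.

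It remains to transfer the approximation ratio. Suppose a polynomial-time algorithm produces an \AVA solution of value at least $\bigl(\tfrac{e-1}{e}+\delta\bigr)\mathrm{OPT}_{\text{AVA}}$; its element-copy assignment induces a feasible $k$-cover of size $c'$ satisfying $2k'+Mc'\ge\bigl(\tfrac{e-1}{e}+\delta\bigr)(2k+Mc^\star)$ with $k'\le k$. Rearranging and choosing $M$ large enough (say $M=4k/\delta$) so that the additive slack $2k/M$ is dominated by $\delta c^\star/2$ (using the trivial bound $c^\star\ge 1$), we obtain $c'\ge\bigl(\tfrac{e-1}{e}+\tfrac{\delta}{2}\bigr)c^\star$, contradicting Feige's bound for any fixed $\delta>0$. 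The one delicate step is making the polarization claim fully rigorous---i.e., that in any feasible solution every active buyer uses at least one premium and only items it values at $1$---but this follows from the explicit inequality $r\le tMn$ together with a short case analysis of mixed allocations that also include $0$-valued items.
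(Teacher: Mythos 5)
Your reduction is correct and takes essentially the same route as the paper's: both encode a Max-Coverage instance with one buyer per set, a small number of high-value ``enabling'' items (your premiums, the paper's choice items), and per-element low-value items whose deficit must be subsidized by an enabling item, so that any feasible \AVA allocation corresponds to a $k$-cover. The paper's version is somewhat cleaner in two respects worth noting: it reduces from the \emph{balanced} variant of Max-Coverage (every set has exactly $n/k$ elements) and tunes the choice-item value to $1+(\eps/2)\cdot n/k$ and the element value to $1-\eps/2$, which makes a single enabling item able to subsidize exactly one full set and eliminates the need to duplicate each element $M$ times and then take $M$ large enough to wash out the $O(k)$ additive term; and it keeps $\rho_j=1$ throughout, so the hardness is established even for unit-$\rho$ instances, a slight strengthening stated in the paper's version of the theorem that your construction (with $\rho=1+\tfrac{1}{Mn+1}$) does not directly give. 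Your $M$-copies-plus-large-$M$ argument is a legitimate alternative that avoids relying on the balanced promise problem, at the cost of a polynomially larger instance and the extra bookkeeping in the approximation transfer.
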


We then turn our attention to positive results, and give the following
positive result for the problem.
\begin{theorem}[Offline \AVA]
  \label{thm:offline}
  There exists a randomized polynomial-time algorithm for the \AVA
  problem which achieves an approximation factor of $\frac{4e}{e-1}$. 
\end{theorem}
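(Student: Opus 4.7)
The plan is to round a natural LP relaxation. Since the average-value constraint \eqref{eq:average} rewrites as the linear inequality $\sum_i (v_{ij} - \rho_j)\,x_{ij} \geq 0$, the LP maximizing $\sum_{ij} v_{ij}\,x_{ij}$ subject to this, to the item capacities $\sum_j x_{ij} \le 1$, and $x \ge 0$ is polynomial-time solvable and upper-bounds $\mathrm{OPT}$. Let $x^*$ denote an optimal LP solution. For each buyer $j$, classify items as P-items (those with $v_{ij}\ge \rho_j$) or N-items (those with $v_{ij}<\rho_j$); any P-only allocation to $j$ is automatically average-feasible, while including an N-item requires that $j$ also receives enough P-``surplus'' to offset the deficit $\rho_j - v_{ij}$. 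Let $V_P$ and $V_N$ denote the total LP value contributed by P-items and N-items respectively, so $V_P + V_N \ge \mathrm{OPT}$.

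The algorithm runs two rounding sub-procedures and returns the better. Procedure~(i) restricts to P-items: it rounds the P-portion of $x^*$ using a standard contention-resolution scheme for the item capacities, losing the usual $1-1/e$ factor; since only P-items are allocated, feasibility is automatic, and the expected value is at least $(1-1/e)\,V_P$. Procedure~(ii) also allocates N-items, by grouping each unit of N-mass with enough P-surplus at the same buyer --- which exists by the average constraint at the LP level --- into ``bundles'' that are individually average-feasible, and then rounding the bundles via contention resolution on the underlying items. After an additional factor-of-$2$ loss, stemming from the P-surplus that must be ``spent'' supporting N-items rather than reused in Procedure~(i), this recovers expected value at least $(1-1/e)\,V_N/2$ from the N side.

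Since $\max\{V_P,\,V_N/2\}\ge (V_P+V_N)/4 \ge \mathrm{OPT}/4$, the better of the two procedures achieves expected value at least $\tfrac{e-1}{4e}\cdot \mathrm{OPT}$, giving the claimed $\tfrac{4e}{e-1}$ ratio. The main obstacle I foresee is Procedure~(ii): the coupled rounding of bundles must simultaneously respect item capacities across buyers and preserve per-buyer average-feasibility, which no longer follows automatically, since \AVA is not downward-closed and so naively dropping items from a feasible bundle (as contention resolution sometimes forces) can break the constraint. I expect to handle this via a transportation LP that matches N-mass to P-surplus at each buyer in $x^*$, together with a contention-resolution scheme that treats each matched pair as an atomic unit, so that every bundle surviving contention resolution is individually feasible by construction.
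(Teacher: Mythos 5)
Your plan starts from the natural LP relaxation of \eqref{AVA-ILP} (drop integrality, keep $\sum_i (v_{ij}-\rho_j)x_{ij}\ge 0$ and $\sum_j x_{ij}\le 1$), but the paper explicitly shows that this relaxation has an $\Omega(n)$ integrality gap, so no rounding scheme for it can yield a constant approximation. The bad instance (the paper's Example~4.2 / Example~\ref{ex:nonlinear}) has a single \pitem $p$ with $v_{pj}=1+n\eps$ for all $n$ buyers, and $n$ \nitems, the $i$-th having value $1-\eps$ for buyer $j_i$ only. The fractional solution $x_{pj}=1/n$, $x_{ij_i}=1$ is feasible with value $n+1$, while the integral optimum is $\approx 2$. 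In that LP solution $V_P \approx 1$ and $V_N \approx n$, so your Procedure~(ii) would have to extract $\Omega(n)$ value from the N-side to meet your claimed guarantee of $(1-1/e)V_N/2$ --- but no feasible integral allocation has value more than $\approx 2$. The reason Procedure~(ii) fails is precisely the issue you flag but do not resolve: the LP spreads the \pitem's surplus across $n$ buyers, so the fractional ``bundles'' supported on N-mass all share the same underlying \pitem. Any contention-resolution scheme that treats the \pitem as an underlying resource (whether or not N-items are bundled atomically with matched surplus) can keep at most one such bundle alive, so the N-value you retain is $O(1)$, not $\Omega(V_N)$.

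The paper avoids this trap in two ways, both of which encode bundle identity into the relaxation rather than the rounding. Its main proof of the theorem (\S\ref{sec:matroid-GAP}) does not use LP rounding at all: it randomly disambiguates each item (\Cref{only-p-or-n}, losing a factor~4) and then reduces unambiguous \AVA directly to GAP with a partition-matroid constraint over bins $(p,j)$, invoking the known $(1-1/e)$-approximation. The alternative LP-rounding route (\S\ref{sec:rounding-offline}, with a weaker constant of 32) replaces your LP by a strictly stronger \emph{Bundle-LP} with variables $x_{ijp}$ indexed by triples and, crucially, the constraint $x_{ijp}\le x_{pjp}$, which caps the N-mass that can ride on bundle $(j,p)$ by the probability that $(j,p)$ is actually opened; this is exactly what kills the $\Omega(n)$ gap. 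Your proposed transportation-LP fix addresses feasibility of individual bundles but not the over-counting of a shared \pitem's surplus, so it does not close the gap; you would need to either strengthen the LP along the lines of \eqref{bundle-LP} or abandon LP rounding for the matroid-GAP reduction.
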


To prove \Cref{thm:offline}, we would like to draw on techniques used
for traditional packing problems, but the non-traditional nature of
this problem means we need to investigate its structure carefully. A
key property we prove and leverage throughout is the existence of
approximately-optimal solutions of a very special kind: each buyer
gets a collection of ``bundles'', where a bundle for buyer $j$
consists of a single item $i$ with positive $v_{ij}-\rho_j$ (i.e.,
contributing positively to the average-value
Constraint~\eqref{eq:average}) and some number of items $i$ with
negative $v_{ij}-\rho_j$, such that they together satisfy the AVA
constraint. Given this structure we can focus on partitioning items
among bundles, and allocating bundles to buyers. Note that this
partitioning and allocation have to happen simultaneously, since the
values (i.e., $v_{ij}$) and whether it contributes positively or
negatively (i.e., $v_{ij}-\rho_j$) depend on the buyer and bundle
under consideration. We show how algorithms for GAP (generalized
assignment problem) with matroid
constraints~\cite{calinescu2011maximizing} can be used  to effectively perform both partitioning and allocation.

\paragraph{Relax-and-Round.} In order to extend our results from the
offline to the online settings, and to add in side-constraints, we
then consider linear programming (LP) based relax-and-round algorithms
for the \AVA problem. The LP relaxations take advantage of the
structural properties above, as they try to capture the best
bundling-based algorithms (and hence to approximate the optimal
solution of any kind). Once we have fractional solutions to the LP, we
can then round these in both offline and online settings to get our
feasible allocations.

Our first rounding-based algorithm, given in
\S\ref{sec:rounding-offline}, is in the offline setting, and yields
another $O(1)$-approximate algorithm for \AVA, qualitatively matching
the result from \Cref{thm:offline}. While the constants are weaker,
the result illustrates our ideas, and allows us to support additional
side-constraints (more on this in \S\ref{sec:generalizations}).

\paragraph{Online Algorithms.} We then turn to online \AVA, where
items arrive over $T$ timesteps, and must be allocated to buyers as
soon as they arrive. We want to maintain feasible solutions to the
\AVA at all times. We show that under adversarial arrivals, only
trivial $O(T)$ approximations are possible. We therefore shift our attention to i.i.d.~arrivals. Our first result is a time-efficient
approximation of the optimum (computationally-unbounded) online
algorithm:
\begin{theorem}[Online \AVA: Approximating the Optimal Online IID Algorithm]
  There exists a randomized polynomial-time online algorithm for the
  \AVA problem which achieves a constant factor of the value achieved
  by the optimal (computationally-unbounded) online algorithm.
\end{theorem}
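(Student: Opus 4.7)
The plan is to combine a type-level linear-programming benchmark suited to the i.i.d.~setting with an online rounding procedure that keeps the average-value constraint satisfied after every arrival by \emph{anchoring} each bundle on a positive-contribution item before committing any negatives.

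\textbf{LP benchmark.}~First, I would write a bundle-configuration LP analogous to the one underlying \Cref{thm:offline}, but phrased over item \emph{types} weighted by the i.i.d.~distribution. Letting $p_t$ be the arrival probability of type $t$, so that $p_t T$ is the expected number of type-$t$ arrivals, and introducing a variable $y_{j,B}$ for every buyer $j$ and every AVA-feasible bundle configuration $B$ (a single item with $v_{ij}\geq\rho_j$ together with a set of items with $v_{ij}<\rho_j$ whose totals satisfy~\eqref{eq:average}), the LP maximizes $\sum_{j,B} v(B)\,y_{j,B}$ subject to type-supply constraints $\sum_{j,B} B(t)\,y_{j,B} \leq p_t T$ for every $t$. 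The structural lemma underlying \Cref{thm:offline} lets me decompose the expected allocation of any online algorithm into such bundles at a constant-factor loss, producing a feasible $y$ that certifies the LP optimum is within a constant of the optimal online algorithm's value $\mathrm{OPT}^{\mathrm{on}}$. Although the number of configurations is exponential, for each anchor type the best completion is a knapsack-style subproblem, giving an approximately optimal $\hat y$ in polynomial time via approximate separation.

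\textbf{Online rounding.}~Given $\hat y$, the online algorithm maintains, for each buyer, a pool of \emph{open} bundles whose positive anchor has already been allocated but whose negative slots are not yet all filled. When an item of type $t$ arrives, I would sample a role $(j,B,\text{slot})$ with probability proportional to $\hat y_{j,B}\cdot B(t)/(p_t T)$. If the sampled slot is the anchor, the item is allocated to buyer $j$, opening a new bundle of shape $B$; this is always safe because a single item with $v_{ij}\geq\rho_j$ alone satisfies the average-value constraint. If the sampled slot is a negative slot of some configuration $B$, the item is placed in an existing open bundle of shape $B$ for buyer $j$ that still has an empty matching slot, and dropped otherwise. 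Because every open bundle is anchored by a committed positive item, and omitting the still-missing negatives can only \emph{increase} the current per-buyer average, the allocation stays AVA-feasible at every step.

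\textbf{Main obstacle.}~The hard part will be controlling the value lost to (i) anchors that are sampled too late relative to their bundle's negative-slot items, and (ii) open bundles that never collect their full complement of negatives. I would handle both via Chernoff-style concentration on the i.i.d.~type counts, coupled to an idealized Poisson schedule: after scaling $\hat y$ down by a constant, with constant probability each scheduled anchor arrives early enough in the horizon and each scheduled negative slot finds a waiting open bundle of the right shape, yielding a constant-factor approximation of $\hat y$'s value in expectation. A secondary technicality is handling $(j,B)$ pairs with $\hat y_{j,B}$ too small for concentration, which I would absorb with a standard clipping step that sets aside low-mass configurations at only another constant-factor cost.
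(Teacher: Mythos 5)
The central gap is in the LP benchmark, and it is fatal to the rest of the plan. Your configuration LP imposes only a \emph{type-supply} constraint $\sum_{j,B} B(t)\,y_{j,B}\le p_t T$; it has no cap on the per-configuration multiplicity $B(t)$. This lets the LP pack many copies of a rare negative type into a single bundle, paid for by a tiny $y_{j,B}$, and the resulting LP optimum can exceed the value of any online algorithm by a polynomial factor. Consider the unit-$\rho$ instance with $T=n+1$, $n$ buyers $j_1,\dots,j_n$, a single $P$-type $p$ with $q_p T=1$ and value $1+\eps n$ for every buyer, and $N$-types $i_1,\dots,i_n$ with $q_{i_k}T=1$, where $i_k$ has value $1-\eps$ for $j_k$ only. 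Taking $B_k$ to anchor on $p$ and contain $n$ copies of $i_k$, the solution $y_{j_k,B_k}=1/n$ is feasible for your LP (anchor supply and each $i_k$ supply are exactly $1$; each $B_k$ has average exactly $1$) and has objective $\Theta(n)=\Theta(T)$. But any online algorithm opens in expectation only one bundle, for one buyer $j_k$, and then collects $\Theta(1)$ copies of $i_k$, so $\mathrm{OPT}^{\mathrm{on}}=\Theta(1)$. The direction you do argue, $\text{LP}\ge \mathrm{OPT}^{\mathrm{on}}/2$ via the bundling decomposition, is fine but unhelpful; the crux is $\text{ALG}\ge c\cdot\text{LP}$, which no online algorithm can satisfy against your LP on this instance.

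The paper's \eqref{opton-bundle-LP} contains precisely the constraint you are missing: $x_{ijp}\le x_{pjp}\cdot q_i T$ (Constraint~\eqref{opton-cons:bundle-defined-by-p-item}), which caps the expected number of $N$-copies of type $i$ placed into bundles of type $jp$ by $q_i T$ per bundle opened --- exactly what a \emph{committed} online algorithm can hope to add after the anchor arrives. In your configuration language this corresponds to requiring $B(i)\le q_i T$ for each admissible configuration. The paper explicitly flags this as the ``online'' ingredient, analogous to LP constraints used to bound optimal online algorithms in the secretary and prophet-inequality literature, and it is what makes the LP simultaneously an upper bound on (twice) the best committed online algorithm and achievable to within a constant by the rounding scheme.

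Two further, more minor, comparisons. The paper sidesteps the exponential configuration space entirely by writing the LP directly over triples $(i,j,p)$, so no separation oracle or knapsack-approximation step is needed. And the timing issue you identify as the ``main obstacle'' (negatives arriving before their anchors, or anchors never finding negatives) is resolved in the paper not by Poisson coupling and Chernoff concentration but by a cleaner two-phase scheme: arrivals in $[1,T/2]$ are used only to open bundles via $P$-edges, arrivals in $(T/2,T]$ only to fill them with $N$-edges, losing a factor $2$ and then reusing essentially the same coupling analysis as the offline rounding. Your Poisson/clipping plan might be made to work once the LP is repaired, but as written the benchmark itself is unsound.
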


To approximate the optimum online algorithm, we provide an LP
capturing a constraint only applicable to online algorithms, inspired
by such constraints from the secretary problem and prophet inequality
literatures \cite{buchbinder2014secretary,papadimitriou2021online}.
We then provide a two-phase online algorithm achieving a constant
approximation of this LP, analyzed via a coupling with an imaginary
algorithm that may violate \AVA constraints and allocate items to
several buyers. 

We then turn our attention to approximating the ex-post optimum
(a.k.a., getting a competitive ratio for the observed sequence).
In contrast, we show that when comparing with the ex-post optimum, no
such constant approximation ratio is possible, but we give matching
upper and lower bounds. (Due to lack of space, this is deferred to \Cref{sec:OPToff}.)
\begin{theorem}[Online \AVA: Ex-post Guarantees (Informal)]
  There exist families of online i.i.d.\ \AVA instances with $T$ arrivals on which any online algorithm is
  $\Omega\big(\frac{\ln T}{\ln \ln T}\big)$-competitive.
  In contrast, there exists an online algorithm matching this bound asymptotically (on all instances).
\end{theorem}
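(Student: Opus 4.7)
My plan is to prove the two directions separately, both via distributional arguments. Since the theorem compares an online algorithm with the ex-post optimum on a stochastic arrival stream, the lower bound will apply Yao's principle to a specific i.i.d.\ distribution on AVA instances, while the upper bound will exhibit a single randomized online algorithm that matches it on every input distribution.

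For the lower bound, I would construct a single-buyer hard instance with threshold $\rho_j = 1$, in which each of the $T$ items is i.i.d.\ drawn from a distribution supported on $K = \Theta(\log T/\log\log T)$ geometrically-spaced ``jackpot'' levels: with probability $p_k$, a high-value item of value $V_k \gg 1$ arrives, and otherwise a low-value item of value $v < \rho_j$ arrives. I would calibrate the pairs $(p_k, V_k)$ so that (i) every level $k$ contributes $\Omega(1/K)$ of the expected ex-post optimum, and (ii) the contributions of distinct levels are ``incomparable'' from an online viewpoint, in the sense that before seeing a jackpot of level $k$, the algorithm cannot distinguish the realized tail. The ex-post optimum then pairs each jackpot with many low-value items while still satisfying \eqref{eq:average}, attaining welfare $\Theta(T)$ in expectation. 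Any online algorithm, however, must commit low-value allocations, which is a one-way action because AVA is not subset-closed, before learning which level will actually dominate; a coin-collector / potential-function argument then shows that it can profit from at most one level of its choice, losing a factor of $K$.

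For the matching upper bound, I would randomly guess the approximate value of the ex-post optimum, $\Lambda \in \{2^0, 2^1, \ldots, 2^{O(\log T)}\}$, and then run a simple thresholding subroutine: allocate an incoming item $i$ to a buyer $j$ iff its value exceeds a threshold tuned to $\Lambda$ and the resulting running surplus $\sum_{i'\le t}(v_{i'j}-\rho_j)\,x_{i'j}$ stays non-negative. Conditioned on the guess being within a constant factor of the true ex-post optimum---which I would arrange to happen with probability $\Omega(1/K)$ after grouping the possible scales appropriately---the subroutine is a constant-factor approximation of that optimum, yielding the overall $O(\log T/\log\log T)$-competitive ratio in expectation. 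The grouping step is what closes the gap from $\log T$ down to $\log T/\log\log T$: by bucketing nearby scales and running a coordinated policy per bucket, the effective number of ``distinct'' guesses drops to $\Theta(\log T/\log\log T)$.

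The main obstacle is the lower bound. The subtlety is that adaptive observation of item values does give the online algorithm partial information about the realized distribution tail, so the construction must ensure that even conditional on the history, enough candidate levels remain plausible until it is too late to react, so that any premature commitment either violates \eqref{eq:average} or is wasted in expectation. I anticipate this requires choosing the $p_k$ close enough on a logarithmic scale that no polylogarithmically-long history can disambiguate them, and then integrating over the algorithm's random coin tosses via Yao to conclude that every online policy loses an $\Omega(K)$ factor on at least one scale.
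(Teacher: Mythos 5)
Your lower bound construction is fundamentally different from the paper's, and I believe it does not work. The paper's hard instance (\Cref{lem:online-iid-hard}) has $T$ \emph{buyers}, a single \pitem type with value for \emph{all} buyers, and $T-1$ \nitem types each with value for a \emph{distinct} buyer; items are drawn uniformly so each type appears once in expectation. The key mechanism is that when the lone \pitem arrives, the online algorithm must irrevocably choose which buyer (and hence which \nitem type) to pair it with, \emph{before} it can learn which \nitem type will be over-represented; by balls-and-bins anti-concentration some type will have $\Theta(\log T/\log\log T)$ copies, which the ex-post optimum can exploit but the online algorithm cannot. Your single-buyer ``jackpot levels'' instance lacks this forced-choice structure: with one buyer and $\rho=1$, an online algorithm may allocate \emph{every} jackpot it sees and then greedily fill with low-value items while the running average stays $\geq 1$, so there is no ``profit from at most one level'' bottleneck. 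Your claimed $\Omega(K)$ loss would require the algorithm to choose among levels, but nothing in the single-buyer AVA constraint forces such a choice. The only loss in the single-buyer setting comes from timing (low-value items arriving before jackpots cannot be allocated), which is a constant-factor effect, not $\Omega(\log T/\log\log T)$.

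Your upper bound is also a genuinely different route, but under-specified in a way that likely hides the real difficulty. The paper's matching algorithm does not guess the ex-post optimum's scale; instead, it writes an LP \eqref{optoff-bundle-LP} whose key constraint caps $x_{ijp}/x_{pjp}$ by $\lceil q_i T \kappa\rceil$ with $\kappa = \Theta(\log T/\log\log T)$, shows via \Cref{Ai-concentrated} that $\E[\mathsf{OPT}] = O(V[\mathsf{OFF}])$, shows $V[\mathsf{OFF}] = O(\kappa)\cdot V[\mathsf{ON}]$, and then runs the same \Cref{alg:online-rounding} that constant-approximates $V[\mathsf{ON}]$. The loss factor $\kappa$ thus arises from anti-concentration of arrival counts, not from uncertainty about the optimum's magnitude. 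A value-threshold-plus-guess scheme does not obviously address the actual difficulty, which is \emph{which bundle to open for the \pitem} given many buyers and over-representation of some unknown \nitem type; and the paper's result requires the technical condition $q_i T \geq \Gamma = \Theta(1)$ for all types, which your proposal does not mention. Finally, your bucketing step from $O(\log T)$ to $O(\log T/\log\log T)$ is asserted but not argued; the paper achieves the sharp exponent directly from the Chernoff calculation in \Cref{Ai-concentrated}, not from any coarsening of guesses.
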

The lower bound is proved by giving an example using a balls-and-bins
process (and its anti-concentration). Then we formulate an LP
capturing this kind of anti-concentration, using which we match the
lower bound, under some mild technical conditions (see
\Cref{sec:OPToff} for details).

\subsubsection{Generalizations}
\label{sec:generalizations}

There are many interesting generalizations of the basic problem.  For
example, there might exist ``budgets'' which limit the number of items
any buyer can receive; or more generally we may have costs on items
which must sum to at most the buyer's budgets. These costs could be
different for different buyers, and in different units than those
captured by Constraint \eqref{eq:average}. These constraints are the
natural ones considered in packing problems; in general, we can
consider the \AVA constraint as being a non-packing constraint on the
allocation that can supplemented with other conventional packing
constraints.
As we show in \S\ref{sec:side-constraints}, our relax-and-round
algorithm extends seamlessly to accommodate such side constraints,
provided any individual item has small cost compared to the relevant
budgets.

Another natural generalization is \emph{return-on-spend} (RoS)
constraints, which have been central to much recent work on
advertisement allocation (see
\cite{googleautobiddingsupport,fbautobiddingsupport}) and
\S\ref{sec:related-work}).  We call the problem \emph{generalized
  \AVA} (\GAVA) and define it as follows: the objective is to maximize social
welfare, but now the average value is measured in a more general
way. Indeed, the allocation of item $i$ to buyer $j$ can incur a
different ``cost'' $c_{ij}$, and the average-value constraint becomes
the following ROS constraint:
\begin{gather}
 \forall j,\;\;\;\; \sum_{i} v_{ij} \; x_{ij} \geq \rho_j \cdot \bigg(
  \sum_i \alert{\mathbf{c_{ij}}}\; x_{ij}\bigg). \label{eq:ros}
\end{gather}
In contrast to \AVA, we show that allowing general costs $c_{ij}$ in
the generalized \AVA problem in~(\ref{eq:ros}) makes it as hard as one
of the hardest combinatorial problems---computing a maximum clique in
a graph. In particular, we show that it is NP-hard to
$n^{1-\eps}$-approximate \GAVA with $n$ buyers, for any constant
$\eps>0$.  In \Cref{sec:hardness} we show that similar hardness
persists even for stochastically generated inputs, and the problem
remains hard even if we allow for bicriteria approximation.

\subsection{Related Work}\label{sec:related-work}

Resource allocation is one of the most widely-studied topics in theoretical computer science. Here we briefly discuss some relevant lines of work.

\vspace{-0.2cm}
\paragraph{Packing/Covering Allocation Problems.}
The \emph{budgeted allocation problem} or \textsc{AdWords} of
\cite{mehta2007adwords} is NP-hard to approximate
within some constant \cite{chakrabarty2010approximability}, and constant
approximations are known even online~\cite{mehta2007adwords,buchbinder2007online,huang2020adwords}. The \emph{generalized assignment problem (GAP)}
\cite{feldman} and its extension, the \emph{separable assignment problem}, have constant approximations in both
offline~\cite{fleischer2011tight,calinescu2011maximizing} and (stochatic) online
settings \cite{KRTV18}. In both cases, arbitrarily-good approximations are 
impossible under adversarial online arrivals, even
under structural assumptions allowing for
an offline PTAS (e.g., ``small'' bids) \cite{mehta2007adwords}. However, assuming both small bids
and \emph{random-order} (or \emph{i.i.d.}) arrivals allows us to achieve
$(1-\eps)$-competitiveness~\cite{DevanurHayes09, Devanur11, KRTV18,
  GM16, agrawal2014fast}. Some such allocation problems are also
considered with concave or convex utilities~\cite{DevanurJ12,azar2016online}.
As noted above, many results and techniques for (offline and online)
packing and covering constraints are not applicable to our problem,
which is neither a packing nor covering problem in the conventional sense.

\vspace{-0.2cm}
\paragraph{RoS constraints in online advertising.} Return-on-spend
constraints as defined in~(\ref{eq:ros}) have received much attention
in recent years in the context of online advertising.  Several popular
autobidding products allow advertisers to provide campaign-level RoS
constraints with a goal to maximize their volume or value of
conversions (sales)
\cite{googleautobiddingsupport,fbautobiddingsupport}). 
Fittingly, there has been much
interest in understanding the RoS setting along various directions,
including optimal bidding \cite{aggarwal2019autobidding},
mechanism design  \cite{BalseiroDMMZ21Neurips,GolrezaiLP21},
and on welfare properties at equilibrium
\cite{aggarwal2019autobidding,DengMMZ21,Mehta22}.
In these results, distributed bidding based algorithms are shown to
achieve a constant fraction of the optimal welfare.  
However, note that the per-item costs in the autobidding setting are \emph{endogenous} (set via auction dynamics) whereas in our allocation problem there is no pricing mechanism and the costs are \emph{exogenous}. 
Our results about the hardness of the generalized \AVA show that under exogenous prices, such allocation problems do not admit constant (or even sublinear) approximation guarantees.

\vspace{-0.2cm}
\paragraph{Approximating the optimum online algorithm.} 
Our online i.i.d.~results relate to a recent burgeoning line of work on  approximation of the optimum online algorithm in stochastic settings via restricted online algorithms.
This includes restriction to polynomial-time algorithms (as in our case) in discrete settings 
\cite{niazadeh2018prophet,anari2019nearly,papadimitriou2021online,braverman2022max,dutting2023prophet,braun2024approximating,braverman2025new,naor2025online} and stationary settings \cite{aouad2020dynamic,kessel2022stationary,patel2024combinatorial,amanihamedani2025adaptive},
fair algorithms \cite{arsenis2022individual}, 
order-unaware algorithms
\cite{ezra2023next,sun2025online}, inflexible algorithms \cite{arnosti2022tight,perez2025iid}, and more.
These works drive home the message that approximating the optimum online algorithm using restricted algorithms is hard, 
but can often lead to better approximation than possible when comparing to the  (unattainable) benchmark of the ex-post optimum. 
We echo this message, showing that for our problem under i.i.d.~arrivals, a constant-approximation of the optimum online algorithm (using polytime algorithms) is possible, 
but is impossible when comparing to the optimum offline solution.


\subsection{Problem Formulation}\label{sec:prelims}

In the \emph{average-value-constrainted allocation problem (\AVA)}, allocating
item $i$ to buyer $j$ yields a value of $v_{ij}$.
Each buyer $j$ requires that the average value they obtain from allocated items
be at least $\rho_j$.
We wish to (approximately) maximize the total social welfare, or sum of values obtained by the buyers, captured by the following integer~LP:
\begin{align}
    \max & \sum_{(i,j)\in E}  v_{ij} \; x_{ij} \tag{\AVA-ILP} \label{AVA-ILP}
    \\ \textrm{s.t. } & \sum_i v_{ij} \; x_{ij}\geq \rho_j \cdot \sum_i x_{ij} \quad \qquad \forall \text{ buyer $j$} \nonumber \\
    &\sum_j x_{ij} \leq 1 \qquad \qquad \qquad \qquad\;\,\forall \text{ item $i$} \nonumber \\
     & x_{ij} \in \{0,1\} \qquad\qquad \qquad\qquad\,\; \forall \text{ items $i$, buyers $j$}. \nonumber
\end{align}
An instance $\calI$ of \AVA can be captured by a 
bipartite graph $(I,J,E)$, with a set $I$ of items and set $J$
of buyers, and edges $E \sse I\times J$, capturing all
buyer-item pairs with non-zero value. 
For $i \in I$ and $j \in J$,
edge $(i,j)$ has value $v_{ij}$.
We say edge $(i,j)$ is a \emph{\pedge} (positive edge) if it has
non-negative \emph{excess} $v_{ij} - \rho_{j} \geq 0$, and an
\emph{\nedge} otherwise, in which case we refer to $v_{ij}-\rho_{j}<0$
as its \emph{deficit}.  An item $i$ is a \emph{\pitem} if \emph{all}
its edges in $E$ are \pedges, and an \emph{\nitem} if \emph{all} its
edges in $E$ are \nedges: naturally, some items may be neither \pitems
or \nitems. We will call an instance \emph{unit-$\rho$} if $\rho_j=1$ for all buyers.\footnote{Such instances capture the core difficulty of the \AVA problem, and our examples (except those for \GAVA in Section~\ref{sec:hardness}) are unit-$\rho$ instances, so one can WLOG take $\rho_j=1$ in the first read.}

In the online setting, the $n$ buyers and their $\rho_j$ values are
known a priori, but items $i$ are revealed one at a time, together
with their value $v_{ij}$ for each buyer $j$, and an algorithm must
decide what buyer to allocate an item to (if any), immediately and
irrevocably on arrival. In the online i.i.d.~setting, $T$ items are
drawn (one after another) i.i.d.~from a known distribution over $m$
known item \emph{types}, with type $i$ (with value vector $(v_{i1},\dots,v_{in})$) drawn with probability
$q_i$. We say an edge type $(i,j)$ is an \emph{\nedge type} or a
\emph{\pedge type} if $v_{ij}-\rho_j<0$ or $v_{ij}-\rho_j \geq 0$,
respectively.




\subsection{Paper Outline}
We begin in \S\ref{sec:structure} by proving some structural lemmas
regarding \AVA, including an unintuitive non-linear dependence of the
welfare on the amount of supply. In
\S\ref{sec:matroid-GAP} we present the improved algorithm for the
offline setting giving \Cref{thm:offline}.  In \S\ref{sec:rounding-offline} we
present our LP-rounding algorithm for \AVA in an offline setting. We also discuss the approach's extendability, allowing to incorporate additional constraints, in \S\ref{sec:side-constraints}.
Building on this offline rounding-based algorithm, in \S\ref{sec:OPTon} we present a
constant-approximation of the optimum online algorithm. In the interest of space, we defer the discussion of competitive ratio bounds to \Cref{sec:OPToff}, and our hardness results to \Cref{sec:hardness}.


\section{The Structure of Near-optimal Solutions for \AVA}\label{sec:structure}

In this section, we show how to partition any feasible allocation of 
 \AVA instances (where we recall such allocation need not allocate all items) into structured subsets, which we call \emph{permissible bundles}.
This bundling-based structure will prove useful for all of our algorithms.

\begin{definition}[Bundling]
  A set $S$ of edges incident on buyer $j$ is a
  \emph{permissible bundle}~if
  \begin{enumerate}[nolistsep]
  \item $S$ consists of a single \pedge $(i^\star,j)$ and zero or
    more \nedges $(i,j)$, and 
  \item the edges in $S$ satisfy the average-value constraint, i.e., $\sum_{(i,j)\in S}
    v_{ij}\geq \rho_j\cdot|S|$. 
  \end{enumerate}
  A \emph{bundling-based} solution is one that
can be partitioned into a collection of permissible bundles.
  \label{def:permissible_bundle}
\end{definition}

Clearly, no bundling-based solution can be better than the best unconstrained solution, but in the following lemma we show a converse, up to constant factors. (Throughout, we use the shorthand notation $v \cdot x := \sum_{ij} v_{ij}x_{ij}$ for any vector $x\in \mathbb{R}^E$.)

\begin{lemma}[Good Bundling-Based Solution]
    \label{lem:bundling}
    Let $x^*$ be a solution to an instance of \AVA. Then, there exists a bundling-based solution
    $\widehat{x}$ of value at least $v\cdot \widehat{x}\geq \frac{1}{2}\; v\cdot x^*$.
\end{lemma}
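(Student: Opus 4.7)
The plan is to prove the lemma one buyer at a time. Since $x^*$ is a feasible allocation, the item sets $X_j := \{i : x^*_{ij}=1\}$ are pairwise disjoint across buyers $j$, so it suffices to construct, for each $j$, a bundling-based subset $\widehat{X}_j \subseteq X_j$ whose value $\sum_{i\in\widehat{X}_j} v_{ij}$ is at least half of $\sum_{i\in X_j} v_{ij}$; the indicator vector of $\bigcup_j \widehat{X}_j$ is then the desired $\widehat{x}$.

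Fix a buyer $j$ and split $X_j$ into its \pedges $P_j$ (with excess $e_p := v_{pj}-\rho_j \geq 0$) and \nedges $N_j$ (with deficit $d_n := \rho_j - v_{nj} > 0$). The \AVA constraint at $j$ rewrites as $E := \sum_{p\in P_j} e_p \geq \sum_{n\in N_j} d_n =: D$. The construction I propose is a natural greedy bin-packing: initialize $m := |P_j|$ singleton bundles, one per \pedge (each trivially permissible), then sort the \nedges in increasing order of deficit and process them in that order, placing each \nedge into any bundle whose residual capacity still admits it, and discarding it otherwise. Let $N_{\mathrm{pack}}$ and $N_{\mathrm{drop}}$ be the placed and discarded \nedges; the output $\widehat{X}_j := P_j \cup N_{\mathrm{pack}}$ is bundling-based by construction.

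The heart of the argument, and the step I expect to be the main obstacle, is the structural bound $s := |N_{\mathrm{drop}}| < m$. To prove it, let $n_\bullet$ be the first \nedge discarded and $d_\bullet := d_{n_\bullet}$. At the moment $n_\bullet$ is processed, every bundle has residual capacity strictly less than $d_\bullet$, so the total residual is below $m d_\bullet$ and hence the deficit already packed satisfies $D_{\mathrm{pack}} > E - m d_\bullet$. Because \nedges are processed in increasing order of deficit, no later \nedge fits anywhere either, so $N_{\mathrm{pack}}$ is exactly what was placed before $n_\bullet$ and every edge in $N_{\mathrm{drop}}$ has deficit at least $d_\bullet$, giving $D_{\mathrm{drop}} \geq s d_\bullet$. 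Combined with $D_{\mathrm{drop}} = D - D_{\mathrm{pack}} < D - (E - m d_\bullet) \leq m d_\bullet$ (using $D \leq E$), dividing by $d_\bullet > 0$ yields $s < m$.

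Once $s < m$ is in hand the proof concludes in a line: $\sum_{i\in\widehat{X}_j} v_{ij} \geq \sum_{p\in P_j} v_{pj} = m\rho_j + E \geq m\rho_j$, while the dropped value $\sum_{n\in N_{\mathrm{drop}}} v_{nj} = s\rho_j - D_{\mathrm{drop}} \leq s\rho_j < m\rho_j$ is strictly smaller. Hence the kept value exceeds the dropped value, so $\sum_{i\in\widehat{X}_j} v_{ij} \geq \tfrac{1}{2}\sum_{i\in X_j} v_{ij}$, and summing over buyers gives the lemma. (The edge case $\rho_j = 0$ is trivial since then $N_j = \emptyset$ and $\widehat{X}_j = X_j$.)
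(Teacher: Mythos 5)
Your proof is correct, and it takes a genuinely different route from the paper's. The paper actually proves the strictly stronger \Cref{lem:online-bundling}, which constructs the bundling-based sub-solution \emph{online} (processing each buyer's allocated edges in arrival order, opening a new bundle per \pedge, and ``closing'' one open bundle whenever an \nedge fails to fit anywhere); the key step there is a charging argument in which each bundle is closed by at most one discarded \nedge, whose value is then charged to that bundle's \pedge using $v_{pj} \geq \rho_j \geq v_{ij}$. Your argument is instead a clean offline greedy: you sort \nedges by increasing deficit, first-fit them into the $m$ singleton \pedge{} bundles, and prove the aggregate counting bound $|N_{\mathrm{drop}}| < m$ via the total-residual estimate at the first discard, then conclude with the uniform bounds ``each kept \pedge{} is worth $\geq \rho_j$, each dropped \nedge{} is worth $< \rho_j$.'' Both close with essentially the same $\rho_j$-comparison, but your bound is aggregate rather than per-edge, and the sorting step means your construction is not an online algorithm; the paper's version buys the committed-online guarantee that is needed later (\Cref{lem:opton-LP}), while yours is arguably the more transparent proof of the offline \Cref{lem:bundling} as stated. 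One small caveat worth stating explicitly: the step $\sum_{i\in\widehat{X}_j} v_{ij} \geq \sum_{p\in P_j} v_{pj}$ uses $v_{ij}\geq 0$ for packed \nedges; this is harmless under the paper's implicit nonnegativity of values, and you could avoid it entirely by instead bounding each kept bundle's value by $\rho_j$ via its own permissibility.
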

As a corollary, the best bundling-based solution is a $2$-approximation, and so we will strive to approximate such bundling-based solutions.

We prove a strengthening of \Cref{lem:bundling} which also addresses online settings. 

\begin{definition}[Committed Bundling]
An online algorithm is a \emph{committed} bundling-based algorithm if its solution consists of
permissible bundles, and items can only be added to bundles; in particular, it \emph{commits}
to the allocation of each item to a particular bundle, and does not move items between permissible bundles.
\end{definition}

\begin{lemma}[Online Bundling-Based Solution]\label{lem:online-bundling}
Let $x^*$ be a solution to an instance of \AVA, with $x^*$ revealed online and (all interim partial solutions) satisfying the average-value constraints throughout. Then there exists a solution $\widehat{x}$ that is the
    output of a committed online bundling-based algorithm, of value at least  $v\cdot \widehat{x}\geq \frac{1}{2}\; v\cdot x^*$.
\end{lemma}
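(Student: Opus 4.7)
The plan is to mirror the offline bundling argument of Lemma~\ref{lem:bundling} but in a committed online fashion, analyzing each buyer $j$ independently. The key enabling observation is that because $x^*$'s interim partial solutions satisfy the average-value constraint throughout, for each buyer $j$ the running excess $E^j_t := \sum_{s \le t:\, x^*_{i_s j}=1}(v_{i_s j}-\rho_j)$ is non-negative at every step $t$; in particular, the first edge that $x^*$ commits to any buyer must be a \pedge.

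For each buyer $j$ I would run the following committed online algorithm. Maintain a collection of open bundles for $j$, each anchored by exactly one \pedge and kept permissible (non-negative total excess) throughout. When $x^*$ assigns a new \pedge to $j$, open a fresh bundle seeded by that edge. When $x^*$ assigns a new \nedge to $j$ with deficit $d$, commit it to any open bundle whose current remaining excess is at least $d$ (for concreteness, the bundle with the largest remaining excess) if one exists; otherwise commit it to a ``dead'' bundle whose items will not appear in $\widehat x$. By construction every open bundle stays permissible, so $v\cdot\widehat x$ is the sum across buyers of their total \pedge value plus the value of the \nedges successfully placed.

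To lower-bound $v\cdot\widehat x$ by $\tfrac12\,v\cdot x^*$ buyer-by-buyer, I would split into two cases. If the total \pedge value assigned to $j$ by $x^*$ is already at least half of $j$'s total, the strategy wins outright since every open bundle -- even a singleton \pedge -- is permissible and counted. Otherwise \nedges dominate, and I would show that most of their value is absorbed. The invariant $E^j_t \ge 0$ is crucial: at every time, the aggregate remaining excess across $j$'s open bundles equals $E^j_t$ plus the cumulative deficit already routed to dead bundles, which drives a charging/amortization argument bounding the dead mass by half the total value. A cleaner alternative I would try in parallel is a randomized committed strategy that, for each buyer independently, flips a fair coin and runs either the greedy rule above or a ``\pedges-only'' fallback (each \pedge in its own singleton bundle); if one shows that the expected captured value is at least $\tfrac12\,v\cdot x^*$, a deterministic committed algorithm achieving the bound then follows by averaging.

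The main obstacle is controlling the dead-\nedge value in the \nedge-heavy case: online placement can ``waste'' a high-excess bundle on a small \nedge arriving early, leaving a later large \nedge with no feasible open bundle. The charging argument must therefore track not just cumulative deficits but the entire distribution of remaining excesses across open bundles at each discard event, and must show that the dead-deficit mass is dominated (up to a factor of~$2$) by the absorbed mass plus the residual \pedge excess. If the straightforward largest-remaining-excess rule loses too much in this accounting, I would refine the placement rule (e.g., a threshold-based or randomized assignment across open bundles) until the amortization goes through.
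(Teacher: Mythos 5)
You correctly identify the setup: process items buyer-by-buyer in arrival order, open a fresh bundle on each \pedge, try to place each \nedge into an open bundle, and use the non-negativity of the running excess $E^j_t\ge 0$ (which in particular forces the first edge to a buyer to be a \pedge). But the core of your proof is left open: you pose ``controlling the dead-\nedge value'' as the main obstacle and sketch an amortization over the distribution of remaining excesses across open bundles (or a randomized two-strategy fallback), neither of which is carried through. This is exactly where the paper has a clean trick that you have not found.

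The missing idea is to \emph{close one open bundle whenever a \nedge is discarded}. Concretely: when a new \nedge $(i,j)$ fits in some open bundle, place it there; if it fits in none, discard it \emph{and} permanently close one arbitrary open bundle (interim feasibility of $x^*$ guarantees an open bundle always exists at that moment). This pairing means each bundle is closed by at most one discarded \nedge, so discarded \nedges inject into the set of bundles at most once each. Now the bound is immediate and does not require tracking any excess distribution: every discarded \nedge satisfies $v_{ij}<\rho_j$, and the \pedge $(p,j)$ anchoring the bundle it closed satisfies $v_{pj}\ge\rho_j$, so $v\cdot x^*_D\le v\cdot x^*_p$. Then $v\cdot x^* = v\cdot x^*_D + v\cdot(x^*-x^*_D) \le 2\,v\cdot x^*_p + v\cdot x^*_n \le 2\,v\cdot(x^*_p + x^*_n) = 2\,v\cdot\widehat x$. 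Without the ``close a bundle on discard'' rule, your largest-remaining-excess placement rule does not obviously bound the number (or value) of discards, and you would indeed be stuck with the hard amortization you describe; the one-discard-one-closure pairing is what replaces it entirely.
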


\begin{proof}
  For each buyer $j$, consider the edges
  $S := \{ (i,j) \mid x^*_{ij} = 1\}$ corresponding to items assigned
  to buyer $j$ in solution $x^*$, in order of addition to the solution $x^*$, namely $e_1,e_2,\dots,e_{|S|}$, with $e_k=(i_k,j)$.
  We now show how a committed online algorithm can output a collection 
  of permissible bundles of at least half the value from among the
  edges in $S$; doing this for each buyer proves the result. 

  Consider $i_k$, i.e., the $k$-th item allocated to $j$ by $x^*$, if $e_k$ is a \pedge (i.e. $v_{i_k,j}\geq \rho_j$), we denote $p=i_k$, open (create) a bundle $B_p =\{(j,p)\}$ and allocate appropriately in the new solution $\widehat{x}$.
  When $e_k=(i_k,j)$ is an \nedge, if $e_k$ can be added to some open bundle $B_p$ of $j$ while keeping it permissible, we add $(i_k,j)$ to $B_p$ in solution $\widehat{x}$; otherwise, we pick some open bundle $B_p$ of $j$ and mark it as closed (and never add more edges to this bundle). Since $x^*$ is feasible throughout the online arrival, for any $k\in[1,|S|]$ we have that $\sum_{\ell \leq k} v_{i_\ell, j} \geq k\cdot \rho_j$, and since we allocate all \pedges of $x^*$ in $\widehat{x}$ and only allocate a subset of the \nedges, we find that there must always be some open bundle of $j$ when considering an \nedge $e_k$.
  Therefore, the above (committed) bundling-based online algorithm is well-defined.
  Now, each bundle is closed by at most one \nedge $(i,j)$, and so we can charge the \nedges $(i,j)$ allocated in $x^*$ but not in $\widehat{x}$ to the \pedge $(p,j)$ in the bundle $B_p$ that they closed.
  But by definition of the \pedge and \nedge, we know $v_{pj}\geq \rho_{j} \geq v_{ij}$. 
  Therefore, denoting by $x^*_D$ the part of the solution $x^*$ that is \emph{discarded} in $\widehat{x}$ and by
  $x^*_p$ and $x^*_n$ the value of the \pedges and \nedges allocated by both $x^*$ and the new solution $\widehat{x}$, we have that 
  $v\cdot x^*_D \leq v\cdot x^*_p$. 
  Hence, 
  \begin{align}\label{key-bundling}
  v\cdot x^* & = v\cdot x^*_D + v\cdot (x^*-x^*_D) \leq 2\,v\cdot x^*_p + x^*_n \leq 2\,v\cdot (x^*_p + x^*_n).
  \end{align}
    That is, the obtained bundles of the solution
   $\widehat{x}=x^*_p+x^*_n$ constitute a $2$-approximation. 
\end{proof}

\begin{remark}\label{rem:bundling-2-tight}
This loss of a factor of two in the value is tight. To see this, consider a single-buyer unit-$\rho$ \AVA instance. There are $\frac{1}{\eps}$ \nedges each with value $1-\eps$ and $\frac{1}{\eps(1-\eps)}$ \pedges each with value $1+\eps(1-\eps)$. It is feasible to allocate all items to the buyer, and (arbitrarily close to) half the value of this solution is given by \nedges, but any permissible bundle contains no \nedges as any single \pedge doesn't have enough excess to cover the deficit of any \nedge.
\end{remark}

For our algorithms it will be convenient if each item is incident only on \pedges, or only on \nedges, thus removing the ambiguity about whether to use these as the single \pedge in a permissible bundle. 
Fittingly, we call such instances \emph{unambiguous}.
For example, when all buyers have the same average-value constraint (i.e. $\forall j: \rho_j=\rho$), for any item $i$ incident on a \pedge (i.e., $\exists j: v_{ij}\geq \rho$), we can trivially drop all \nedges of the item (i.e., drop $(i,j')$ where $v_{ij'}<\rho$) since there is no reason to allocate any \nedge instead of a \pedge of $i$,  and so making such instances unambiguous comes with no cost.
As we now show, any instance of \AVA in general can be made unambiguous while still preserving a bundling-based allocation that is constant-approximate for the original instance.

\begin{lemma}[Bundling Unambiguous Sub-Instances]
  \label{only-p-or-n}
  Given an \AVA instance $\calI = (I,J,E)$, dropping all of the \pedges or all the \nedges of each item $i\in I$ independently with probability $\nicefrac{1}{2}$ results in an unambiguous sub-instance
  $\calI' = (I,J,E')$ (where $E'\sse E$), admitting a bundling-based solution $x'$ which is $4$-approximate for $\calI$. 
\end{lemma}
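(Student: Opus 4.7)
The plan is to invoke the bundling structure of Lemma~\ref{lem:online-bundling} to reduce to a near-optimal bundling-based solution $\widehat{x}$, and then take a natural \emph{sub-bundling} of $\widehat{x}$ in the restricted instance. The key observation is that a sharper inequality than the stated $\tfrac{1}{2}$-guarantee is already hidden in the proof of Lemma~\ref{lem:online-bundling}.

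Let $x^*$ be optimal for $\calI$, and let $\widehat{x}$ be the bundling-based solution produced from $x^*$ by Lemma~\ref{lem:online-bundling}. Let $x^*_p$ and $x^*_n$ denote the \pedges and \nedges of $\widehat{x}$ (as in that proof), and write $V_P := v\cdot x^*_p$ and $V_N := v\cdot x^*_n$. The chain of inequalities in~\eqref{key-bundling} in fact establishes the sharper bound
\[
v\cdot x^* \;\leq\; 2V_P + V_N,
\]
rather than only $v\cdot x^*\leq 2(V_P+V_N)$, since every discarded \nedge is charged to a \emph{distinct} \pedge of $\widehat{x}$ of no smaller value.

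In the restricted instance $\calI'$, we construct $x'$ bundle-by-bundle. For each bundle $B=\{(i^\star,j)\}\cup\{(i_1,j),\dots,(i_k,j)\}$ of $\widehat{x}$, with \pedge $(i^\star,j)$: if item $i^\star$ retained its \pedges in $\calI'$, include $(i^\star,j)$ together with each $(i_l,j)$ whose item $i_l$ retained its \nedges; otherwise include nothing from $B$. The resulting sub-bundle remains permissible, since dropping \nedges from a permissible bundle only strengthens the average-value constraint, and since items in distinct bundles of $\widehat{x}$ are disjoint, $x'$ is a valid bundling-based solution of~$\calI'$. Because the events ``$i^\star$ retained its \pedges'' and ``$i_l$ retained its \nedges'' involve distinct items, they are independent, each occurring with probability $\tfrac{1}{2}$. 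Hence the expected contribution of $B$ to $v\cdot x'$ is $\tfrac{1}{2} v_{i^\star\!,j} + \tfrac{1}{4} \sum_l v_{i_l,j} = \tfrac{1}{2} V_P^B + \tfrac{1}{4} V_N^B$, and summing over all bundles yields
\[
\EE[v\cdot x'] \;\geq\; \tfrac{1}{2} V_P + \tfrac{1}{4} V_N \;=\; \tfrac{1}{4}(2V_P+V_N) \;\geq\; \tfrac{1}{4}\, v\cdot x^*.
\]

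The main obstacle is resisting the temptation to apply Lemma~\ref{lem:online-bundling} as a black box: combining its $\tfrac{1}{2}$-guarantee with the weaker per-bundle bound $\tfrac{1}{4}(V_P+V_N)$ would only yield an $8$-approximation. The win comes from noting that the sub-bundling naturally weights a surviving \pedge (factor $\tfrac{1}{2}$) exactly twice as heavily as a surviving \nedge (factor $\tfrac{1}{4}$), and that this asymmetric weighting aligns precisely with the sharper bundling inequality $v\cdot x^*\leq 2V_P+V_N$, producing the claimed factor-of-$4$ guarantee.
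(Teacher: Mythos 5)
Your proof is correct and follows essentially the same route as the paper's: both extract $\widehat{x} = x^*_p + x^*_n$ from the proof of Lemma~\ref{lem:online-bundling}, use the penultimate (and sharper) inequality $v\cdot x^* \leq 2\,v\cdot x^*_p + v\cdot x^*_n$ from \eqref{key-bundling}, construct $x'$ by keeping a surviving \pedge with probability $\nicefrac12$ and each \nedge in its bundle with probability $\nicefrac14$ (requiring both its own item and the bundle's \pitem to survive), and conclude $\E[v\cdot x'] = \tfrac12 v\cdot x^*_p + \tfrac14 v\cdot x^*_n \geq \tfrac14 v\cdot x^*$. Your explicit remark about why the black-box $\tfrac12$-guarantee of Lemma~\ref{lem:online-bundling} would only give $8$ is a helpful clarification of the same observation the paper makes implicitly by invoking the penultimate inequality.
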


\begin{proof}
    Let $x^*$ be an optimal solution for $\calI$. If we denote by $x^*_p$ and $x^*_n$ the characteristic vector for \pedges and \nedges allocated by both $x^*$ and $\widehat{x}=x^*_p + x^*_n$ as in the proof of \Cref{lem:online-bundling}, then, by the penultimate inequality of \Cref{key-bundling}, we have that $v\cdot x^* \leq 2\,v\cdot x^*_p + v\cdot x^*_n$.
    Now, consider the solution $x'$ consisting of all \pedges allocated in $\widehat{x}$ that were not dropped and all non-dropped \nedges allocated in bundle $S$ whose \pedge was also not dropped. We therefore have that this new solution has value precisely $\frac{1}{2}\,v\cdot x^*_p + \frac{1}{4}\,v\cdot x^*_n$, and so, by \Cref{key-bundling}, we have that $x'$ is a $4$-approximation, since
  \begin{align*}
  v\cdot x^* & \leq 4\cdot \left(\frac{1}{2}\,v\cdot x^*_p + \frac{1}{4}\,v\cdot x^*_n\right) = 4\, v\cdot x'. \qedhere
  \end{align*}
\end{proof}
We also provide an alternative, deterministic method to find such an unambiguous sub-instance. 
However, since our algorithms are randomized, we defer discussion of this method to \Cref{app:structure}. Note in unambiguous instances, every item is either a \pitem or an \nitem.

\subsection{Welfare is non-linear in supply}

In this section we provide a bound on the multiplicative gain in welfare in terms of increased supply. This will prove useful later. For now, it illustrates non-linearity of the \AVA problem in its supply. (This is in contrast to other allocation problems where the welfare is at best linear in the supply.)

To motivate this bound, consider the outcome of creating $k$ copies of each item in an \AVA instance. Clearly, the welfare increases by a factor of at least $k$, as we can just repeat the optimal allocation for the original instance $k$ times.
 However, as the following example illustrates, welfare can be \emph{super-linear} in the supply size increase for \AVA. 
\begin{example}\label{ex:nonlinear}
Consider a unit-$\rho$ instance of $k$-buyer \AVA with a single \pitem of value $1+k \eps$ for all buyers and $k$ many \nitems, with the $i$-th \nitems having value zero for all buyers except for one distinct buyer $i$, to whom it has value $1-\eps$.
In this instance $\mathsf{OPT}\approx 2$, since the \pitem can only be allocated to a single buyer, who can then only be allocated one \nitem, while in the instance obtained by creating $k$ copies of each item we can allocate a \pitem to each buyer together with $k$ many \nitems, and so for this instance $\mathsf{OPT}\approx k^2$, i.e., increasing supply $k$-fold increases the welfare \emph{$(k^2/2)$-fold}. 
\end{example}
The following lemma shows that the above example is an extreme case, and for a $k$-fold increase in supply, an $O(k^2)$-fold increase in welfare is best possible.

\begin{lemma}[Supply Lemma]\label{lem:parallel-repetition}
Let $\calI = (I,J,E)$ be an \AVA instance, and let $\calI' = (I',J,E')$ be an instance with the same buyer set and underlying costs and values obtained by copying each item in $\calI$ some $k$ times. 
$$\mathsf{OPT}(\calI') \leq O(k^2)\cdot \mathsf{OPT}(\calI).$$
\end{lemma}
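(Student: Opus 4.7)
The plan is to combine Lemma~\ref{lem:bundling} with a simple random reduction that converts an optimal bundling-based solution for $\calI'$ into a feasible allocation for $\calI$ while losing only a factor of $O(k^2)$ in expectation.

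First, by Lemma~\ref{lem:bundling} it suffices to consider an optimal bundling-based solution $\widehat{x}$ for $\calI'$ of value $V'$ (paying an extra factor of $2$), and I decompose $V' = V'_P + V'_N$ into the $P$-edge and $N$-edge value contributions aggregated across all bundles of $\widehat{x}$. The reduction is as follows: for each item type $i \in I$, observe that the $k$ copies of $i$ in $\calI'$ are used $n_i \leq k$ times in $\widehat{x}$, where each ``use'' is one copy of $i$ assigned either as the distinguished \pedge or as an \nedge of some specific bundle. Independently for each $i$, I pick one of the $n_i$ uses uniformly at random and retain it, dropping the other $n_i - 1$ uses. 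For each bundle $S$ whose \pedge survives, I keep all of $S$'s surviving edges; for each bundle whose \pedge was dropped (a ``dead'' bundle), I also discard any \nedges that were incidentally retained in it.

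I would then verify feasibility in $\calI$: each item type is used at most once by construction, and each surviving bundle is a permissible sub-bundle of the original, since removing \nedges from a permissible bundle preserves permissibility (the same monotonicity property already used in the proof of Lemma~\ref{only-p-or-n}); hence per-buyer unions of such surviving bundles remain AVA-feasible. For the value analysis, fix a bundle $S$ with \pitem $i^\star$ and distinct \nitem types $i$ of multiplicity $m^i_S$ in $S$: its \pedge is retained with probability $1/n_{i^\star} \geq 1/k$, and---by independence of the random choices across distinct item types---conditional on this event, each \nedge of $S$ for item $i$ is retained (as one edge) with probability $m^i_S / n_i \geq m^i_S / k$. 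So the expected surviving value of $S$ is at least $(1/k)\bigl(v_{i^\star, j(S)} + \sum_i (m^i_S / k)\, v_{i, j(S)}\bigr) = v_{i^\star, j(S)}/k + V_N(S)/k^2$, where $V_N(S) = \sum_i m^i_S v_{i,j(S)}$ is $S$'s total \nedge value. Summing over $S$ gives expected retained value at least $V'_P/k + V'_N/k^2 \geq V'/k^2$, hence $\mathsf{OPT}(\calI) \geq V'/k^2$, and so $\mathsf{OPT}(\calI') \leq 2 V' \leq 2k^2\, \mathsf{OPT}(\calI) = O(k^2)\, \mathsf{OPT}(\calI)$.

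The main obstacle is identifying the right random reduction. A naive ``retain each item with probability $1/k$'' scheme ignores that items may be used fewer than $k$ times in $\widehat{x}$, and a ``random buyer/bundle'' scheme fails to exploit the bundle structure altogether. The key insight is to randomize only over the $n_i$ actually-used copies per item, which cleanly decouples the $1/k$ loss on each bundle's \pedge from a further (independent) $1/k$ loss per \nedge, giving the $\Omega(1/k^2)$ retention factor in expectation; once this is set up, feasibility follows from the monotonicity of permissibility under \nedge removal, and the value bound is a two-stage linearity of expectation. The tightness of the bound is already witnessed by Example~\ref{ex:nonlinear}, so the proof technique cannot be improved in general.
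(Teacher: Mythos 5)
Your proposal is correct and takes essentially the same approach as the paper: a random coupling that associates each item of $\calI$ with a copy in $\calI'$, losing a factor of $1/k$ per retained \pedge and $1/k^2$ per retained \nedge (the latter because both the \nedge's item and its bundle's \pitem must be retained), then applying \Cref{lem:bundling} and linearity of expectation. The only cosmetic difference is that you sample uniformly over the $n_i \leq k$ \emph{used} copies of each item rather than over all $k$ copies as the paper does, which tightens the constant without changing the $O(k^2)$ bound.
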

\begin{proof}
Since bundling-based solutions are nearly optimal up to a constant factor of $2$, we can start with an optimal bundling-based allocation $\calA'$ for $\calI'$ and then independently associate each item of $\calI$ with a uniformly random copy among its $k$ copies in $\calI'$, allocating it as in $\calA'$. Finally, we remove all non-permissible obtained bundles to obtain allocation $\calA$ for $\calI$.
For each copy $i'$ of an item $i$, if  $i'$ is allocated in a \pedge in $\calA'$, the probability that $i$ is associated with $i'$ (and thus assigned to the same buyer by $\calA$) is precisely $1/k$. In contrast, if $i'$ is allocated in an \nedge by $\calA'$, the probability that $\calA$ allocates $i$ the same way as $i'$ is precisely $1/k^2$, as this requires both $i$ to be assigned to the same bundle (associated with the same copy) and the \pedge of this bundle to similarly be assigned to the same bundle. The lemma then follows by linearity of expectation. 
\end{proof}


\section{Offline Algorithm via Reduction to Matroid-Constrained GAP}\label{sec:matroid-GAP}

In this section we provide an improved constant-approximation for \AVA in the \emph{offline} setting; we will show in \Cref{sec:BBC-hard} that the problem is hard to approximate to better than $\frac{e}{e-1}$.

\begin{theorem}
\label{thm:rap-via-gap}
There exists a $(\frac{4e}{e-1}+o(1))$-approximate randomized algorithm for \AVA.
\end{theorem}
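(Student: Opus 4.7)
The plan is to reduce, on an unambiguous sub-instance, to a matroid-constrained GAP instance over bundles, and then invoke the $(1-\nicefrac{1}{e})$-approximation of~\cite{calinescu2011maximizing} (equivalently, the configuration-LP framework of~\cite{fleischer2011tight}).

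First I would apply \Cref{only-p-or-n} to obtain an unambiguous sub-instance $\calI'=(I,J,E')$ at an expected multiplicative loss of $4$. On $\calI'$ every item is either a \pitem or an \nitem, so by \Cref{lem:bundling} it is enough to $(1-\nicefrac{1}{e})$-approximate the best \emph{bundling-based} solution on $\calI'$; there, every permissible bundle consists of a single anchoring \pitem plus a set of \nitems whose total deficit for the buyer is at most the anchor's excess.

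Next I would recast the bundling problem as matroid-constrained GAP. For each pair $(j,p)$ with $v_{pj}\ge \rho_j$, introduce a pseudo-machine $M_{j,p}$ of knapsack capacity $v_{pj}-\rho_j$; assigning an \nitem $i$ to $M_{j,p}$ costs $\rho_j-v_{ij}$ and yields value $v_{ij}$, while ``activating'' $M_{j,p}$ consumes the anchor \pitem $p$ and yields value $v_{pj}$. A global assignment is feasible iff (i) each item is assigned to at most one pseudo-machine, (ii) each activated pseudo-machine respects its knapsack capacity, and (iii) each \pitem activates at most one pseudo-machine. Constraints (i) and (iii) together form a partition matroid on the ground set of item/pseudo-machine pairs, while (ii) gives a per-pseudo-machine knapsack family that admits an FPTAS for its value-maximization subproblem. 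Plugging this FPTAS into the configuration LP of~\cite{fleischer2011tight} (solved by the ellipsoid method with knapsack separation for the dual), followed by pipage/randomized rounding as in~\cite{calinescu2011maximizing}, yields a $(1-\nicefrac{1}{e})(1-\eps)$-approximate integral bundling-based allocation on $\calI'$. Composing with the factor of $4$ from \Cref{only-p-or-n} gives the claimed $\frac{4e}{e-1}+o(1)$ approximation.

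The main obstacle I anticipate is that the per-pseudo-machine oracle is only $(1-\eps)$-approximate rather than exact, so some care is needed to verify that the configuration-LP rounding still produces configurations that are \emph{exactly} permissible (not merely in expectation) and that the FPTAS error contributes only an $o(1)$ additive term to the approximation ratio. Both points are by now standard in the SAP literature: every configuration in the LP's support is already a permissible bundle by construction, and the randomized rounding selects one such configuration per pseudo-machine, so integrality and permissibility follow automatically.
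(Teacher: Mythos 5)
Your proof is correct and takes essentially the same route as the paper: reduce to an unambiguous sub-instance via \Cref{only-p-or-n} (factor $4$), then build a matroid-constrained GAP instance with a bin/pseudo-machine $(p,j)$ per \pedge, size $\rho_j-v_{ij}$ for \nitem $i$ against capacity $v_{pj}-\rho_j$, and a partition matroid ensuring each \pitem anchors at most one opened bin, finally invoking the $(1-\nicefrac1e)$-approximation of \cite{calinescu2011maximizing}. The only cosmetic differences from the paper's proof of \Cref{thm:umabiguous} are that you fold the \pitem's value into ``activating'' the bin rather than modeling $p$ as a zero-size GAP element with $(1+\eps)$-size in foreign bins, and that your extra citation of \Cref{lem:bundling} is redundant since \Cref{only-p-or-n} already charges for the bundling loss inside its factor of $4$.
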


The algorithm proceeds by reducing \AVA to GAP with matroid constraints. 
Recall that an instance of the \emph{generalized assignment problem} (GAP)
consists of $n$ elements that can be packed into $m$ bins. Packing an
element $e$ into a bin $b$ gives a value $v_{eb}$ and uses up $s_{eb}$ space 
in that bin. If we let $y_{eb} \in \{0,1\}$ denote the
indicator for whether element $e$ is assigned to bin $b$, then
naturally $\sum_b y_{eb} \leq 1$. Each bin has unit size, and so the 
size of elements assigned to bin $b$ is at most $1$: in
other words, $\sum_e s_{eb} \; y_{eb}\leq 1$. The goal is to maximize the
total value of the assignment $\sum_{eb} v_{eb} \; y_{eb}$. 
\cite{fleischer2011tight} gave a $(1-1/e)$-approximation for this problem.
\cite{calinescu2011maximizing} gave the same approximation for an
extension of the problem, where the opened subset of
bins must be an independent set in some given matroid $\calM$.

\begin{theorem}\label{thm:umabiguous}
  There exists a randomized polynomial-time algorithm that, 
  for any unambiguous \AVA instance, outputs a solution with expected value at least $\big( 1 - \nf1e - o(1) \big)$ times the optimal bundling-based solution.
\end{theorem}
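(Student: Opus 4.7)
My plan is to reformulate the search for a good bundling-based solution as a Generalized Assignment Problem with a matroid constraint on which bins are opened, and then invoke the $(1-1/e-o(1))$-approximation of \cite{calinescu2011maximizing}. Fix an unambiguous instance, so every item is either a \pitem or an \nitem. Every permissible bundle is centered on a unique \pedge $(i^\star,j)$ and consists of zero or more \nedges $(i,j)$ whose total deficit $\sum(\rho_j-v_{ij})$ fits inside the excess $v_{i^\star j}-\rho_j$ of the center. This structure suggests the following instance of matroid-GAP.

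The bins are the potential bundle centers: one bin $B_{i^\star,j}$ for every edge $(i^\star,j)\in E$ with $i^\star$ a \pitem, with capacity $v_{i^\star j}-\rho_j\geq 0$. The elements are the \nitems; assigning \nitem $i$ to bin $B_{i^\star,j}$ has value $v_{ij}$ and size $\rho_j-v_{ij}$ (the deficit), with $\sum_{\text{bins}} y_{i,B}\le 1$ encoding that each \nitem is used at most once. ``Opening'' bin $B_{i^\star,j}$ carries the additional reward $v_{i^\star j}$ of the center \pedge and is forbidden unless $i^\star$ is assigned to that bin; this opening reward is handled by augmenting the bipartite graph with a size-zero ``anchor element'' that must sit in its own bin, or, equivalently, by putting the reward directly on the bin-opening variable in the configuration LP of \cite{fleischer2011tight,calinescu2011maximizing}. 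Finally, the matroid $\calM$ on the set of opened bins is the partition matroid with ground set $\{B_{i^\star,j}\}$ partitioned by the \pitem $i^\star$, with one open bin per part: this expresses that each \pitem is the center of at most one bundle. By construction, feasible matroid-GAP solutions are in value-preserving bijection with bundling-based \AVA solutions: the capacity constraint on $B_{i^\star,j}$ is exactly the permissibility constraint $\sum_{(i,j)\in S}v_{ij}\geq \rho_j|S|$, and the partition matroid is exactly the constraint ``each \pitem appears in at most one bundle''.

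Given this equivalence, I would apply the algorithm of \cite{calinescu2011maximizing} for matroid-constrained GAP, which produces (in randomized polynomial time via continuous greedy on the configuration LP plus pipage/swap rounding) an assignment of expected value at least $(1-1/e-o(1))\cdot \mathrm{OPT}_{\mathrm{GAP}}$. Translating back, this is a collection of permissible bundles whose total value is at least $(1-1/e-o(1))$ times the best bundling-based solution, which is what the theorem asks for.

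\textbf{Main obstacles.} The only genuinely non-routine point is the coupling between opening bin $B_{i^\star,j}$ and ``consuming'' the \pitem $i^\star$, since classical GAP treats bin-opening and element-packing as independent decisions. My plan for this is to write the configuration LP at the level of bundles (each configuration is a permissible subset of \nedges for a fixed center $(i^\star,j)$, with value including $v_{i^\star j}$), add $\sum_{j}z_{i^\star,j}\le 1$ per \pitem as the partition-matroid side constraint, and verify that this LP fits the hypotheses of \cite{calinescu2011maximizing}; then the $(1-1/e)$ guarantee on the continuous relaxation, together with the fact that pipage/matroid-rounding preserves the partition-matroid constraint exactly, gives the claimed bound. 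The $o(1)$ loss is inherited entirely from the continuous-greedy step of \cite{calinescu2011maximizing} and requires no new analysis here.
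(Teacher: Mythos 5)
Your proposal is correct and follows essentially the same route as the paper: reduce the search for the best bundling-based solution to GAP with a partition-matroid constraint indexed by the \pitems (one opened bin $(p,j)$ per \pitem $p$), then invoke the $(1-\nf1e-o(1))$-approximation of \cite{calinescu2011maximizing}. The paper realizes your ``anchor element'' idea concretely by making each \pitem $p$ itself a GAP element with value $v_{pj}$ and size $0$ in bin $(p,j)$ and value $0$ with size $1+\eps$ (hence forbidden) in bins $(p',j)$ with $p'\neq p$, while normalizing bin capacities to $1$ by rescaling the \nitem sizes to $\frac{\rho_j-v_{ij}}{v_{pj}-\rho_j}$; this makes the correspondence with permissible bundles value-preserving on \emph{maximal} GAP solutions (which are w.l.o.g., since assigning $p$ to $(p,j)$ is always free and beneficial), avoiding any ``must-assign'' primitive that GAP does not actually support.
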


\begin{proof}
  Given an unambiguous \AVA instance  (i.e., one where each item is incident on only \pedges or only \nedges), we construct an instance of Matroid-Bin GAP 
  as follows:
  \begin{enumerate}
  
  \item \emph{Elements and bins:} 
  For each \pitem $p$ and buyer $j$, construct a bin $(p,j)$
    in the GAP instance. The elements of the GAP instance are exactly
    the items of the \AVA instance.
    
  \item \emph{Values/sizes of \pitems:} Assigning a \pitem $p$ to bin $(p,j)$ yields value
    $v_{pj}$ and uses zero space; Assigning \pitem $p$ to a bin
    $(p',j)$ with $p \neq p'$ yields value zero and uses $1+\eps$ space.
    
  \item \emph{Values/sizes of \nitems:} Assigning \nitem $i$ to bin $(p,j)$ yields value $v_{ij}$ 
    and uses $\frac{\rho_j - v_{ij}}{v_{pj} - \rho_j}$~space.
    
  \item \emph{Matroid on the bins:} Finally, the matroid $\calM$ on the bins is a partition matroid, 
    requiring that we choose at most one bin from $\{(p,j) \mid j \in B \}$, for each
    \pitem $p$. 
  \end{enumerate}
  The construction above results in a value-preserving one-to-one correspondence between feasible GAP solutions which are \emph{maximal}, i.e., where each \pitem $p$ is assigned to some bin, and permissible bundling-based solutions to the \AVA instance. 
  Indeed, for any feasible bundling-based solution to the \AVA instance, fix a bundle $(p,j)$ containing the item set $S$. 
  The value of placing the items in $S$ in the bin $(p,j)$ is precisely $\sum_{i\in S} v_{ij}$. Summing over all bins, we find that both solutions (to the \AVA and GAP instance) have the same value.
  On the other hand, the GAP solution is feasible since for each \pitem $p$ we open up at most one bin $(p,j)$ (thus respecting the matroid constraint) and moreover each bin's size constraint is respected due to the per-bundle average-value constraint and the zero size of $p$ in bin $(p,j)$, implying that
   $\sum_{i \in S} s_{i,(p,j)} = \sum_{i\in S\setminus \{p\}} \frac{\rho_j - v_{ij}}{v_{pj} - \rho_j}\leq 1.$ 
   Similarly, starting with a maximal solution to the GAP instance, the single bin $(p,j)$ into which $p$ is placed has its average-value constraint satisfied (note that $p$ cannot be placed in a bin $(p',j)$ for $p'\neq p$, where its size is $1+\eps$), and the value of the bundles obtained this way is the same as the GAP solution's value.
  Now the $(1-\nf1e-o(1))$-approximation algorithm for GAP with matroid constraints \cite{calinescu2011maximizing} gives the same approximation for \AVA on
    unambiguous instances. 
\end{proof}

\Cref{thm:umabiguous} combined with 
\Cref{only-p-or-n} completes the proof of \Cref{thm:rap-via-gap}.

\section{An Offline Algorithm via Relax-and-Round}
\label{sec:rounding-offline}

Let us now present an LP-rounding based algorithm for \AVA.
This more sophisticated algorithm yields another constant-approximate offline algorithm, which also allows to incorporate additional budget constraints (see \Cref{sec:side-constraints}). Moreover, this section's algorithm also provides a template for our main \emph{online} algorithms. 

The natural starting point for an LP-rounding based algorithm, the LP relaxation obtained by dropping the integrality constraints of \eqref{AVA-ILP}, turns out to be a dead end. This relaxation has an integrality gap of $\Omega(n)$ on $n$-buyer instances,\footnote{Recall that an LP relaxation's \emph{integrality gap} is the difference in objective between its best fractional and integral solutions.} even for unit-$\rho$, as shown by reinspecting the instance of \Cref{ex:nonlinear}.
\begin{example}
Consider an $n$-buyer unit-$\rho$ instance with a single \pitem $p$ of value $1+n\eps$ for all buyers, and $n$ \nitems, with the $i$-th \nitem having zero value for all buyers except for buyer $j_i$, for whom its value is $1-\eps$. An assignment $x_{pj}=\frac{1}{n}$ for all buyers $j$ and $x_{ij_i}=1$ for every \nitem $i$ gives value $n+1$ for the LP relaxation of \eqref{AVA-ILP}, while clearly the optimal integral solution has value $\approx 2$.
\end{example}
Therefore, to obtain any constant approximation via LP rounding, we need a tighter relaxation. 
To this end, we rely on \Cref{lem:bundling,only-p-or-n}, and provide the following relaxation for \emph{bundling-based} solutions for unambiguous \AVA instances.
This LP has decision variables $x_{ijp}$ for ($P$ or $N$)-item $i$, buyer $j$ and \pitem $p$. 
Informally, these correspond to the probability that $i$ is allocated to $j$ in the bundle with \pitem $p$, which we denote by $jp$. (Note: this polynomially-sized LP is clearly poly-time solvable.) 
\begin{align}
\max \quad & \sum_{i,j,p} v_{ij}\; x_{ijp} \tag{Bundle-LP} \label{bundle-LP} \\
\textrm{s.t.} \quad & \sum_i (\rho_{j} - v_{ij}) \; x_{ijp} \leq 0 \label{cons:RoS} & \forall j,p \\
& \sum_{j,p} x_{ijp} \leq  1 & \forall i \label{cons:item}\\
& x_{ijp} \leq x_{pjp} & \forall i,j,p \label{cons:bundle-defined-by-p-item}\\
& x_{p'jp}  = 0 & \forall j, P\textrm{-item }p' \neq p \label{cons:unique-p-item-per-bundle} \\
& x_{ijp}\geq 0 & \forall i,j,p 
\nonumber
\end{align}

Intuitively, the bundling, and in particular \Cref{cons:bundle-defined-by-p-item}, will allow us to overcome the integrality gap example above. We formalize this intuition later by approximately rounding this LP, but first we show that \eqref{bundle-LP} is a relaxation of bundling-based allocations for unambiguous \AVA instances.
\begin{lemma}\label{bundle-lp-gap}
For any unambiguous \AVA instance, the value of \eqref{bundle-LP} is at least as high as that of any bundling-based allocation.
\end{lemma}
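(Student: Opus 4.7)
The plan is to exhibit, for any bundling-based allocation of an unambiguous instance, a feasible solution to \eqref{bundle-LP} of equal objective value. Concretely, take any bundling-based allocation and let its permissible bundles be $B_1, B_2, \ldots$ Each $B_k$ is assigned to some buyer $j_k$ and, by \Cref{def:permissible_bundle}, contains a unique \pedge, giving a unique \pitem $p_k$. For every item $i \in B_k$, set $x_{i,j_k,p_k} := 1$, and set all other LP variables to $0$.

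Next I verify each constraint in turn. For \Cref{cons:RoS}, note that for each pair $(j,p)$ the only nonzero $x_{\cdot jp}$ variables correspond to items in the (at most one) bundle $B_k$ with $j_k = j$ and $p_k = p$; the constraint $\sum_i(\rho_j - v_{ij})x_{ijp} \le 0$ is then exactly the permissibility condition $\sum_{(i,j)\in B_k} v_{ij} \ge \rho_j \cdot |B_k|$. For \Cref{cons:item}, since the bundles partition the allocated items, each item belongs to at most one $B_k$, so $\sum_{j,p} x_{ijp} \le 1$. For \Cref{cons:bundle-defined-by-p-item}, whenever $x_{i,j_k,p_k}=1$ we also have $p_k \in B_k$, so $x_{p_k, j_k, p_k}=1$ as required. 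Finally, \Cref{cons:unique-p-item-per-bundle} uses unambiguity crucially: a \pitem $p'$ can only appear inside $B_k$ as the \pedge of the bundle (the single \pedge in $B_k$ is $(p_k, j_k)$, and in an unambiguous instance a \pitem cannot be placed via an \nedge); hence no \pitem $p' \ne p_k$ lies in $B_k$, so $x_{p', j_k, p_k}=0$.

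For the objective, since the bundles partition the allocated items and each item $i$ in bundle $B_k$ contributes $v_{i, j_k} \cdot x_{i, j_k, p_k} = v_{i, j_k}$ to $\sum_{i,j,p} v_{ij} x_{ijp}$, the LP value of this solution equals $\sum_k \sum_{i \in B_k} v_{i, j_k}$, which is precisely the value of the bundling-based allocation. Therefore the optimum of \eqref{bundle-LP} is at least the optimum bundling-based value.

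The entire argument is a direct encoding, so there is no substantive obstacle; the only subtlety is invoking unambiguity to rule out a \pitem appearing as an \nedge in some bundle, which is what lets \Cref{cons:unique-p-item-per-bundle} hold without loss.
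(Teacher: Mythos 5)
Your proof is correct and takes essentially the same approach as the paper: directly encode the bundling-based allocation as an LP solution and verify each constraint. The paper phrases it slightly more generally (indicator random variables for a randomized allocation, then take expectations), but your explicit treatment of the unambiguity requirement for \Cref{cons:unique-p-item-per-bundle} is the key step and matches the paper's reasoning.
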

\begin{proof}
Fix a (randomized) bundling-based allocation algorithm $\calA$. 
Let $Y_{ijp}$ be the indicator for $\calA$ having allocated item $i$ in bundle $jp$. 
We argue that $Y_{ijp}$ satisfy the constraints of \eqref{bundle-LP}, realization by realization.
Consequently, by linearity of expectation, so do their marginals, $\E[Y_{ijp}]$.
Constraint \eqref{cons:RoS} holds since $\calA$ satisfies the average-value constraint for each bundle. 
Constraint \eqref{cons:item} holds since each item is allocated at most once.
Constraint \eqref{cons:bundle-defined-by-p-item} holds because bundle $jp$ must be opened for $i$ to be allocated in it.
Constraint \eqref{cons:unique-p-item-per-bundle} holds since permissible bundles have a single \pitem in them.
Finally, non-negativity of $\bf{Y}$ is trivial.
We conclude that $\ex{\bf{Y}}$ is a feasible solution to the above LP, with objective precisely $\sum_{ijp} v_{ij} \; \E[Y_{ijp}]$. 
The lemma follows.
\end{proof}


We now turn to rounding this LP. To this end, we consider a two-phase algorithm, whose pseudo-code is given in \Cref{alg:offline-rounding}.
In Phase I  we \emph{open bundles}, letting each \pitem $p$ pick a single buyer $j$ with probability $x_{pjp}$,\footnote{Since Constraint \eqref{cons:item} is tight for every \pitem in any optimal LP solution, $\{x_{pjp}\}_j$ is a distribution over buyers.} and opening the bundle $jp$.
In Phase II we \emph{enrich the bundles}, by adding \nitems to them. Specifically, for each \nitem $i$, we create a set $S_i$ containing each open bundle $jp$ independently with probability $\alpha\cdot \frac{x_{ijp}}{x_{pjp}}$, where $\alpha\in[0,1]$ is a parameter to be specified later. Then, if this set $S_i$ contains a single bundle $jp$ and adding $i$ to this bundle would not violate the average-value constraint restricted to the bundle (denoted by $\mathsf{BundleAV}_{jp}$), i.e., this bundle would remain permissible, then we allocate $i$ to the bundle $jp$. Otherwise, we leave $i$ unallocated.

\begin{algorithm}
    \caption{Offline rounding of Bundle-LP}\label{alg:offline-rounding}
	\begin{algorithmic}[1]
        \State Make the instance unambiguous as in \Cref{only-p-or-n}
	\State Let $\mathbf{x}$ be an optimal solution to \eqref{bundle-LP} for the obtained unambiguous instance
	\For{\textbf{each} \pitem $p$} \Comment{Phase I}
	\State \label{line:open-bundle} Pick $j$ according to distribution $\left\{x_{pjp}\right\}_{j=1,\ldots,n}$ and open bundle $jp$
	\EndFor
	\For{\textbf{each} \nitem $i$} \Comment{Phase II}
    \State $S_i \gets \emptyset$
    \For{\textbf{each} bundle $jp$, \textbf{with probability} $\alpha \cdot \frac{x_{ijp}}{x_{pjp}}$} 
    \label{line:consider-bundle}
    \If{$jp$ was opened in Phase I}
    \State $S_i\gets S_i\cup \{jp\}$ 
    \EndIf
    \EndFor 
    \If{$|S_i|=1$}
    \If{the only bundle $jp\in S_i$ remains permissible after adding $i$ to it} \label{line:permissible-test}
    \State Allocate $i$ to $jp$
    \EndIf
    \EndIf
    \EndFor 
    \end{algorithmic}
\end{algorithm}

\Cref{alg:offline-rounding} clearly outputs a feasible allocation, since it only allocates \nitems $i$ to a bundle $jp$ if this would not violate the average-value constraint of the bundle, and hence by linearity the average-value constraint of the buyer remains satisfied. Moreover, the algorithm is well-defined; in particular, the probability spaces defined in lines \ref{line:open-bundle} and \ref{line:consider-bundle} are valid, by constraints \eqref{cons:item} for \pitem $p$, and \eqref{cons:bundle-defined-by-p-item} for triple $i,j,p$, respectively.
We turn to analyzing this algorithm's approximation ratio. For this, we will lower bound the probability of each item $i$ to be allocated in bundle $jp$ in terms of $x_{ijp}$. 

By \Cref{line:open-bundle}, each \pitem $p$ is assigned in bundle $jp$ precisely with probability $x_{pjp}$.
Consequently, the expected value \Cref{alg:offline-rounding} obtains from \pitems is precisely their contribution to the LP solution's value.
It remains to understand what value we get from \nitems.

\subsection{Allocation of \texorpdfstring{\nitems}{N-items}}

To bound the contribution of \nitems, we consider any tuple of \nitem $i$, buyer $j$ and \pitem $p$. Note that \nitem $i$ is assigned to bundle $jp$ if and only if all the four following events occur:
\begin{enumerate}
\item $\calE_1$: the event that bundle $jp$ is open, which happens with probability $x_{pjp}$.
\item $\calE_2$: the event that the $\Ber(\alpha \cdot \frac{x_{ijp}}{x_{pjp}})$ in \Cref{line:consider-bundle} comes up heads for $jp$. 
\item $\calE_3$: the event that $S_i \setminus \{jp\}  = \emptyset$.
\item $\calE_4$: the event that $jp$ would remain permissible if we were to add $i$ to bundle $jp$. 
\end{enumerate}
We note that events $\calE_1,\calE_2,\calE_3$ are all independent, as they depend on distinct (and independent) coin tosses. So, for example, $\pr{S_i\ni jp}=\pr{\calE_1\land \calE_2} = \pr{\calE_1}\cdot \pr{\calE_2}=\alpha\cdot x_{ijp}$.
Moreover, we have the following simple bound on $\pr{\calE_3}$.
\begin{lemma}\label{lem:few-bundles}
    $\pr{\bigwedge_{\ell=1}^3\calE_\ell} = \prod_{\ell=1}^3\pr{\calE_\ell} \geq (1-\alpha) \cdot \alpha\cdot x_{ijp}.$
\end{lemma}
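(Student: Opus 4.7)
The plan is to establish independence of the three events first, then compute (or bound) each individual probability and multiply.

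For independence, I would observe that $\calE_1$ is determined entirely by the coin used in \Cref{line:open-bundle} for \pitem $p$ (which buyer $p$ chooses in Phase I); $\calE_2$ is determined entirely by the Bernoulli coin flipped in \Cref{line:consider-bundle} for the specific triple $(i,j,p)$; and $\calE_3$ depends only on the Phase~I choices of the other \pitems $p'\neq p$ (determining which other bundles are open) and the Bernoulli coins in \Cref{line:consider-bundle} for triples $(i,j',p')$ with $p'\neq p$. These three collections of coins are pairwise disjoint, so the events are mutually independent, giving $\pr{\bigwedge_\ell \calE_\ell}=\prod_\ell \pr{\calE_\ell}$.

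Next, I would compute the individual probabilities. By \Cref{line:open-bundle}, $\pr{\calE_1}=x_{pjp}$, and directly from \Cref{line:consider-bundle}, $\pr{\calE_2}=\alpha\cdot x_{ijp}/x_{pjp}$. Multiplying these two already yields $\alpha\cdot x_{ijp}$, so it remains only to show $\pr{\calE_3}\geq 1-\alpha$.

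For $\pr{\calE_3}$, the key observation is that for any bundle $j'p'$ with $p'\neq p$, we have $j'p'\in S_i$ only if the Bernoulli in \Cref{line:consider-bundle} comes up heads \emph{and} bundle $j'p'$ was opened in Phase~I. Since these two events are independent, the marginal probability is
\[
\alpha\cdot \frac{x_{ij'p'}}{x_{p'j'p'}}\cdot x_{p'j'p'}\;=\;\alpha\cdot x_{ij'p'}.
\]
A union bound over all such bundles, followed by \eqref{cons:item}, yields
\[
\pr{\calE_3^c}\;\leq\;\sum_{p'\neq p}\sum_{j'}\alpha\cdot x_{ij'p'}\;\leq\;\alpha\cdot \sum_{j',p'} x_{ij'p'}\;\leq\;\alpha,
\]
so $\pr{\calE_3}\geq 1-\alpha$. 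Multiplying the three bounds gives the claimed $(1-\alpha)\cdot \alpha\cdot x_{ijp}$.

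I don't expect a serious obstacle here: the only subtlety is being careful that opening of other bundles $j'p'$ is genuinely independent of $\calE_1$ and $\calE_2$, which follows from the ``one bundle per \pitem'' nature of Phase~I (Constraint \eqref{cons:unique-p-item-per-bundle} ensures that the choices for distinct \pitems use independent coins). The union bound is loose in general but is exactly the right tool here because $\sum_{j',p'} x_{ij'p'}\leq 1$ is already built into the LP.
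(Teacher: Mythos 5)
Your proof is correct and takes essentially the same approach as the paper: establish independence from disjoint coin families, compute $\pr{\calE_1}\cdot\pr{\calE_2}=\alpha\cdot x_{ijp}$, and bound $\pr{\overline{\calE_3}}\leq\alpha$ by summing the marginal probabilities $\alpha\cdot x_{ij'p'}$ over the other bundles and invoking \eqref{cons:item}. The paper phrases the last step as $\Pr[X>0]\leq\E[X]$ for $X=|S_i\setminus\bigcup_{j'}\{j'p\}|$, which is identical in content to your union bound since $X$ is a sum of indicators.
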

\begin{proof}
The equality follows from independence of $\calE_1,\calE_2,\calE_3$. We therefore turn to lower bounding $\pr{\calE_3}$.
  Since $\Pr[X>0] = \Pr[X\geq 1] \leq \E[X]$ for any integer random variable~$X\geq 0$ by Markov's inequality, 
  \[ \Pr[ \overline{\calE_3}] = \Pr[\;|S_i\setminus\{jp\}| > 0] \leq \ex{ \left|S_i \setminus \{jp\}\right|} 
  = \sum_{p' \neq p} \sum_{j'} \alpha \cdot  x_{ij'p'} \leq \alpha, \] 
    where the second equality follows from $\Pr[S_i\ni j'p'] = \alpha\cdot x_{ij'p'}$ by the above, and the last inequality follows from Constraint \eqref{cons:item}.
    Since $\pr{\calE_1}\cdot \pr{\calE_2}=\alpha\cdot x_{ijp}$, the lemma follows.
\end{proof}
\textbf{A challenge.} 
    As noted above, $\calE_1,\calE_2,\calE_3$ are independent, resulting in a simple analysis for the probability 
    $\pr{\bigwedge_{\ell=1}^3\calE_\ell} = \prod_{\ell=1}^3\pr{\calE_\ell}$.
    Unfortunately, lower bounding $\Pr[\calE_4 \mid \calE_1 \land
  \calE_2 \land \calE_3]$ is more challenging, due to possible \emph{negative correlations} between $\calE_4$ and $\calE_3$. To see this, note that $\calE_1\land \calE_2 \land \calE_3$ implies $S_i=\{jp\}$. But this event can be positively correlated with other \nitems $i'$ having $S_{i'}=\{jp\}$, thus making it more likely that $jp$ won't be able to accommodate $i$ under $\mathsf{BundleAV}_{jp}$.
  
We can overcome this challenge of negative correlations, provided $(i,j)$ has small deficit compared to $(p,j)$'s excess. (We address the large deficit case separately later.)
  Specifically, by coupling our algorithm with an algorithm that allocates more often and does not suffer from such correlations, we can lower bound this conditional probability as follows.

\begin{lemma}\label{lem:small-excess-has-room}
Let $\beta\in [0,1]$.
If $i,j,p$ are such that $\rho_{j}-v_{ij}\leq \beta\cdot (v_{pj}-\rho_j)$, then 
$$\Pr[\calE_4 \mid \calE_1 \land
  \calE_2 \land \calE_3] \geq 1-\frac{\alpha}{1-\beta}.$$
\end{lemma}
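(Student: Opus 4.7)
The plan is to decouple the permissibility check from the negatively-correlated events driving $\calE_3$ by comparing the real algorithm's deficit in bundle $jp$ to that of a ``larger'' imaginary process that simply dumps every \nitem $i'\neq i$ into $jp$ whenever $jp\in S_{i'}$, with no $|S_{i'}|=1$ test and no permissibility check. Since the real algorithm allocates a subset of what the imaginary algorithm does, the real ``room'' left in $jp$ is lower-bounded by
$$\widetilde R \;:=\; (v_{pj}-\rho_j) \;-\; \sum_{i'\neq i,\,i'\text{ \nitem}} (\rho_j - v_{i'j})\cdot X_{i'},$$
where $X_{i'}:=\mathbf{1}[jp\in S_{i'}]$. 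In particular, $\widetilde R\geq \rho_j-v_{ij}$ implies $\calE_4$, so it suffices to upper bound $\Pr[\widetilde R<\rho_j-v_{ij}\mid \calE_1\land\calE_2\land\calE_3]$.

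Next I would verify that conditioning on $\calE_1\land\calE_2\land\calE_3$ does not alter the distribution of the random variables $\{X_{i'}\}_{i'\neq i}$. Indeed, $\calE_2$ and $\calE_3$ only involve Phase~II Bernoullis attached to item $i$, and $\calE_1$ lives in Phase~I and is independent of Phase~II. Thus, conditional on these three events, the $X_{i'}$ remain mutually independent with $\Pr[X_{i'}=1]=\alpha\cdot x_{i'jp}/x_{pjp}$ (since $\calE_1$ guarantees bundle $jp$ is open), and so
$$\mathbb{E}\!\left[(v_{pj}-\rho_j)-\widetilde R \;\middle|\; \calE_1\land\calE_2\land\calE_3\right] \;=\; \frac{\alpha}{x_{pjp}}\sum_{i'\neq i,\,\text{\nitem}}(\rho_j-v_{i'j})\,x_{i'jp}.$$

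I would then bound the right-hand side using the LP. Constraint~\eqref{cons:RoS} for bundle $jp$, together with \eqref{cons:unique-p-item-per-bundle} (only $p$ itself can be a \pitem allocated to bundle $jp$) and non-negativity, yields
$$\sum_{i'\,\text{\nitem}} (\rho_j - v_{i'j})\,x_{i'jp} \;\leq\; (v_{pj}-\rho_j)\,x_{pjp},$$
so the expected ``deficit'' above is at most $\alpha(v_{pj}-\rho_j)$. The hypothesis $\rho_j-v_{ij}\leq \beta(v_{pj}-\rho_j)$ gives $v_{pj}-\rho_j-(\rho_j-v_{ij})\geq (1-\beta)(v_{pj}-\rho_j)>0$, so Markov's inequality applied to the non-negative deficit random variable yields
$$\Pr\!\left[\widetilde R<\rho_j-v_{ij}\,\middle|\,\calE_1\land\calE_2\land\calE_3\right] \;\leq\; \frac{\alpha(v_{pj}-\rho_j)}{(1-\beta)(v_{pj}-\rho_j)} \;=\; \frac{\alpha}{1-\beta},$$
which, combined with the coupling in the first paragraph, completes the proof.

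The main subtlety I expect is the coupling/conditioning step: one must be careful that $\calE_3$, which restricts item $i$'s Bernoullis on bundles with a \emph{different} \pitem $p'\neq p$, does not interfere with the Bernoullis of \emph{other} items $i'\neq i$ on bundle $jp$ that define $\widetilde R$. The disjointness of the coin sets involved makes this work, and it is precisely what the authors flag as the source of the difficulty the imaginary-algorithm trick sidesteps; after that, the LP bound on expected deficit and Markov's inequality are essentially routine.
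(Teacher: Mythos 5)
Your proof is correct and follows the same approach as the paper: working with the indicators $X_{i'}=\mathbf{1}[jp\in S_{i'}]$ is exactly equivalent to the paper's coupling with the imaginary algorithm $\calA'$ that dumps every $i'$ into every $jp\in S_{i'}$, and the subsequent independence argument (given $\calE_1$, the $X_{i'}$ are unaffected by $\calE_2\land\calE_3$), LP bound on the expected deficit, and Markov step all match. One small imprecision in your justification---$\calE_3$ depends not only on item $i$'s Phase~II Bernoullis but also on which other bundles $j'p'$ with $p'\neq p$ were opened in Phase~I---does not affect the conclusion, since those Phase~I events are likewise independent of the coins defining $\{X_{i'}\}_{i'\neq i}$ and of $\calE_1$.
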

\begin{proof}
Consider an imaginary algorithm $\calA'$ that allocates every \nitem $i'$ into every bundle $j'p'\in S_{i'}$,  even when $|S_{i'}|>1$ (so we may over-allocate $i'$) and even if this violates the $\mathsf{BundleAV}_{j'p'}$ constraint. 
Coupling $\calA'$ with \Cref{alg:offline-rounding} by using the same randomness for both algorithms, we have that item $i'$ is allocated to bin $j'p'$ by $\calA'$ with probability precisely $\pr{S_{i'}\ni j'p'}=\alpha\cdot x_{i'j'p'}$. 
In particular, $\calA'$ only allocates more items than \Cref{alg:offline-rounding}.

We denote by $N'_{jp}$ the set of \nitems allocated to bundle $jp$ by $\calA'$. Now, for $i$ as above, let $\calE'_4$ be the event that $\sum_{i'\in N'_{jp}\setminus\{i\}} (\rho_j - v_{i'j}) \leq (1-\beta)\cdot (v_{pj} - \rho_j)$, that is, the deficit of \nitems other than $i$ that $\calA'$ allocated to the bundle $jp$ together only consumes at most a $(1-\beta)$ fraction of $p$'s excess for $j$. By the small deficit assumption on $i,j,p$, we know that event $\calE'_4$ is sufficient for $\mathsf{BundleAV}_{jp}$ to be satisfied if \Cref{alg:offline-rounding} were to add $i$ to $jp$. Thus, $\calE'_4$ implies $\calE_4$ in any realization (of the randomness), since $\calA'$ only allocates more items to each bin than \Cref{alg:offline-rounding}.
On the other hand, we also have that both $\calE'_4$ and $\calE_1$ are independent of both $\calE_2\land \calE_3$, since the latter combined event depends on an independent random coin toss ($\calE_2$) and events concerning other bundles $j'p'$, which are both independent of the randomness concerning bundle $jp$. (Here we use that $\calA'$ allocates $i$ to $jp$ whenever $S_i\ni jp$, regardles of other bundles $j'p'$ belonging to $S_i$.) Consequently, by standard applications of Bayes' Law, we obtain the following.
\begin{align*}
\Pr[\calE'_4 \mid \calE_1 \land \calE_2 \land \calE_3] & = \Pr[\calE'_4 \mid \calE_1].
\end{align*}
As the imaginary algorithm $\calA'$ assigns $i'$ to $jp$ (i.e. $i'\in N'_{jp}$) iff $S_{i'}\ni jp$, we know that
\begin{align*}
\ex{\sum_{i'\in N'_{jp}} (\rho_j - v_{i'j}) \,\,\middle\vert\,\, \calE_1 } = \sum_{i'\neq p} (\rho_j - v_{i'j})\cdot \pr{S_{i'}\ni jp\mid \calE_1} = \alpha\cdot \sum_{i'\neq p} (\rho_j - v_{i'j})\; \frac{x_{i'jp}}{x_{pjp}} \leq \alpha \cdot (v_{pj} - \rho_j).
\end{align*}
Above, the second equality follows from linearity and $\Pr[S_{i'}\ni jp \mid \calE_1] = \alpha\cdot \frac{x_{ij'p'}}{x_{pjp}}$, and the inequality
follows from the average-value constraint for bundle $jp$ (i.e. \Cref{cons:RoS}) in our LP.
Therefore, by Markov's inequality 
\begin{align*}
\Pr\left[\sum_{i'\in N'_{jp}\setminus \{i\}} (\rho_j - v_{i'j}) > (1-\beta)\cdot (v_{pj} - \rho_j) \,\,\middle\vert\,\, \calE_1 \right] \leq \frac{\ex{\sum_{i'\in N'_{jp}\setminus \{i\}} (\rho_j - v_{i'j})\,\,\middle|\,\, \calE_1}}{(1-\beta)\cdot (v_{pj} - \rho_j)}\leq \frac{\alpha}{1-\beta}, 
\end{align*}
and thus $\Pr\left[\calE'_4\mid \calE_1\right]\geq 1-\frac{\alpha}{1-\beta}$. Recalling that $\calE'_4$ implies $\calE_4$ in any realization, we conclude with the desired bound, as follows.
\begin{align*}
\Pr[\calE_4 \mid \calE_1 \land
  \calE_2 \land \calE_3] & \geq \Pr[\calE'_4 \mid \calE_1 \land
  \calE_2 \land \calE_3] = \Pr[\calE'_4 \mid \calE_1] \geq 1-\frac{\alpha}{1-\beta}. \qedhere
\end{align*}
\end{proof}

\Cref{lem:small-excess-has-room} and the preceding discussion yield a lower bound on the probability of an \nitem $i$ being successfully allocated to a bundle $jp$ when $i$'s deficit is small relative to the excess of the \pitem $p$.
For the large deficit case, no such bound holds. However, as we now observe (see proof in \Cref{app:offline}), large-deficit edges contribute a relative small portion of the allocation's value in the optimal LP solution.
\begin{restatable}{lemma}{largeExcessNotWorthwhile}\label{lem:large-excess-not-worthwhile}
Let $\beta\in [0,1]$. For any bundle $jp$, let $L^{\beta}_{jp}$ denote the set of \emph{$\beta$-large deficit} \nitems for bundle $jp$, i.e., \nitem $i$ with $\rho_j - v_{ij} > \beta\cdot (v_{pj} - \rho_j)$.
Then, 
$$\sum_{j,p}\sum_{i\in L^{\beta}_{jp}} v_{ij}\; x_{ijp}\leq \frac{1}{\beta} \; \sum_{j,p} v_{pj}\; x_{pjp}.$$
\end{restatable}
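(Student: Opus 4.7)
The plan is to chase through the LP's per-bundle average-value constraint \eqref{cons:RoS}, restricted to the $\beta$-large-deficit \nitems, and then convert a bound on $\sum_{i\in L^{\beta}_{jp}} x_{ijp}$ into one on $\sum_{i\in L^{\beta}_{jp}} v_{ij}\,x_{ijp}$ via the trivial inequality $v_{ij}\le v_{pj}$ for \nitems $i$ relative to the \pitem $p$ in bundle $jp$.

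First I would fix $j$ and $p$ and rewrite \eqref{cons:RoS} by isolating the \pitem contribution: since the instance is unambiguous (so every non-\pitem with $x_{ijp}>0$ is an \nitem) and \eqref{cons:unique-p-item-per-bundle} forces $x_{p'jp}=0$ for any other \pitem $p'\neq p$, the constraint simplifies to
\[
\sum_{i\text{ \nitem}}(\rho_j-v_{ij})\,x_{ijp}\;\le\;(v_{pj}-\rho_j)\,x_{pjp}.
\]
Restricting the LHS to $i\in L^{\beta}_{jp}$ and using the definition $\rho_j-v_{ij}>\beta\,(v_{pj}-\rho_j)$ together with $x_{ijp}\ge 0$ gives
\[
\beta\,(v_{pj}-\rho_j)\sum_{i\in L^{\beta}_{jp}} x_{ijp}\;\le\;(v_{pj}-\rho_j)\,x_{pjp}.
\]
Assuming $v_{pj}>\rho_j$ (else $L^\beta_{jp}$ contains only items with $v_{ij}<\rho_j\le v_{pj}$ and $\rho_j-v_{ij}>0$, but also the RHS side is $0$, forcing $x_{ijp}=0$ for any such $i$ and the claim is vacuous for this bundle), this yields $\sum_{i\in L^{\beta}_{jp}} x_{ijp}\le \tfrac{1}{\beta}\,x_{pjp}$.

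Next I would lift this to a value bound by noting that each \nitem $i$ has $v_{ij}\le \rho_j\le v_{pj}$ (using non-negativity of values together with the \pitem/\nitem definitions in the unambiguous instance). Multiplying by $v_{ij}\le v_{pj}$ termwise,
\[
\sum_{i\in L^{\beta}_{jp}} v_{ij}\,x_{ijp}\;\le\;v_{pj}\sum_{i\in L^{\beta}_{jp}} x_{ijp}\;\le\;\tfrac{1}{\beta}\,v_{pj}\,x_{pjp}.
\]
Finally, summing over all pairs $(j,p)$ gives exactly the claimed bound.

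There is no real obstacle here — the argument is a two-line manipulation of the LP's average-value constraint. The only point that merits care is being explicit about why only the \pitem $p$ contributes a positive-excess term on the bundle-side of \eqref{cons:RoS} (this is where \eqref{cons:unique-p-item-per-bundle} and the unambiguity from \Cref{only-p-or-n} are used), and handling the degenerate case $v_{pj}=\rho_j$ so that dividing by $v_{pj}-\rho_j$ is justified.
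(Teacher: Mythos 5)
Your proof is correct and takes essentially the same approach as the paper's: both isolate the \pitem term in Constraint \eqref{cons:RoS}, restrict to $L^{\beta}_{jp}$ to factor out $\beta(v_{pj}-\rho_j)$, divide through after handling the degenerate $v_{pj}=\rho_j$ case, apply $v_{ij}\le\rho_j\le v_{pj}$, and sum over bundles. Your treatment of the degenerate case is if anything a touch more careful than the paper's (you note only the \nitem multipliers are forced to zero, not $x_{pjp}$ itself, which is the accurate reading).
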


\subsection{Completing the analysis}

We are now ready to bound the approximation ratio of \Cref{alg:offline-rounding}.

\begin{theorem}\label{thm:offline-rounding}
\Cref{alg:offline-rounding} with $\alpha=0.3$ yields a $32$-approximation for \AVA.
\end{theorem}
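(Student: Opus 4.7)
The plan is to reduce bounding the approximation ratio to a bookkeeping exercise that combines the three per-edge guarantees already in hand: the trivial fact that every \pitem $p$ is placed into bundle $jp$ with probability \emph{exactly} $x_{pjp}$ (Phase~I), the lower bound on successful placement of small-deficit \nitems obtained from \Cref{lem:few-bundles,lem:small-excess-has-room}, and the accounting bound on large-deficit mass from \Cref{lem:large-excess-not-worthwhile}. I would then optimize the parameters $\alpha, \beta$ to balance the two sources of loss and multiply in the factor-$4$ loss from \Cref{only-p-or-n} and the LP relaxation bound from \Cref{bundle-lp-gap} to get $32$.

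Concretely, I would fix a threshold $\beta \in (0,1)$ and partition the LP mass as $V_P := \sum_{j,p} v_{pj}\, x_{pjp}$, $V_{Ns} := \sum_{j,p} \sum_{i \in N \setminus L^\beta_{jp}} v_{ij}\, x_{ijp}$, and $V_{Nl} := \sum_{j,p} \sum_{i \in L^\beta_{jp}} v_{ij}\, x_{ijp}$, so that the LP objective is $V_P + V_{Ns} + V_{Nl}$. \Cref{lem:large-excess-not-worthwhile} gives $V_{Nl} \leq V_P/\beta$, hence $\mathrm{LP} \leq (1 + 1/\beta)\, V_P + V_{Ns}$. For the algorithm's expected value, the \pitem contribution is exactly $V_P$, and for each small-deficit triple $(i,j,p)$ the events $\calE_1,\calE_2,\calE_3$ are independent by construction (they depend on the coin flips for opening bundle $jp$, for including $jp$ in $S_i$, and for including other $j'p'$ in $S_i$), so \Cref{lem:few-bundles} and \Cref{lem:small-excess-has-room} compose to give an allocation probability at least $c\, x_{ijp}$ with $c := \alpha(1-\alpha)\bigl(1 - \tfrac{\alpha}{1-\beta}\bigr)$. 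Linearity of expectation then yields $\E[\mathrm{Alg}] \geq V_P + c\, V_{Ns}$.

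The approximation ratio $\mathrm{LP}/\E[\mathrm{Alg}]$ is therefore at most the maximum, over nonnegative $V_P, V_{Ns}$, of $\bigl((1+1/\beta)V_P + V_{Ns}\bigr)/(V_P + c\, V_{Ns})$, which is simply $\max\{1+1/\beta,\; 1/c\}$. Plugging in $\alpha=0.3$ and $\beta=1/7$ gives $1+1/\beta = 8$ and $c = 0.21 \cdot (1 - 0.3/(6/7)) = 0.21 \cdot 0.65 = 0.1365$, so $1/c < 8$ and the ratio is at most $8$. Finally, chaining $\mathrm{OPT} \leq 4 \cdot \mathrm{OPT}_{\mathrm{bundle}}^{\mathrm{unamb}} \leq 4 \cdot \mathrm{LP} \leq 32 \cdot \E[\mathrm{Alg}]$, where the first inequality is \Cref{only-p-or-n} and the second is \Cref{bundle-lp-gap}, completes the proof.

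The only real subtlety is the independence argument needed to combine \Cref{lem:few-bundles,lem:small-excess-has-room} multiplicatively: since \Cref{lem:small-excess-has-room} lower-bounds $\Pr[\calE_4 \mid \calE_1 \land \calE_2 \land \calE_3]$ (not $\Pr[\calE_4]$), the multiplication goes through cleanly; the remainder is just choosing $\beta$ to equalize the two terms in the $\max$.
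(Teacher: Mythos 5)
Your proof is correct and follows essentially the same approach as the paper's: bound the \pitem contribution exactly, use \Cref{lem:few-bundles,lem:small-excess-has-room} for small-deficit \nitems, discount large-deficit mass via \Cref{lem:large-excess-not-worthwhile}, optimize $\alpha,\beta$ to get a factor $8$ against \eqref{bundle-LP}, and multiply by the factor $4$ from \Cref{only-p-or-n,bundle-lp-gap}. The only (harmless) difference is cosmetic: you bound the LP from above by $(1+1/\beta)V_P + V_{Ns}$ and compare componentwise, whereas the paper substitutes the $V_{Nl}\le V_P/\beta$ bound into the algorithm's lower bound to get $(1-\gamma/\beta)V_P + \gamma(V_{Ns}+V_{Nl})$; both yield the same constant.
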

\begin{proof}
Let $\beta\in[0,1]$ be some constant to be determined and let $\gamma = \gamma(\alpha,\beta):=\alpha \cdot (1-\alpha)\cdot \left(1-\frac{\alpha}{1-\beta}\right)$. 
Denote $N_{jp}$ by the set of \nitems allocated to bundle $jp$ by the algorithm. 
By \Cref{lem:small-excess-has-room,lem:few-bundles} we have for bundle $jp$ and \nitem $i\notin L^{\beta}_{jp}$ that
\begin{align*}
\Pr[i\in N_{jp}] = 
\pr{\calE_4 \,\,\middle|\,\, \bigwedge_{\ell=1}^{3} \calE_\ell}
\pr{\bigwedge_{\ell=1}^3\calE_\ell}
%
& \geq \left(1-\frac{\alpha}{1-\beta}\right)\cdot \alpha\cdot (1-\alpha)\cdot x_{ijp} = \gamma\cdot x_{ijp}.
\end{align*}

Therefore, by linearity of expectation and \Cref{lem:large-excess-not-worthwhile}, the expected value of the (feasible) random allocation of \Cref{alg:offline-rounding} is at least 
\begin{align*}
    & \sum_{j,p} v_{pj}\; x_{pjp} + \gamma\sum_{i,j,p: i\neq p} v_{ij}\; x_{ijp} - \gamma \sum_{j,p}\sum_{i\in L^{\beta}_{jp}} v_{ij} \; x_{ijp} \geq \left(1-\frac{\gamma}{\beta}\right)\; \sum_{j,p} v_{pj}\; x_{pjp} + \gamma \; \sum_{i,j,p: i\neq p} v_{ij}\; x_{ijp}. 
\end{align*}
So, this algorithm's output has value at least a $\min\{1-\frac{\gamma}{\beta},\;\gamma\}$ fraction of the optimal LP value; i.e., it is a $1/\min\{1-\frac{\gamma}{\beta},\;\gamma\}$-approximation.
Taking $\alpha \approx 0.3$ and $\beta\approx 0.156$ (optimized by an off-the-shelf numerical solver) yields a ratio of $1/0.13<8$.
The theorem then follows from \Cref{bundle-lp-gap} and \Cref{only-p-or-n}.
\end{proof}

\subsection{Extension: adding budget constraints}\label{sec:side-constraints}
Before moving on to our online algorithms, we note that the LP-based approach 
has the benefit of allowing to incorporate (some) additional constraints seamlessly. 
To illustrate this, we show in 
\Cref{app:offline} that our LP and algorithm, with minor modifications, allow to 
approximate allocation problems with both the average-value constraint and $O(1)$ many budget constraints (for every buyer), corresponding to different resources.
More formally, for a cost function $\ell$ (e.g., corresponding to storage, time, or other costs), each buyer $j$ has some budget $B^{(\ell)}_j$, and the $\ell$-cost of allocation to buyer $j$ must not exceed this budget. 
That is, for $x_{ij}\in \{0,1\}$ an indicator for item $i$ being allocated to buyer $j$, we have
\begin{gather}
 \forall j,\;\;\; \ell\text{-cost}_j = \sum_{i} \ell_{ij} \; x_{ij} \leq B^{(\ell)}_{j}. \label{eq:budget}
\end{gather}
The \emph{small-cost} assumption (a.k.a.~the small-bids assumption for online AdWords~\cite{mehta2007adwords}) stipulates that no particular item has high cost compared to the budget, i.e. $\max_{ij} \ell_{ij}/B^{(\ell)}_j \leq \eps \to 0$. This is motivated by the motivating application for online AdWords, where buyers (advertisers) typically provision a large enough budget to allow them to pay for many ads to be displayed (i.e., for many items to be allocated to the buyer).

\begin{restatable}{theorem}{multiconstraint}\label{thm:multiconstant}
There exists a constant-approximate algorithm for \AVA and any constant number of budget constraints (for every buyer) subject to the small-bids assumption.
\end{restatable}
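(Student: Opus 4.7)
The plan is to augment \eqref{bundle-LP} with two families of linear constraints corresponding to the $O(1)$ budget constraints. For each buyer $j$ and cost function $\ell$, add the \emph{total-budget} constraint
\[
    \sum_{i,p}\ell_{ij}\,x_{ijp}\;\leq\;B_j^{(\ell)},
\]
and for each buyer $j$, \pitem $p$, and cost $\ell$, add the \emph{per-bundle} budget constraint
\[
    \sum_{i}\ell_{ij}\,x_{ijp}\;\leq\;B_j^{(\ell)}\cdot x_{pjp}.
\]
Both are satisfied by the indicator variables $Y_{ijp}$ of any realization of a bundling-based allocation that respects the budgets (for the per-bundle version: when $Y_{pjp}=0$ the sum vanishes, and when $Y_{pjp}=1$ all items in bundle $jp$ are allocated to buyer $j$, so their $\ell$-cost is at most the total $\ell$-cost $\leq B_j^{(\ell)}$). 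Thus, as in \Cref{bundle-lp-gap}, the augmented LP is a relaxation of the optimal bundling-based allocation respecting all budgets, is polynomially-sized, and can be solved in polynomial time. \Cref{lem:bundling,only-p-or-n} apply unchanged, so up to a constant factor we may still assume an unambiguous instance.

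I would then modify \Cref{alg:offline-rounding} only in \Cref{line:permissible-test}, strengthening the check to additionally require that adding $i$ to bundle $jp$ does not cause any of buyer $j$'s $O(1)$ budgets (which already include the contribution of \pitem $p$ and the previously-allocated \nitems in bundle $jp$) to be exceeded. This trivially guarantees output feasibility. Under small bids, since any single item has cost at most $\eps B_j^{(\ell)}$, it suffices to maintain the invariant that before adding $i$, the $\ell$-cost is at most $(1-\eps)B_j^{(\ell)}$.

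The analysis mirrors \Cref{thm:offline-rounding}. Keep events $\calE_1,\calE_2,\calE_3$ and extend $\calE_4$ to also require, for every $\ell$, that the $\ell$-budget of $j$ is not violated by adding $i$ to $jp$. Coupling with the imaginary over-allocation algorithm $\calA'$ from \Cref{lem:small-excess-has-room}, whose allocation pointwise dominates that of our algorithm, it suffices to bound $\calA'$'s total $\ell$-cost to $j$. Using that $\Pr[i'\in N'_{jp}\mid \calE_1]=\alpha\,x_{i'jp}/x_{pjp}$ and $\Pr[i'\in N'_{jp'}\mid\calE_1]=\alpha\,x_{i'jp'}$ for $p'\neq p$ (by independence of bundle-openings across distinct \pitems), the conditional expected $\ell$-cost to $j$ is
\[
    \alpha\!\sum_{p'\neq p}\sum_{i'}\ell_{i'j}\,x_{i'jp'}\;+\;\alpha\sum_{i'}\frac{\ell_{i'j}\,x_{i'jp}}{x_{pjp}}\;\leq\;2\alpha\,B_j^{(\ell)},
\]
where the first term uses the total-budget constraint and the second uses the per-bundle constraint. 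Markov then gives that this cost exceeds $(1-\beta)B_j^{(\ell)}$ with probability at most $2\alpha/(1-\beta)$, and small bids ($\eps\leq\beta$) ensure that one more item of $\ell$-cost $\leq\eps B_j^{(\ell)}$ still fits. A union bound over the $O(1)$ cost functions together with the permissibility bound from \Cref{lem:small-excess-has-room} yields $\Pr[\calE_4\mid\calE_1\land\calE_2\land\calE_3]\geq 1-O(1)\cdot\alpha/(1-\beta)$, still a positive constant for suitably chosen $\alpha,\beta$. Combining this with \Cref{lem:few-bundles,lem:large-excess-not-worthwhile} (both untouched by the budget extension) shows that each \nitem outside the large-deficit set is allocated to its LP-intended bundle with probability $\Omega(x_{ijp})$, giving a constant-factor approximation of the LP value and hence of the optimum.

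The main obstacle is that naively conditioning on $\calE_1$ inflates the expected $\ell$-cost from bundle $jp$ by a factor of $1/x_{pjp}$, which can be arbitrarily large; the per-bundle LP constraint $\sum_i\ell_{ij}x_{ijp}\leq B_j^{(\ell)}x_{pjp}$ is introduced precisely to cancel this factor via Bayes' rule and restore the $O(\alpha)$ bound on conditional expectation needed for Markov to succeed.
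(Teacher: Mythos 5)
Your proposal is correct and follows essentially the same route as the paper: the same two families of LP constraints (per-buyer and per-bundle budgets), the same strengthened feasibility check in \Cref{line:permissible-test}, and the same coupling with the over-allocating algorithm $\calA'$ followed by a Markov bound and a union bound over the $O(1)$ cost functions. The only cosmetic difference is that you threshold at $(1-\beta)B_j^{(\ell)}$ and invoke $\eps\leq\beta$ to leave room for one more small-bid item, whereas the paper thresholds directly at $(1-\eps)B_j^{(\ell)}$; both choices give the same asymptotic bound.
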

The same arguments in this section extend to our online algorithms, but are omitted for brevity.
\color{black}

\section{Online  Algorithms: Approximating the Online Optimum}\label{sec:OPTon}

In this section and the next we study \AVA in the online i.i.d.~setting (see \Cref{sec:prelims} for definition and notation). 
Specifically, in this section we provide a polynomial-time online algorithm which provides a constant approximation of the optimal online algorithm.

First, by \Cref{lem:online-bundling}, we have that the optimal online algorithm is approximated within a factor two by a
bundling-based online algorithm which is \emph{committed}.
As we will show, the following LP provides a relaxation for the value of the best such online algorithm.
Our LP consists of variables $x_{ijp}$ for each item type $i\in[m]$, buyer $j\in [n]$ and item type $p$ such that $(p,j)$ is a \pedge.
\begin{align}
\max \quad & \sum_{i,j,p} v_{ij}\; x_{ijp} \tag{OPTon-Bundle-LP}
\label{opton-bundle-LP} \\
\textrm{s.t.} \quad & \sum_i (\rho_{j} - v_{ij}) \; x_{ijp} \leq 0 \label{opton-cons:RoS} & \forall \text{ \pedge type } (p,j)  \\
& \sum_{j,p} x_{ijp} \leq  q_i\cdot T & \forall \textrm{ item type }i \label{opton-cons:item}\\
& x_{ijp} \leq x_{pjp}\cdot q_i\cdot T & \forall \textrm{ \nedge type } (i,j), \textrm{\pedge type }(p,j) \label{opton-cons:bundle-defined-by-p-item}\\
& x_{p'jp}  = 0 & \forall \textrm{ \pedge types }(p,j) \neq (p',j) \label{opton:unique-p-item-per-bundle} \\
& x_{ijp}\geq 0 & \forall \textrm{ item type }i, \textrm{ \pedge type }(p,j) \nonumber
\end{align}
\begin{lemma}\label{lem:opton-LP}
\eqref{opton-bundle-LP} has value which is at least half the expected value of any online \AVA algorithm under i.i.d.~arrivals (from the same distribution used in the LP), where item type $i$ is drawn with probability
$q_i$.
\end{lemma}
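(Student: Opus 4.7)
My plan is to show that for any online i.i.d.\ \AVA algorithm $\calA$, one can exhibit a feasible solution to \eqref{opton-bundle-LP} of value at least $\tfrac{1}{2}\,\ex{v\cdot x^{\calA}}$, where $x^{\calA}$ is the random allocation produced by $\calA$. A natural first step is to invoke \Cref{lem:online-bundling} realization by realization: this yields a committed bundling-based online algorithm $\calA'$ with $\ex{v\cdot x^{\calA'}} \geq \tfrac{1}{2}\,\ex{v\cdot x^{\calA}}$. It therefore suffices to construct, from $\calA'$, a feasible LP solution whose objective equals $\ex{v\cdot x^{\calA'}}$, in analogy with the proof of \Cref{bundle-lp-gap} in the offline setting.

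\textbf{Construction and the easy constraints.} Given $\calA'$, for each \pedge type $(p,j)$ let $N_{jp}$ be the random number of $(j,p)$-bundles opened by $\calA'$, and for each \nedge type $(i,j)$ let $M_{ijp}$ be the random number of type-$i$ arrivals placed by $\calA'$ into some $(j,p)$-bundle. Set $x_{pjp} := \ex{N_{jp}}$, $x_{ijp} := \ex{M_{ijp}}$, and $x_{p'jp} := 0$ for every other \pedge type $(p',j)\neq (p,j)$. By linearity of expectation the LP objective equals $\ex{v\cdot x^{\calA'}}$, so only feasibility remains. Constraint~\eqref{opton-cons:RoS} holds since each bundle of $\calA'$ is permissible in every realization; summing over bundles and taking expectation preserves the inequality. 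Constraint~\eqref{opton-cons:item} holds because each arrival is allocated at most once, so the expected number of type-$i$ allocations is at most the expected number of type-$i$ arrivals, namely $q_i T$. Constraint~\eqref{opton:unique-p-item-per-bundle} holds because a permissible bundle contains a single \pedge.

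\textbf{The main obstacle.} The delicate constraint is~\eqref{opton-cons:bundle-defined-by-p-item}, i.e., $x_{ijp} \leq x_{pjp}\cdot q_i \cdot T$. My plan is to decompose it per-timestep as
\[
x_{ijp} = \sum_{t=1}^{T} \pr{\text{the $t$-th arrival is of type $i$ and is placed in some $(j,p)$-bundle}}.
\]
Since $\calA'$ is committed and online, the event that some $(j,p)$-bundle is open at time $t^-$ is measurable with respect to the first $t-1$ arrivals and $\calA'$'s internal randomness, both of which are independent, under i.i.d.\ sampling, of the \emph{type} of the $t$-th arrival. Consequently each summand is at most $q_i \cdot \pr{\text{some $(j,p)$-bundle is open at time } t^-} \leq q_i \cdot \ex{N_{jp}(t-1)}$, where $N_{jp}(t-1)$ denotes the number of $(j,p)$-bundles opened strictly before time $t$. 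Exchanging the order of summation and charging each opened bundle to its at most $T$ remaining time slots yields $\sum_{t=1}^T \ex{N_{jp}(t-1)} \leq T \cdot \ex{N_{jp}} = T \cdot x_{pjp}$, which combines with the previous display to give the claim. The step I expect to be the most subtle is justifying that conditioning on ``the $t$-th arrival is of type $i$'' does not distort the law of which $(j,p)$-bundles are open at time $t^-$; this is precisely where the i.i.d.\ assumption is used, and it is the only reason the naive product bound $\ex{Y_i \cdot N_{jp}} \leq \ex{Y_i}\cdot \ex{N_{jp}}$ (which need not hold under arbitrary correlations) can be sidestepped.
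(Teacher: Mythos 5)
Your proof follows essentially the same route as the paper's: reduce to a committed bundling-based online algorithm via \Cref{lem:online-bundling}, take marginal expectations $x_{ijp} := \E[M_{ijp}]$ and $x_{pjp} := \E[N_{jp}]$, and verify each constraint of \eqref{opton-bundle-LP}. The one place your write-up diverges in its bookkeeping is Constraint~\eqref{opton-cons:bundle-defined-by-p-item}: the paper charges the $i$-allocations to each \emph{opened copy} of bundle $jp$ (so that the expected number of $i$-arrivals after a bundle opens is at most $q_i T$, implicitly a stopping-time/Wald-type argument), whereas you decompose per \emph{timestep} and use that the type of the $t$-th arrival is independent of the algorithm's state at time $t^-$; both correctly exploit the same i.i.d.\ independence fact, and your per-timestep version is arguably a cleaner way to make the argument airtight.
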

\begin{proof}
First, by the Online Bundling Lemma (\Cref{lem:online-bundling}),
the best committed online bundling-based algorithm 2-approximates the best online algorithm.
We therefore turn to showing that \eqref{opton-bundle-LP} is a relaxation of the value of the best committed bundling-based online algorithm, $\calA$.
Let $x_{ijp}$ be the average number of times a copy of item type $i$ is allocated in a copy of bundle $jp$ by $\calA$. 
Constraint \eqref{opton-cons:RoS} follows by linearity of expectation, together with the fact that each opened copy of bundle $jp$ must satisfy the average-value constraint.
Constraint \eqref{opton-cons:item} simply asserts that $i$ is allocated at most as many times as it arrives. Constraint \eqref{opton-cons:bundle-defined-by-p-item} holds for a committed online algorithm (that guarantees feasibility with probability $1$), for the following reason: for every copy of bundle $jp$ opened, no items can be placed in that bundle before it is opened. But the expected number of copies of $i$ to be assigned after any bundle $jp$ is opened is at most the number of arrivals of $i$ after this bundle is opened and is at most $q_i\cdot T$, which upper-bounds the ratio between $x_{ijp}$ and $x_{pjp}$.
All other constraints hold similarly to their counterparts in the proof of \Cref{bundle-lp-gap}.
\end{proof}

\textbf{Note:} Constraint \eqref{opton-cons:bundle-defined-by-p-item} is reminiscent of constraints bounding the optimal online algorithm in the secretary problem literature \cite{buchbinder2014secretary} and prophet inequality literature \cite{papadimitriou2021online}.

The outline of our algorithm is similar to that of \Cref{alg:offline-rounding}, though as it does not have random access to the different items throughout, it first allocates \pedges in the first $T/2$ arrivals, and only then allocates \nedges in the last $T/2$ arrivals.
To distinguish between bundles opened at different times, we now label copies of bundle type $jp$ (i.e., items allocated to buyer $j$ with single \pedge of type $(p,j)$) opened at time $t$ by $jpt$. 
The algorithm's pseudocode is given in \Cref{alg:online-rounding}. 

Note that in our online algorithms (here and in~\Cref{sec:OPToff}), the LPs are based on distributions that can be ambiguous in the sense that each item type in the distribution can have both \pedges and \nedges, and we don't explicitly modify the distribution to make it unambiguous. However, our algorithm effectively makes each realized instance (of $T$ sampled items) unambiguous, as we ignore all \nedges incident to the first $T/2$ items and vice versa for the last $T/2$ items.

\begin{algorithm}[H]
    \caption{Online rounding of bundling-based LP}\label{alg:online-rounding}
	\begin{algorithmic}[1]
	\State Let $\mathbf{x}$ be an optimal solution to \Cref{opton-bundle-LP}
	\ForAll{arrivals $t=1,\dots,T/2$, of type $p$}
	\State Pick a $j$ according to the distribution $\{\frac{x_{pjp}}{q_p\cdot T}\}_{j=1,\dots,n}$ and open bundle $jpt$  
	\EndFor
        \ForAll{arrivals $t^\star =T/2+1,\dots,T$ of type $i$}
    \State $S_{it^\star} \gets \emptyset$
    \ForAll{bundles $jpt$, \textbf{with probability} $\frac{\alpha \cdot x_{ijp}}{x_{pjp}\cdot q_i\cdot T}$}  \label{line:consider-bundle-opton} 
    \If{bundle $jpt$ is open} 
    \State $S_{it^{\star}}\gets S_{it^{\star}} \cup \{jpt\}$
    \EndIf
    \EndFor 
    \If{$|S_{it^{\star}}|=1$}
    \If{$jpt\in S_{it^{\star}}$ remains permissible after adding $it^{\star}$ to it} \label{line:permissible-test-opton}
    \State Allocate $it^{\star}$ to $jpt$
    \EndIf
    \EndIf
    \EndFor
    \end{algorithmic}
\end{algorithm}

\subsection{Analysis}
In what follows we provide a brief overview of the relevant events in the analysis of \Cref{alg:online-rounding}, deferring proofs reminiscent of the analysis of \Cref{alg:offline-rounding} to \Cref{app:opton}.

First, the value obtained from \pedges by \Cref{alg:online-rounding} is clearly half that of the LP, by linearity of expectation.
In particular, we create $x_{pjp}/2$ copies of bundle $jp$ in expectation.
The crux of the analysis is in bounding our gain from \nedges.

To bound the contribution of \nedges, we note that a copy of item $i$ at time $t^{\star}>T/2$, which we denote by $it^{\star}$, is assigned to bundle $jpt$ if and only if all the five following events (overloading notation from \Cref{sec:rounding-offline}) occur:
\begin{enumerate}
\item $\calE_0$: the event that $it^{\star}$ is the realized item at time $t^{\star}$, which happens with probability $q_i$.
\item $\calE_1$: the event that bundle $jpt$ is open, which happens with probability $q_p\cdot \frac{x_{pjp}}{q_p\cdot T} = \frac{x_{pjp}}{T}$.
\item $\calE_2$: the event that the $\Ber(\frac{\alpha \cdot x_{ijp}}{x_{pjp}\cdot q_i\cdot T})$ in \Cref{line:consider-bundle-opton} comes up heads for $jpt$. 
\item $\calE_3$: the event that $S_{it^\star} \setminus \{jpt\}  = \emptyset$. 
\item $\calE_4$: the event that $jpt$ would remain permissible if we were to add $it^\star$ to bundle $jpt$. 
\end{enumerate}

Similarly to the events we studied when anlyzing our offline \Cref{alg:offline-rounding}, the events $\calE_0,\calE_1,\calE_2$ are independent, as are the events $\calE_1,\calE_2,\calE_3$. However, $\calE_3$ is not independent of $\calE_0$ (in particular, it occurs trivially if $\calE_0$ does not). Nonetheless, bounding $\pr{\bigwedge_{\ell=0}^3 \calE_\ell}$ is not too hard. The following lemma, whose proof essentially mirrors that of \Cref{lem:few-bundles}, and is thus deferred to \Cref{app:opton}, provides a bound on the probability of all first four events occurring.
\begin{restatable}{lemma}{simpleOPTonBound}\label{lem:simple-online-bound}
$\Pr[ \calE_0 \land \calE_1 \land \calE_2 \land \calE_3 ] \geq \alpha\cdot (1-\alpha/2)\cdot \frac{x_{ijp}}{T^2}$.
\end{restatable}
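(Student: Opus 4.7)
The plan is to decompose the joint probability using the chain rule as
$\Pr[\calE_0 \land \calE_1 \land \calE_2 \land \calE_3] = \Pr[\calE_0 \land \calE_1 \land \calE_2] \cdot \Pr[\calE_3 \mid \calE_0 \land \calE_1 \land \calE_2]$, and handle the two factors separately. For the first factor, the three events depend on disjoint sources of randomness: $\calE_0$ depends only on the arrival at time $t^\star$; $\calE_1$ depends on the arrival at time $t \leq T/2$ (of a different index, so i.i.d.\ and independent) together with the buyer selected in that iteration; and $\calE_2$ is an independent coin flipped by the algorithm for bundle $jpt$ at time $t^\star$. By i.i.d.\ arrivals and the algorithm's independent coin flips, these three events are mutually independent, so the factor equals $q_i \cdot \tfrac{x_{pjp}}{T} \cdot \tfrac{\alpha \, x_{ijp}}{x_{pjp} \, q_i \, T} = \tfrac{\alpha \, x_{ijp}}{T^2}$.

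The main technical step is bounding $\Pr[\overline{\calE_3} \mid \calE_0 \land \calE_1 \land \calE_2] \leq \alpha/2$. I would apply a union bound: $\overline{\calE_3}$ occurs iff some bundle $j'p't'$ with $t' \neq t$ satisfies $j'p't' \in S_{it^\star}$. The key observation is that for any such $t' \neq t$ (and necessarily $t' \leq T/2 < t^\star$), the event ``$j'p't'$ is open and its coin in Line \ref{line:consider-bundle-opton} comes up heads'' depends only on the arrival and buyer-choice at time $t'$ and on a fresh coin flip, all of which are independent of the randomness driving $\calE_0$, $\calE_1$, and $\calE_2$. Conditional on $\calE_0$ (so that $it^\star$ actually arrives and the coins of Line \ref{line:consider-bundle-opton} are in fact tossed), the probability that $j'p't'$ is open equals $\tfrac{x_{p'j'p'}}{T}$ and the relevant coin comes up heads with probability $\tfrac{\alpha \, x_{ij'p'}}{x_{p'j'p'} \, q_i \, T}$, so $\Pr[j'p't' \in S_{it^\star} \mid \calE_0 \land \calE_1 \land \calE_2] = \tfrac{\alpha \, x_{ij'p'}}{q_i \, T^2}$.

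Summing this over all $(j',p')$ and then over the at most $T/2$ admissible opening times $t' \neq t$, and applying LP constraint \eqref{opton-cons:item} ($\sum_{j',p'} x_{ij'p'} \leq q_i T$), gives $\Pr[\overline{\calE_3} \mid \calE_0 \land \calE_1 \land \calE_2] \leq \tfrac{T}{2} \cdot \tfrac{\alpha}{q_i T^2} \cdot q_i T = \tfrac{\alpha}{2}$. Combining this with the first factor yields the claimed bound $\alpha(1-\alpha/2) \cdot \tfrac{x_{ijp}}{T^2}$.

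The subtlety I expect to spend the most care on is the conditioning: $\calE_3$ is not independent of $\calE_0$ (if $it^\star$ does not realize, no coins are tossed at time $t^\star$ and $\calE_3$ holds vacuously), so one cannot simply multiply four marginal probabilities as in \Cref{lem:few-bundles}. The fix is that \emph{conditional on} $\calE_0$ the set $S_{it^\star}$ is well-defined, and restricting attention to bundles at times $t' \neq t$ cleanly decouples $\calE_3$ from $\calE_1$ and $\calE_2$, reducing the analysis to a union bound controlled by the same LP constraint used in the offline analysis.
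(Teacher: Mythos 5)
Your proposal is correct and follows essentially the same route as the paper: factor out $\Pr[\calE_0\land\calE_1\land\calE_2]=\alpha x_{ijp}/T^2$ via independence of disjoint randomness (the paper's Fact~\ref{opton-trivial-Si-prob}), then bound $\Pr[\overline{\calE_3}\mid\calE_0\land\calE_1\land\calE_2]\le\alpha/2$ using a union bound over bundles $j'p't'$ with $t'\ne t$ and LP constraint~\eqref{opton-cons:item}. The only cosmetic difference is that the paper first bounds the unconditional $\Pr[\overline{\calE_3}]\le\alpha q_i/2$ via $\Pr[X>0]\le\E[X]$ and then converts to a conditional bound, whereas you carry out the union bound directly conditional on $\calE_0$; both exploit the identical independence structure.
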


As with our offline \Cref{alg:offline-rounding}, the challenge in the analysis is due to possible negative correlations between $\calE_4$ and $\calE_3$. Similarly, we overcome this challenge of negative correlations, provided $(i,j)$ has small deficit compared to $(p,j)$'s excess, by coupling with an algorithm with no such correlations. 
(We address large-deficit $(i,j)$ later.) 
The obtained syntactic generalization of \Cref{lem:small-excess-has-room}, whose proof is deferred to \Cref{app:opton}, is the following.
  
\begin{restatable}{lemma}{optonroom}\label{lem:opton-small-excess-has-room}
Let $\beta\in [0,1]$.
If $i,j,p$ are such that $\rho_{j}-v_{ij}\leq \beta\cdot (v_{pj}-\rho_j)$, then 
$$\Pr[\calE_4 \mid \calE_0 \land \calE_1 \land
  \calE_2 \land \calE_3] \geq 1-\frac{\alpha}{2(1-\beta)}.$$
\end{restatable}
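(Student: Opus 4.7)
The plan is to mirror the offline argument of \Cref{lem:small-excess-has-room}, accounting for the online two-phase structure and the explicit time indexing. I introduce an imaginary algorithm $\calA'$ that, using the same randomness as \Cref{alg:online-rounding}, allocates every Phase-II arrival $i't^{**}$ into \emph{every} bundle in $S_{i't^{**}}$, ignoring both the $|S_{i't^{**}}|=1$ test and the permissibility test. By this coupling, the set $N'_{jpt}$ of arrivals $\calA'$ places into bundle $jpt$ is a superset of what \Cref{alg:online-rounding} places there, so $\calA'$'s deficit in $jpt$ dominates the real algorithm's realization by realization. Define $\calE'_4$ as the event
\[
\sum_{i't^{**} \in N'_{jpt} \setminus \{it^*\}} (\rho_j - v_{i'j}) \;\leq\; (1-\beta)(v_{pj}-\rho_j).
\]
The small-deficit hypothesis $\rho_j - v_{ij} \leq \beta(v_{pj}-\rho_j)$ then guarantees that even after adding $it^*$, the bundle's total accumulated deficit is at most $(v_{pj}-\rho_j)$, so $\calE'_4 \Rightarrow \calE_4$.

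The key step is a conditional independence claim. Indexing each random coin by the time step it is drawn at, $\calE'_4$ depends only on Phase-II arrivals at times $t^{**}\neq t^*$ and on the Bernoulli coins drawn at those times against bundle $jpt$; whereas $\calE_2$ depends on the coin for bundle $jpt$ at time $t^*$, and $\calE_3$ depends on time-$t^*$ coins for bundles $j'p't'$ with $t'\neq t$ together with the Phase-I outcomes at those other Phase-I times. Conditional on $\calE_0$ (the arrival at $t^*$) and $\calE_1$ (the Phase-I outcome at time $t$), these sets of random variables are disjoint, yielding
\[
\Pr[\calE'_4 \mid \calE_0 \land \calE_1 \land \calE_2 \land \calE_3] \;=\; \Pr[\calE'_4 \mid \calE_0 \land \calE_1].
\]
For the right-hand side, I compute the expected deficit by linearity: conditional on $\calE_0 \land \calE_1$, each \nedge type $(i',j)$ at any Phase-II time $t^{**} \neq t^*$ contributes to $N'_{jpt}$ with probability $q_{i'} \cdot \frac{\alpha x_{i'jp}}{x_{pjp} q_{i'} T} = \frac{\alpha x_{i'jp}}{x_{pjp} T}$. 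Summing over the at most $T/2$ admissible times and over $i'$, and invoking \eqref{opton-cons:RoS} together with \eqref{opton:unique-p-item-per-bundle} (which together give $\sum_{i' \text{ N-edge}} (\rho_j - v_{i'j})\, x_{i'jp} \leq (v_{pj}-\rho_j)\, x_{pjp}$), produces an expected deficit of at most $\tfrac{\alpha}{2}(v_{pj}-\rho_j)$. Markov's inequality then yields $\Pr[\overline{\calE'_4}\mid \calE_0\land \calE_1] \leq \frac{\alpha}{2(1-\beta)}$, and chaining with the independence identity and $\calE'_4\Rightarrow \calE_4$ gives the claimed lower bound.

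The main obstacle I anticipate is the bookkeeping around the conditional independence: one must verify carefully that $\calE_3$'s dependence on Phase-I randomness at times $t'\neq t$ and on time-$t^*$ coins for other bundles is genuinely disjoint from $\calE'_4$'s dependence on Phase-II randomness at times $t^{**}\neq t^*$ and on their coins for bundle $jpt$, and that conditioning on $\calE_1$ does not couple these groups through any shared random variable. Once this disentangling is in place, the improvement from the offline bound $\alpha/(1-\beta)$ to the online bound $\alpha/(2(1-\beta))$ comes transparently from restricting the sum to Phase-II arrivals only, which halves the number of contributing terms.
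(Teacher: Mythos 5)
Your proof is correct and takes essentially the same approach as the paper's: the same imaginary over-allocating algorithm $\calA'$ coupled via shared randomness, the same surrogate event $\calE'_4$ with the $(1-\beta)$ slack, the same conditional-independence reduction (the paper reduces to $\Pr[\calE'_4\mid\calE_1]$, you to $\Pr[\calE'_4\mid\calE_0\land\calE_1]$, which agree since $\calE'_4\land\calE_1$ is independent of $\calE_0$), the same expectation bound via Constraint~\eqref{opton-cons:RoS}, and the same application of Markov. If anything, your argument is slightly more careful than the paper's: you explicitly exclude $it^*$ from the deficit sum and spell out the disjointness of random-coin groups underlying the independence claim, which the paper asserts more tersely.
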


\Cref{lem:opton-small-excess-has-room} and the preceding discussion yield a lower bound on the probability of a copy of item $i$ be allocated to a bundle $jpt$ at time $t^\star$ if $i,j,p$ is in the small deficit case as the above lemma.
For large-deficit items, no such bound holds. However, large-deficit edges contribute a small portion of the allocation's value.
Specifically, \Cref{lem:large-excess-not-worthwhile}, holds for \eqref{opton-bundle-LP} as well, since the only constraint that this lemma's proof relied on was Constraint \eqref{cons:RoS}, which is identical to Constraint \eqref{opton-cons:RoS} in \eqref{opton-bundle-LP}.

We are now ready to bound the approximation ratio of \Cref{alg:offline-rounding}.

\begin{theorem}\label{thm:online-rounding}
\Cref{alg:online-rounding} with $\alpha=0.64$ is a polynomial-time algorithm achieving a $57$-approximation of the optimal online algorithm for \AVA  under known i.i.d.~arrivals.
\end{theorem}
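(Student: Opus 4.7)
The plan is to mirror the analysis of \Cref{thm:offline-rounding}, absorbing the additional losses arising from restricting bundle openings to the first half of the horizon and \nitem allocations to the second half. Write $V_P := \sum_{j,p} v_{pj}\,x_{pjp}$ and $V_N := \sum_{i\neq p,\,j,p} v_{ij}\, x_{ijp}$ for the \pedge and \nedge parts of \eqref{opton-bundle-LP}.

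First I would bound the value captured from \pedges: each arrival at time $t\leq T/2$ of type $p$ opens bundle $jpt$ with probability $x_{pjp}/(q_p T)$, so by linearity of expectation the expected number of copies of bundle type $(j,p)$ opened is $(T/2)\cdot q_p\cdot x_{pjp}/(q_p T) = x_{pjp}/2$, contributing expected value $V_P/2$. Next, for each triple $(i,j,p)$ with $(i,j)$ an \nedge, I would sum the joint probability $\Pr[\bigwedge_{\ell=0}^{4} \calE_\ell]$ over bundle-opening times $t\in[1,T/2]$ and \nedge arrival times $t^\star\in[T/2+1,T]$. \Cref{lem:simple-online-bound} provides $\Pr[\calE_0\land\calE_1\land\calE_2\land\calE_3]\geq \alpha(1-\alpha/2)\, x_{ijp}/T^2$, and \Cref{lem:opton-small-excess-has-room} supplies the further conditional factor $1-\alpha/(2(1-\beta))$ in the small-deficit regime $\rho_j - v_{ij}\leq \beta(v_{pj}-\rho_j)$. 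Summing over the $(T/2)^2$ arrival pairs then yields expected value at least $(\gamma'/4)\cdot v_{ij}\, x_{ijp}$ from each such small-deficit triple, where $\gamma':=\alpha(1-\alpha/2)(1-\alpha/(2(1-\beta)))$; the $\tfrac{1}{4}$ factor precisely captures the two-phase overhead.

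The large-deficit contributions are discarded using \Cref{lem:large-excess-not-worthwhile}, whose proof invokes only Constraint~\eqref{cons:RoS} and therefore transfers verbatim to \eqref{opton-cons:RoS}, giving $\sum_{j,p}\sum_{i\in L^{\beta}_{jp}} v_{ij}\, x_{ijp}\leq V_P/\beta$. Collecting,
\begin{equation*}
\mathbb{E}[\textsc{alg}] \;\geq\; \tfrac{1}{2}V_P + \tfrac{\gamma'}{4}\bigl(V_N - V_P/\beta\bigr) \;\geq\; \min\!\Bigl\{\tfrac{1}{2}-\tfrac{\gamma'}{4\beta},\;\tfrac{\gamma'}{4}\Bigr\}\cdot (V_P+V_N),
\end{equation*}
so \Cref{alg:online-rounding} is a $1/\min\{\tfrac{1}{2}-\gamma'/(4\beta),\,\gamma'/4\}$-approximation of \eqref{opton-bundle-LP}, and invoking \Cref{lem:opton-LP} (a factor-$2$ gap between the LP value and the optimal online algorithm) yields a competitive ratio of $2/\min\{\cdots\}$ to $\textsc{opt}_{\text{on}}$. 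Setting $\alpha=0.64$ and choosing $\beta$ numerically to balance the two arguments of the minimum (via any off-the-shelf solver, analogously to the offline case) certifies that this ratio is at most $57$.

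The main subtlety, already absorbed into the cited lemmas, is the potential negative correlation between $\calE_4$ and $\calE_3$: \Cref{lem:opton-small-excess-has-room} dispatches this in direct analogy with the offline \Cref{lem:small-excess-has-room}, by coupling \Cref{alg:online-rounding} with an imaginary algorithm that over-allocates $i$ to every bundle it considers so that per-bundle randomness becomes independent across bundles in $S_{it^\star}$, and then applying Markov's inequality to the total deficit of \nitems dropped into the target bundle---an inequality powered precisely by Constraint~\eqref{opton-cons:RoS}. Everything else is bookkeeping of the constants $(\alpha,\beta)$.
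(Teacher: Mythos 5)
Your proof mirrors the paper's argument step for step: half the arrivals open bundles, half carry \nedges, \Cref{lem:simple-online-bound,lem:opton-small-excess-has-room} control the per-pair allocation probability, \Cref{lem:large-excess-not-worthwhile} discards large-deficit triples, and \Cref{lem:opton-LP} supplies the final factor of two. That is exactly the paper's route, and your bookkeeping is correct.

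One discrepancy with the paper's own write-up is worth flagging, because it works in your favor: your $\gamma' := \alpha(1-\alpha/2)\bigl(1-\tfrac{\alpha}{2(1-\beta)}\bigr)$ is what the two cited lemmas actually give, since \Cref{lem:simple-online-bound} has leading factor $\alpha$, not $\alpha/2$. The paper's proof writes $\gamma := \tfrac{\alpha}{2}(1-\alpha/2)\bigl(1-\tfrac{\alpha}{2(1-\beta)}\bigr)$, an extra $\tfrac12$ that does not come from the lemmas and appears to be a typo (its display also has a stray outer $\tfrac12$ that does not propagate to the claimed inequality). Because of this, the paper's reported $\beta\approx 0.0766$ does \emph{not} balance your (correct) expression---there $\tfrac12-\gamma'/(4\beta)$ is negative---but rebalancing at $\beta\approx 0.16$ with $\alpha=0.64$ gives $\gamma'\approx 0.27$, $\min\bigl\{\tfrac12-\tfrac{\gamma'}{4\beta},\,\tfrac{\gamma'}{4}\bigr\}\approx 0.067$, and hence a ratio of $2/0.067\approx 30$, comfortably below the stated $57$. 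So the theorem holds with your constants, and in fact your derivation supports a tighter constant than the paper claims.
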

\begin{proof}
That the algorithm runs in polynomial time follows from its description, together with the LP \eqref{opton-bundle-LP} having polynomial size (in the distribution size).
The analysis is essentially identical to that of \Cref{thm:offline-rounding}, with the following differences. First, we recall that the expected number of copies of bundle $jp$ opened is $\frac{T}{2}\cdot q_p\cdot \frac{x_{pjp}}{q_p\cdot T} = \frac{1}{2}\; x_{pjp}$. Next, by \cref{lem:simple-online-bound,lem:opton-small-excess-has-room}, the probability that copy $it^\star$ of small-deficit item $i$ for bundle $jpt$ is allocated to it is at least $\gamma\cdot \frac{x_{ijp}}{T^2}$, for $\gamma = \gamma(\alpha,\beta) := \frac{\alpha}{2}\cdot \left(1-\frac{\alpha}{2}\right)\cdot \left(1-\frac{\alpha}{2(1-\beta)}\right)$. 
Again, linearity of expectation and summation over all $(t,t^\star)\in [T/2]\times (T/2,T]$ in combination with \Cref{lem:large-excess-not-worthwhile} implies that for any $\beta\in [0,1]$, the gain of \Cref{alg:online-rounding} is at least 
\begin{align*}
    & \frac{1}{2}\left(\sum_{j,p} v_{pj}\; x_{pjp} + \frac{\gamma}{4} \sum_{i,j,p: i\neq p} v_{ij}\; x_{ijp} - \frac{\gamma}{4} \sum_{j,p}\sum_{i\in L^\beta_{jp}} v_{ij} \; x_{ijp}\right) \\
    \geq & \left(\left(\frac{1}{2}-\frac{\gamma}{4\beta}\right)\; \sum_{j,p} v_{pj}\; x_{pjp} + \frac{\gamma}{4} \; \sum_{i,j,p: i\neq p} v_{ij}\; x_{ijp}\right). 
\end{align*}
Therefore, by \Cref{lem:opton-LP}, \Cref{alg:online-rounding} yields a $2/\min\{\frac{1}{2}-\frac{\gamma}{4\beta},\;\frac{\gamma}{4}\}$-approximation.
This expression is optimized by $\alpha \approx 0.64$ and $\beta\approx 0.0766$, yielding a ratio of $\approx \frac{2}{0.0355}<57$, as claimed.
\end{proof}

\paragraph{Acknowledgments.} We thank the anonymous reviewers for useful comments and suggestions, which improved the paper's presentation.

{\small
\bibliographystyle{alpha}
\bibliography{abb,bib}
}

\appendix
\section{Online Algorithms: Approximating the Offline Optimum}\label{sec:OPToff}

In this section we look at the lower and upper bounds of the competitive ratio for online algorithms, i.e. the approximation of the ex-post optimum allocation's value, and we consider both the adversarial and i.i.d. cases.

\paragraph{Adversarial arrival.} In this setting, we note that no online algorithm can be $o(T)$-competitive. To see this, consider the unit-$\rho$ instance where the first $T-1$ arriving items have value $1-\eps$ for all $n=T$ buyers, followed by a single item at the end with value $1+ \eps T$ for a single adversarially chosen buyer and value $0$ for all other buyers. Any online algorithm cannot allocate any of the first $T-1$ items due to the average-value constraint, and thus can only get value $1+\eps T$ from the last item. In contrast, the ex-post optimum can allocate all items to one buyer and collect value $T+1-\eps$. On the other hand, a competitive ratio of $T$ is trivial to achieve for online \AVA, by simply allocating any 
item $i$ with a \pedge $(i,j)$ greedily to the buyer $j$ yielding the highest value. This is a feasible allocation and has value equal to the highest-valued edge in the $T$-item instance, which is obviously at least a $1/T$ fraction of the optimal allocation's value.\\

The rest of this section will therefore be dedicated to \AVA with i.i.d.~arrivals, as in~\Cref{sec:OPTon}, but now focusing on approximating the ex-post optimum. 
We start with the following result lower bounding the competitive ratio.

\begin{lemma}\label{lem:online-iid-hard}
There exists a family of uniform online i.i.d.~unambiguous unit-$\rho$ \AVA instances with $n=m=T\geq 2$ growing, 
on which every online algorithm's approximation ratio of the ex-post optimum is at least $\Omega\left(\frac{\ln n}{\ln \ln n}\right)=\Omega\left(\frac{\ln m}{\ln \ln m}\right)=\Omega\left(\frac{\ln T}{\ln \ln T}\right)$.
\end{lemma}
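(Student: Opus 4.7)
The plan is to construct a balls-and-bins-inspired family of unit-$\rho$ \AVA instances on which the ex-post optimum exploits the anti-concentration of the maximum bin load, while any online algorithm is blocked from doing so by the interim average-value constraint. For each $T\ge 2$, I take $n=T$ buyers with $\rho_j=1$ and $m=T$ item types, all equally likely with probability $q=1/T$: one shared \pitem type $P$ with value $v_{Pj}=2$ for every buyer $j$ (so $P$'s excess is $1$), and, for each $i\in[T-1]$, a dedicated \nitem type $N_i$ of value $1-\eps$ for buyer $i$ and $0$ for every other buyer, where $\eps:=1/\log^2 T$. A total of $T$ items arrive, and buyer $T$ is a harmless placeholder.

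For the ex-post optimum, let $Y$ count $P$-arrivals and $X_i$ count $N_i$-arrivals for $i\in[T-1]$. Since $Y\sim\Bin(T,1/T)$, we have $\Pr[Y\geq 1]\geq 1-1/e$. Conditioned on $\{Y\geq 1\}$, the sequence of N-arrivals is essentially $\Theta(T)$ balls thrown i.i.d.\ into $T-1$ bins, so the standard Poissonization/anti-concentration argument yields $\max_i X_i=\Omega(\log T/\log\log T)$ with constant probability. Since $\eps\cdot \max_i X_i=o(1)$, the bundle consisting of one $P$-arrival together with all $X_{i^*}$ copies of $N_{i^*}$ at buyer $i^*:=\argmax_i X_i$ is permissible, yielding value $\ge (1-\eps)X_{i^*}=\Omega(\log T/\log\log T)$. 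Hence $\ex{\mathsf{OPT}}=\Omega(\log T/\log\log T)$.

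For the online upper bound, the key observation is that $v_{N_ij}\leq 1-\eps<\rho_j$ for every edge $(N_i,j)$, so no $N_i$-arrival can be allocated while maintaining the interim average-value constraint unless its buyer already holds a $P$-item in its bundle. Consequently, each $P$-item that an online algorithm ever allocates must be committed to some buyer $i_0$ at its arrival time $t$, and only \emph{future} arrivals of $N_{i_0}$ can subsequently be added to that bundle. By i.i.d., the expected number of future $N_{i_0}$-arrivals is $(T-t)q=(T-t)/T\leq 1$, \emph{independent} of the choice of $i_0$ and of all past observations. Summing over allocated $P$-items and applying linearity,
\[
\ex{\text{online}}\;\leq\;\ex{\sum_{s} \bigl(\,2\,+\,(1-\eps)(T-\tau_s)/T\,\bigr)}\;\leq\;3\,\ex{Y}\;=\;O(1),
\]
which, combined with the previous bound, gives $\ex{\mathsf{OPT}}/\ex{\text{online}}=\Omega(\log T/\log\log T)$.

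The main obstacle I anticipate is ruling out adaptive online strategies that place each $P$-item with a buyer that has been ``heavily loaded'' in the past. Although past $N_i$-arrivals are irretrievably rejected (for want of a compensating $P$-item at that time), one must verify that conditioning on past information does not raise the conditional expectation of \emph{future} $N_{i_0}$-arrivals beyond $O(1)$ for any adaptive rule---this is immediate from i.i.d., but worth stating carefully. The remaining ingredient is the balls-and-bins anti-concentration bound itself, which is standard and can be established via Poissonization together with a second-moment argument.
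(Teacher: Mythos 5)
Your proposal is essentially identical to the paper's proof. Both construct the same balls-and-bins instance: $T$ buyers with $\rho = 1$, one shared $P$-item type of value roughly $2$ for everyone, $T-1$ dedicated $N$-item types of value $1-\eps$ for a single buyer each, all drawn uniformly with probability $1/T$. Both proofs lower-bound OPT by conditioning on a $P$-arrival and invoking anti-concentration for the maximum bin load, and both upper-bound any online algorithm by observing that the interim average-value constraint forces every $N$-item placed into a bundle to arrive strictly after that bundle's $P$-item, so each opened bundle collects $O(1)$ $N$-items in expectation. The only cosmetic difference is your choice $\eps = 1/\log^2 T$ versus the paper's $\eps = 1/T$; both make the offline bundle permissible since $\eps\cdot\max_i X_i = o(1)$ either way, so this is immaterial.
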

\begin{proof}
Let $\eps=\frac{1}{T}$.
Consider an instance with $T$ buyers $j_1,\dots,j_T$, where all buyers have $\rho=1$, and $T$ item types. Each item type $i\in [T-1]$ is an \nitem, with value $1-\eps$ for buyer $j_i$ and value zero for all others. (So, buyer $j_T$ has zero value for all \nitems.) The single \pitem type $T$ has value $1+\eps T$ for all buyers.
The $T$ arrival types are drawn uniformly from these $T$ types, and consequently there is a single arrival of each type in expectation.
Now, an online algorithm (that guarantees average-value constraints in any outcome) can only allocate \nitems to a buyer after the buyer was allocated a \pitem. But since each \nitem appears only once in expectation (and hence at most once after the arrival of a \pitem type), each allocation of a \pitem (and \nitems) to a buyer yields expected value at most $1+\eps T + 1 - \eps =  3-\eps$ to an online algorithm. Since only one \pitem arrives in expectation, an online algorithm accrues value at most $3-\eps$ in expectation on this instance family.

In contrast, the event $\calE$ that a single \pitem arrived satisfies $\Pr[\calE] = T\cdot \frac{1}{T}\cdot (1-\frac{1}{T})^{T-1} \geq (1-\frac{1}{T})^{T} \geq \frac{1}{4}$. Conditioned on $\calE$, we have a multi-nomial distribution for the number of arrivals $A_i$'s of the \nitem types. 
Therefore, by standard anti-concentration arguments for the classic balls and bins process~\cite{AzarBKU99}, we have
\begin{align*}\pr{\max_i A_i \geq \frac{\ln T}{\ln \ln T} - 1 \,\,\middle|\,\, \calE} = 1-o(1).
\end{align*}
Consequently, the offline algorithm which, if event $\calE$ occurs, allocates the single \pitem and all copies of $i^\star := \arg\max_i A_i$ to $j_{i^\star}$ yields expected value at least $\E[\max_i A_i \mid \calE] \cdot \Pr[\calE] = \Omega\left(\frac{\ln T}{\ln \ln T}\right)$.
Consequently, this asymptotic ratio also lower bounds any online algorithm's approximation ratio of the ex-post optimum. The full lemma statement follows, since $n=m=T$.
\end{proof}

\subsection{A matching algorithm assuming constant expected arrivals}
\Cref{lem:online-iid-hard} relied on anti-concentration. 
If the expected number of arrivals $A_i$ of each item type $i$ is at least some constant $\Gamma>0$,
 namely $\ex{A_i}=q_i\cdot T \geq \Gamma$ 
  (e.g., in  \Cref{lem:online-iid-hard} we had $q_i\cdot T=1$ for every $i$), then this anti-concentration is tight. 
  In particular, we have the following, by standard Chernoff bounds and union bound (see \Cref{app:optoff} for proof).
\begin{restatable}{observation}{Aiconcentrated}\label{Ai-concentrated}
If $\ex{A_i}\geq \Gamma$ for all $i\in [m]$ and $\kappa:=\frac{6}{\min(1,\;\Gamma)}\cdot \frac{\ln T}{\ln \ln T}$, then $$\pr{\max_i A_i \geq \kappa \cdot q_i\cdot T} \leq \frac{1}{ T^2}.$$
\end{restatable}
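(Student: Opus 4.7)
The plan is the standard one-two punch of a sharp Chernoff tail bound per item type followed by a union bound. First I would rephrase the event as $\exists i: A_i \geq \kappa q_i T$ (equivalently, $\max_i A_i/(q_iT) \geq \kappa$), noting that each $A_i \sim \operatorname{Binomial}(T,q_i)$ has mean $\mu_i = q_i T \geq \Gamma$. I would also record the simple bound $m \leq T/\Gamma$, which follows from $\sum_i q_i = 1$ together with $q_i \geq \Gamma/T$; this will govern the size of the union bound.

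Next, for each $i$ I would invoke the sharp multiplicative Chernoff tail bound $\Pr[A_i \geq k] \leq (e\mu_i/k)^k$ (valid once $k \geq e\mu_i$, which holds for our $\kappa$ once $T$ is large enough that $\kappa \geq e$), and set $k=\kappa\mu_i$ to obtain
\[
\Pr[A_i \geq \kappa q_i T] \;\leq\; \left(e/\kappa\right)^{\kappa \mu_i}.
\]
The main computation is to show this is at most $\Gamma/T^3$, so that a union bound over $m \leq T/\Gamma$ types yields the target $1/T^2$. For this I would use the two estimates
\[
\kappa \mu_i \;\geq\; \kappa\Gamma \;=\; \tfrac{6\Gamma}{\min(1,\Gamma)}\cdot\tfrac{\ln T}{\ln\ln T}\;\geq\;\tfrac{6\ln T}{\ln\ln T}
\]
(using $\Gamma/\min(1,\Gamma)\geq 1$ in both regimes $\Gamma\lessgtr 1$), and
\[
\ln(\kappa/e) \;\geq\; \ln(1/\min(1,\Gamma)) + \ln\ln T - \ln\ln\ln T + O(1) \;=\; (1-o(1))\ln\ln T + \ln(1/\min(1,\Gamma)).
\]
Multiplying, the Chernoff exponent $\kappa\mu_i\ln(\kappa/e)$ is at least $6(1-o(1))\ln T + \Omega\!\left(\tfrac{\ln T\cdot \ln(1/\Gamma)}{\ln\ln T}\right)$, which, for $T$ sufficiently large, exceeds $3\ln T + \ln(1/\Gamma)$, as needed.

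Combining the per-type bound $(e/\kappa)^{\kappa\mu_i}\leq \Gamma/T^3$ with the union bound over $m \leq T/\Gamma$ item types then yields $\Pr[\max_i A_i/(q_iT)\geq\kappa] \leq 1/T^2$. The only real obstacle is the careful bookkeeping in the preceding display: the $1/\min(1,\Gamma)$ factor inside $\kappa$ is calibrated precisely to absorb the $1/\Gamma$ blow-up of the union bound when $\Gamma<1$, while remaining strong enough when $\Gamma\geq 1$; splitting into these two cases (or tracking $\min(1,\Gamma)$ uniformly) is routine but requires attention to constants and asymptotic lower-order terms.
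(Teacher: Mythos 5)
Your proposal is correct and follows essentially the same route as the paper: a per-type multiplicative Chernoff tail bound calibrated to give $\Gamma/T^3$, followed by a union bound over the $m \leq T/\Gamma$ item types. The only (cosmetic) difference is the parametrization of the Chernoff bound---you use the $\Pr[A_i \geq k] \leq (e\mu_i/k)^k$ form while the paper uses the $\exp\bigl(-\mu\bigl((1+\delta)\ln(1+\delta)-\delta\bigr)\bigr)$ form and lower-bounds $\ln\kappa$ via the elementary inequality $x/2 \geq \ln x$; both lead to the same per-type bound after the same bookkeeping.
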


We will show that if the distribution satisfies the assumption on all $\ex{A_i}\geq \Gamma=\Theta(1)$, we can show an asymptotically matching upper-bound $O(\frac{\ln T}{\ln \ln T})$ of the competitive ratio. 

Our first ingredient towards this proof will, naturally, be another LP, this time capturing possible anti-concentration of arrivals.
Similar to~\eqref{opton-bundle-LP}, the LP has one variable $x_{ijp}$ for each item type $i\in[m]$, buyer $j\in [n]$ and item type $p$ such that $(p,j)$ is a \pedge.
\begin{align}
\max \quad & \sum_{i,j,p} v_{ij}\; x_{ijp} \tag{OPToff-Bundle-LP} \label{optoff-bundle-LP} \\
\textrm{s.t.} \quad & \sum_i (\rho_{j} - v_{ij}) \; x_{ijp} \leq 0 & \forall \text{ \pedge type } (p,j)   \label{optoff-cons:RoS} \\
& \sum_{jp} x_{ijp} \leq 2\cdot \lceil q_i\cdot T\rceil & \forall \textrm{ item type }i  \label{optoff-cons:nitem}\\
& x_{ijp} \leq x_{pjp}\cdot \lceil q_i\cdot T\cdot \kappa\rceil & \forall \textrm{ \nedge type } (i,j), \textrm{\pedge type }(p,j) \label{optoff-cons:bundle-defined-by-p-item}\\
& x_{p'jp}  = 0 & \forall \textrm{ \pedge types }(p,j) \neq (p',j) \label{optoff:unique-p-item-per-bundle} \\
& x_{ijp}\geq 0 & \forall \textrm{ item type }i, \textrm{ \pedge type }(p,j) \nonumber
\end{align}
\begin{restatable}{lemma}{optoffLP}
\label{lem:optoff-LP}
Fix an \AVA instance with i.i.d.~arrivals satisfying $q_i\cdot T\geq \Gamma=\Theta(1)$ for all~$i\in[m]$. Let $\mathsf{OPT}$ be the ex-post optimal value and let $V[\mathsf{OFF}]$ be the value of \eqref{optoff-bundle-LP}. Then, 
$$\ex{\mathsf{OPT}} \leq O(V[\mathsf{OFF}]).$$ 
\end{restatable}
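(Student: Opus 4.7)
My plan is to produce a feasible solution $(x_{ijp})$ to \eqref{optoff-bundle-LP} whose objective is $\Omega(\ex{\mathsf{OPT}})$, by averaging a near-optimal ex-post bundling-based allocation over the random arrival sequence. By \Cref{lem:bundling}, for every realization $\omega$ there is a bundling-based allocation $\calA^\star(\omega)$ whose value is at least $\mathsf{OPT}(\omega)/2$; let $X_{ijp}(\omega)$ denote the number of copies of item-type $i$ that $\calA^\star(\omega)$ places in bundles of type $jp$.

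The key ingredient for satisfying Constraint~\eqref{optoff-cons:bundle-defined-by-p-item}, whose right-hand side involves the multiplicative factor $\lceil q_iT\kappa\rceil$, is the concentration event $\calE:=\{\,A_i\leq \kappa q_iT\text{ for every item-type }i\,\}$ supplied by \Cref{Ai-concentrated}, which satisfies $\pr{\calE}\geq 1-1/T^2$. I would set $x_{ijp}:=\ex{X_{ijp}\,\mathds{1}[\calE]}$, and verify each LP constraint pointwise and then take expectations. Constraint~\eqref{optoff-cons:RoS} holds because each permissible bundle of type $jp$ individually obeys the RoS inequality, so summing over all such bundles gives $\sum_i(\rho_j-v_{ij})X_{ijp}\leq 0$ pointwise. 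Constraint~\eqref{optoff-cons:nitem} is even easier: $\sum_{jp}X_{ijp}\leq A_i$ always, so $\sum_{jp}x_{ijp}\leq \ex{A_i}=q_iT\leq 2\lceil q_iT\rceil$. Constraint~\eqref{optoff-cons:bundle-defined-by-p-item} is the one where $\calE$ is really used: on $\calE$, $X_{ijp}\leq A_i\leq \kappa q_iT$; moreover $X_{ijp}=0$ whenever no bundle of type $jp$ is open (equivalently $X_{pjp}=0$), so $X_{ijp}\,\mathds{1}[\calE]\leq \lceil\kappa q_iT\rceil\,X_{pjp}\,\mathds{1}[\calE]$ holds pointwise. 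Constraint~\eqref{optoff:unique-p-item-per-bundle} is immediate from the definition of a bundle of type $jp$ as having a single \pitem, namely $p$.

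This solution has LP value
\[
\sum_{ijp}v_{ij}\,x_{ijp}\;=\;\ex{\mathrm{val}(\calA^\star)\cdot\mathds{1}[\calE]}\;\geq\;\tfrac12\,\ex{\mathsf{OPT}}\;-\;\ex{\mathsf{OPT}\cdot\mathds{1}[\bar\calE]},
\]
so the main obstacle is to bound the residual $\ex{\mathsf{OPT}\cdot\mathds{1}[\bar\calE]}$ by a constant fraction of $\ex{\mathsf{OPT}}$. I would do this by combining the pointwise bound $\mathsf{OPT}\leq \sum_i A_i\cdot\max_j v_{ij}$ with higher-moment estimates for $A_i\sim \Bin(T,q_i)$ under the hypothesis $q_iT\geq \Gamma=\Theta(1)$, which give $\ex{A_i^p}=O_{p,\Gamma}((q_iT)^p)$ for every fixed $p$; Hölder's inequality together with $\pr{\bar\calE}\leq 1/T^2$ then yields $\ex{A_i\,\mathds{1}[\bar\calE]}=O(q_iT)\cdot T^{-\Omega(1)}$, a vanishing fraction of $\ex{A_i}$. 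Combined with the matching lower bound on $\ex{\mathsf{OPT}}$ obtained by allocating the highest-value \pedge item as a singleton permissible bundle whenever it arrives (an event of probability $\geq 1-e^{-\Gamma}=\Omega(1)$), this tail bound closes the argument and yields $\ex{\mathsf{OPT}}=O(V[\mathsf{OFF}])$, as required.
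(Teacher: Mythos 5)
Your construction of the LP solution and the feasibility checks mirror the paper's: both set $x_{ijp}$ to a (conditional) expectation of the allocation counts of an ex-post bundling-based solution restricted to the concentration event $\calE$ from \Cref{Ai-concentrated}, and both verify each constraint via pointwise arguments and linearity. (Your pointwise argument for Constraint~\eqref{optoff-cons:bundle-defined-by-p-item} using $X_{ijp}\mathds{1}[\calE]\leq \lceil\kappa q_iT\rceil X_{pjp}\mathds{1}[\calE]$ is clean and essentially what the paper does.)

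Where you diverge is the residual term $\ex{\mathsf{OPT}\cdot\mathds{1}[\bar\calE]}$, and there your argument has a gap. The paper first establishes the deterministic bound $\mathsf{OPT}\leq O(T^2)\cdot V[\mathsf{OFF}]$ --- by combining the Supply Lemma (\Cref{lem:parallel-repetition}) applied to the $T$-copy instance with the observation that the single-copy bundling LP is dominated by \eqref{optoff-bundle-LP} --- and then the residual is immediately $O(T^2)\cdot V[\mathsf{OFF}]\cdot\pr{\bar\calE}\leq O(V[\mathsf{OFF}])$, with the $T^2$ exactly canceling the $T^{-2}$ from \Cref{Ai-concentrated}. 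Your route instead tries to compare the residual directly against $\ex{\mathsf{OPT}}$, and the step ``this tail bound closes the argument'' does not follow from what you have stated. The issue: summing your per-item bound $\ex{A_i\mathds{1}[\bar\calE]}=O(q_iT)\cdot T^{-\Omega(1)}$ against $\max_j v_{ij}$ over the (up to $T/\Gamma$) item types produces $\ex{\mathsf{OPT}\mathds{1}[\bar\calE]}\leq O(T^{-\Omega(1)})\cdot T\cdot v_{\max}$. With Cauchy--Schwarz ($p=2$), the exponent is exactly $T^{-1}$, so the bound is $O(v_{\max})$, which is merely the \emph{same} order as your lower bound $\ex{\mathsf{OPT}}\geq (1-e^{-\Gamma})v_{\max}$ --- not a vanishing fraction of it, and nothing guarantees the constant sits below $\nf12$. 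To actually close you would need $p>2$ in H\"older (giving $T^{-2(1-1/p)}<T^{-1}$), and even then the resulting inequality $\ex{\mathsf{OPT}\mathds{1}[\bar\calE]}\leq \nf14\ex{\mathsf{OPT}}$ only kicks in for $T$ beyond a $\Gamma$- and $p$-dependent threshold, requiring a separate argument (e.g., the trivial $\ex{\mathsf{OPT}}\leq T\,v_{\max}\leq O_\Gamma(1)\cdot V[\mathsf{OFF}]$) for smaller $T$. You also need to restrict $\max_j v_{ij}$ in the crude bound $\mathsf{OPT}\leq\sum_i A_i\max_j v_{ij}$ to \emph{usable} edges --- those $j$ for which some \pedge can cover $i$'s deficit --- since otherwise an unallocatable \nitem with large value would make the crude bound incomparable to $v_{\max}$. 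In short, your overall approach to the residual is a genuine (and more elementary) alternative to the Supply-Lemma route, but as written it asserts a conclusion that does not follow; the paper's $\mathsf{OPT}\leq O(T^2)V[\mathsf{OFF}]$ bound is both necessary to the published proof and cleaner than the moment-estimate route you sketch.
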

\begin{proof}
By~\Cref{bundle-lp-gap}, we can restrict to the optimal ex-post bundling-based solution and just lose a factor of $2$ in the approximation ratio. We start with a trivial upper-bound on the value of $\mathsf{OPT}$ in any outcome of the i.i.d.~arrivals. Consider the instance with exactly one copy of each item type from the support of the distribution. The best bundling-based offline solution for this instance is upper-bounded by~\eqref{bundle-LP} (\Cref{bundle-lp-gap}), and this value is clearly upper bounded by $V[\mathsf{OFF}]$ since the constraints for \eqref{bundle-LP} are tighter than those of \eqref{optoff-bundle-LP}.
Under $T$ i.i.d. arrivals, each item can appear at most $T$ times, and thus by the Supply Lemma (\Cref{lem:parallel-repetition}) applied to the instance with a single occurrence per item type, we find that the following bound holds deterministically.
\[
\mathsf{OPT} \leq O(T^2)\cdot V[\mathsf{OFF}].
\]

Next, let $\calE$ be the event that no item type $i$ has more than $\lceil q_i\cdot T\cdot \kappa\rceil$ arrivals.
	By \Cref{Ai-concentrated}, $\pr{\calE}\geq 1-\frac{1}{T^2}$.
 Conditioned on $\calE$, consider the expected number of times (over the randomness of the i.i.d. arrivals) that the ex-post optimal bundling-based solutions allocate an item of type $i$ to a copy of bundle $jp$, and denote this value by $x_{ijp}$. We will argue that such $x_{ijp}$'s form a feasible solution for~\eqref{optoff-bundle-LP}. Since the expected value of the ex-post optimal bundling-based solution conditioned on $\calE$ is simply $\sum_{i,j,p}v_{ij}\;x_{ijp}$, this immediately gives that 
 \[
 \ex{\mathsf{OPT}\mid\calE}\leq 2\cdot V[\mathsf{OFF}].
 \]

 The proof that $x_{ijp}$ constructed above is feasible follows essentially the same argument as~\Cref{lem:opton-LP}. The average-value constraint~\eqref{optoff-cons:RoS} holds by linearity of expectation because the ex-post (bundling-based) optimum for any outcome satisfies the average-value constraint. Constraint~\eqref{optoff-cons:nitem} holds since the expected times we allocate items of type $i$ cannot exceed $i$'s expected number of occurrences, which is bounded by 
 $\ex{A_i \mid \calE} \leq \frac{\ex{A_i}}{\pr{\calE}} \leq \frac{q_i\cdot T}{1-1/T^2} \leq 2\cdot q_i\cdot T\leq  2\cdot \lceil q_i\cdot T \rceil.$
 Constraint~\eqref{optoff-cons:bundle-defined-by-p-item} holds since whenever a bundle $jp$ is opened in the ex-post optimum for any outcome, conditioned on $\calE$ we have at most $q_i\cdot T\cdot \kappa$ items of type $i$, which is a trivial upperbound on how many items of type $i$ can be allocated to bundle $jp$, and thus cap the ratio between $x_{ijp}$  and $x_{pjp}$.

Combining the above arguments together with linearity of expectation, the lemma follows.
\begin{align*}
\ex{\mathsf{OPT}} & =\ex{\mathsf{OPT}|\calE}\cdot\pr{\calE}+\ex{\mathsf{OPT}|\overline{\calE}}\cdot\pr{\overline{\calE}}\leq O(V[\mathsf{OFF}]). \qedhere
\end{align*}
\end{proof}

We make the simple observation that the two LPs~\eqref{opton-bundle-LP} and~\eqref{optoff-bundle-LP} only differ at the RHS of the constraints, with the most crucial difference being in the constraints upper bounding
$x_{ijp}/x_{pjp}$, where they differ by a factor of $\frac{\lceil q_i\cdot T\cdot \kappa\rceil}{q_i\cdot T}=O(\kappa)$ (using that $\Gamma=\Omega(1)$). As we prove in \Cref{app:optoff},  scaling down any feasible solution of the latter LP by $O(\kappa)$ yields a feasible solution to the former LP, leading to the following observation.
\begin{restatable}{observation}{optonoff}
\label{lem:optonoff}
Fix an \AVA instance with i.i.d. arrivals, satisfying $q_i\cdot T\geq \Gamma=\Theta(1)$ for all item type $i$. Then, $V[\mathsf{OFF}]$ and $V[\mathsf{ON}]$, the values of \eqref{optoff-bundle-LP} and \eqref{opton-bundle-LP} (respectively) satisfy
\[
V[\mathsf{OFF}]\leq O\left(\frac{\ln T}{\ln \ln T}\right)\cdot V[\mathsf{ON}]
\]
\end{restatable}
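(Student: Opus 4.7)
The plan is to take an optimal solution $x^\star$ to \eqref{optoff-bundle-LP} and show that a suitably scaled copy $\widetilde{x} := x^\star / C$ is feasible for \eqref{opton-bundle-LP}, where $C = \Theta(\kappa) = \Theta(\tfrac{\ln T}{\ln\ln T})$. Since the objectives of the two LPs are the same linear functional and are homogeneous, feasibility of $\widetilde{x}$ would immediately yield $V[\mathsf{ON}] \geq V[\mathsf{OFF}]/C$, giving the claimed bound.

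I would check feasibility constraint by constraint, leveraging the observation that the two LPs share the same left-hand sides and differ only in right-hand sides. The constraint \eqref{optoff-cons:RoS} is identical to \eqref{opton-cons:RoS} and is homogeneous (RHS is $0$), so any positive scaling preserves it. The non-negativity constraint and the single-\pitem-per-bundle constraint \eqref{optoff:unique-p-item-per-bundle} (identical to \eqref{opton:unique-p-item-per-bundle}) are preserved trivially. The two remaining constraints are where the LPs differ, and the key calculation is a bound on the ratio of their right-hand sides.

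For \eqref{optoff-cons:nitem} vs \eqref{opton-cons:item}, the RHS ratio is $2\lceil q_i T\rceil / (q_i T) \leq 2(1 + 1/(q_i T)) \leq 2 + 2/\Gamma = O(1)$, using $q_i T \geq \Gamma = \Theta(1)$. For \eqref{optoff-cons:bundle-defined-by-p-item} vs \eqref{opton-cons:bundle-defined-by-p-item}, the RHS ratio is $\lceil q_i T \kappa\rceil / (q_i T) \leq \kappa + 1/(q_i T) \leq \kappa + 1/\Gamma = O(\kappa)$. Taking $C$ to be the maximum of these two ratios, which is $O(\kappa) = O(\tfrac{\ln T}{\ln \ln T})$, makes $\widetilde{x} = x^\star/C$ satisfy both constraints of \eqref{opton-bundle-LP}, completing the argument.

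There is no real obstacle here: the proof is a routine scaling argument once one notices that the only meaningful discrepancy between the two LPs is the factor $\kappa$ appearing in the bundle-defining constraint, and that the assumption $q_i T \geq \Gamma = \Omega(1)$ absorbs the ceiling terms into constants. The only mild subtlety to flag is ensuring the chosen $C$ is a uniform upper bound on all RHS ratios simultaneously, which is immediate since $\kappa$ dominates the $O(1)$ factor for large~$T$.
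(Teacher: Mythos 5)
Your proof is correct and follows essentially the same route as the paper's: scale an optimal solution of \eqref{optoff-bundle-LP} by $O(\kappa)$, observe that the homogeneous constraint \eqref{optoff-cons:RoS} and non-negativity are preserved under scaling, and bound the right-hand-side ratios of the two remaining constraints by $O(1)$ and $O(\kappa)$ respectively using $q_i T \geq \Gamma = \Omega(1)$. Your version is marginally more explicit in the ceiling arithmetic, but there is no substantive difference.
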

In our proof of \Cref{thm:online-rounding}, we showed that \Cref{alg:online-rounding} achieves value at least $\Omega(V[\mathsf{ON}])$. Consequently,~\Cref{lem:optoff-LP,lem:optonoff} imply the following result.
\begin{theorem}
\label{thm:optoffalg}
\Cref{alg:online-rounding} is an $O\left(\frac{\ln T}{\ln \ln T}\right)$-competitive online algorithm for \AVA under $T$ known i.i.d.~arrivals with each item type arriving an expected constant number of times.
\end{theorem}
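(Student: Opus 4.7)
The plan is to chain three quantitative bounds already established in this section, together with a re-reading of the analysis of \Cref{thm:online-rounding}. Let $V[\mathsf{ON}]$ and $V[\mathsf{OFF}]$ denote the optima of \eqref{opton-bundle-LP} and \eqref{optoff-bundle-LP} respectively, and let $\mathsf{ALG}$ denote the (random) value produced by \Cref{alg:online-rounding}. The three ingredients are: (i) $\ex{\mathsf{ALG}} \geq \Omega(V[\mathsf{ON}])$, which can be extracted from the proof of \Cref{thm:online-rounding}; (ii) $V[\mathsf{ON}] \geq \Omega\bigl(\tfrac{\ln\ln T}{\ln T}\bigr)\cdot V[\mathsf{OFF}]$, which is exactly \Cref{lem:optonoff}; and (iii) $V[\mathsf{OFF}] \geq \Omega(1)\cdot \ex{\mathsf{OPT}}$, which is \Cref{lem:optoff-LP}. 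Composing these three inequalities gives $\ex{\mathsf{ALG}} \geq \Omega\bigl(\tfrac{\ln\ln T}{\ln T}\bigr) \cdot \ex{\mathsf{OPT}}$, which is precisely the claimed competitive ratio of $O(\ln T / \ln \ln T)$.

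For step (i), I would point back to the calculation inside the proof of \Cref{thm:online-rounding}: the algorithm's expected value is bounded below (via \Cref{lem:simple-online-bound}, \Cref{lem:opton-small-excess-has-room}, and \Cref{lem:large-excess-not-worthwhile}) by a linear combination of the LP optimum's contributions from \pedges and from small-deficit \nedges, namely $\bigl(\tfrac{1}{2}-\tfrac{\gamma}{4\beta}\bigr)\sum_{jp} v_{pj}\,x_{pjp} + \tfrac{\gamma}{4}\sum_{i\neq p}v_{ij}\,x_{ijp}$, which is a constant fraction of $V[\mathsf{ON}]$. Note that this inequality is purely a rounding statement about the algorithm vs.\ its LP; the fact that \Cref{lem:opton-LP} additionally upper-bounds the online optimum is not used here, so the constant-factor bound against $V[\mathsf{ON}]$ holds on every instance, in particular under the constant-expected-arrivals hypothesis of the present theorem.

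Once (i) is in hand, steps (ii) and (iii) are just the two results preceding the theorem, so the proof is essentially bookkeeping. The one subtlety is to confirm that the assumption $q_i\cdot T \geq \Gamma = \Theta(1)$ for all item types $i$ is the \emph{only} place the constant-expected-arrivals hypothesis enters. It enters in \Cref{lem:optoff-LP} to control $\ex{A_i\mid \calE} \leq 2 q_i T$ under the concentration event $\calE$ of \Cref{Ai-concentrated}, and again in \Cref{lem:optonoff} to bound the ratio $\lceil q_i T \kappa\rceil/(q_i T) = O(\kappa) = O(\ln T/\ln\ln T)$ when rescaling a feasible solution of \eqref{optoff-bundle-LP} into one for \eqref{opton-bundle-LP}. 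Both uses absorb cleanly into the $O(\ln T/\ln\ln T)$ factor.

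The main (and only mild) obstacle is therefore the first observation: verifying that the analysis of \Cref{thm:online-rounding} is really a rounding statement giving $\ex{\mathsf{ALG}} = \Omega(V[\mathsf{ON}])$, rather than only a ratio against the online optimum. This is immediate from inspecting the displayed inequality in that proof, but it is the one non-routine step; everything else is direct substitution.
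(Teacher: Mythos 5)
Your chain of inequalities $\ex{\mathsf{ALG}} \geq \Omega(V[\mathsf{ON}]) \geq \Omega\bigl(\tfrac{\ln\ln T}{\ln T}\bigr) V[\mathsf{OFF}] \geq \Omega\bigl(\tfrac{\ln\ln T}{\ln T}\bigr) \ex{\mathsf{OPT}}$ is exactly the paper's argument: the paper likewise observes that the proof of \Cref{thm:online-rounding} yields $\ex{\mathsf{ALG}} = \Omega(V[\mathsf{ON}])$ and then cites \Cref{lem:optonoff} and \Cref{lem:optoff-LP}. Your extra care in isolating the "rounding vs.\ benchmark" distinction in step (i) and in pinpointing where the $\Gamma = \Theta(1)$ hypothesis is used is a sound elaboration of the same proof, not a different route.
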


\begin{remark}
Under the stronger assumption that $\ex{A_i} = q_i\cdot T = \Omega(\ln (mT) / \eps^2)$ for each of the $m$ item types $i$ (e.g., if $T$ grows while the distribution $\{q_i\}$ remains fixed), the number of arrivals of each item is more concentrated: it is $\ex{A_i}\cdot (1\pm \eps)$ w.h.p. 
Consequently, natural extensions of the arguments above, with a smaller blow-up of the RHS of the constraints in~\eqref{opton-bundle-LP}, imply that \Cref{alg:online-rounding}'s competitive ratio improves to $O(1)$ in this case.
\end{remark}

\section{Hardness Results}\label{sec:hardness}

In this section we provide hardness of approximation results for \AVA
and stark impossibility results for the generalization to \GAVA. 

\subsection{Max-Coverage hardness of \AVA}\label{sec:BBC-hard}

Here we prove that \AVA is as hard as the Max-Coverage problem, even if restricted to the unit-$\rho$ case.

\begin{theorem}[Hardness of \AVA]
  \label{lem:max-cover-hard}
  For any constant $\eps>0$, it is NP-hard to approximate \AVA to a
  factor better than $\big(\frac{e}{e-1}+ \eps\big)$ even for unit-$\rho$ instances.
\end{theorem}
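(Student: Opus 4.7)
The plan is to reduce from Max-$k$-Coverage, for which Feige's classical result gives $(\frac{e}{e-1}-\eta)$-inapproximability, and to preserve the hardness gap essentially tightly by a padding argument. Given an instance $(U,\{S_1,\ldots,S_m\},k)$ with $|U|=n$, I will first pad it so that $k = o(n)$: replace each element $u \in U$ by $N$ copies and each set $S_j$ by the union of copies of its elements, where $N$ is taken large enough. This preserves the coverage ratio and hence the Max-$k$-Coverage hardness gap, while allowing me to drive $k/n$ to $0$.

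I then construct the following unit-$\rho$ \AVA instance (with $\rho_j=1$ for all buyers). Create one buyer $b_j$ per set $S_j$. Introduce $k$ ``universal'' \pitems $q_1,\ldots,q_k$, each with value $L := 1 + n\delta$ for every buyer (where $\delta > 0$ will be chosen small, e.g.\ $\delta = 1/n$). For each universe element $u$, create an \nitem with value $1-\delta$ for buyer $b_j$ whenever $u \in S_j$, and value $0$ otherwise. Note that \pitems only have \pedges and \nitems only have \nedges, so the instance is unambiguous. The excess of each $q_\ell$ at any buyer is $L-1 = n\delta$, exactly matching the deficit of $n$ \nitems, so feasibility of a bundle for $b_j$ containing a single $q_\ell$ allows up to $n$ positive-value \nitems to be added.

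The next step is the structural claim that the optimal \AVA welfare equals $kL + C^\star(1-\delta)$, where $C^\star$ is the Max-$k$-Coverage optimum. For one direction, any selection of $k$ sets and covering assignment of elements yields an \AVA solution of exactly this value. Conversely, I will argue that WLOG an optimal \AVA solution allocates all $k$ \pitems (dropping any \pitem strictly loses value $L$ and weakly loses \nitem mass), and WLOG allocates them to $k$ distinct buyers (merging two \pitems at one buyer $b_j$ restricts the receivable \nitems to $S_j$, whereas spreading them to $b_j, b_{j'}$ enables covering $|S_j \cup S_{j'}| \geq |S_j|$ elements; this exchange is weakly welfare-improving since all \pitems have the same value). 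Given this, the \nitems assigned correspond to a covering of at most $C^\star$ elements by the chosen $k$ sets, and at most one allocated \nitem per covered element.

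Finally, I will verify gap-preservation. On a YES instance ($C^\star = n$) the welfare is $kL + n(1-\delta)$, while on a NO instance it is at most $kL + (1-\tfrac{1}{e}+\eta)\,n(1-\delta)$. Taking $\delta = 1/n$, we have $kL = k(1+n\delta) = 2k = o(n)$ by the padding, so both the additive $kL$ offset and the $(1-\delta)$ factor contribute only $(1+o(1))$ multiplicative distortion. The resulting \AVA approximation ratio between YES and NO is therefore
\[
\frac{kL + (1-\tfrac{1}{e}+\eta)\,n(1-\delta)}{kL + n(1-\delta)} \;\leq\; \Bigl(1-\tfrac{1}{e}+\eta\Bigr) + o(1),
\]
which proves $(\tfrac{e}{e-1}-\eps)$-hardness for any constant $\eps>0$ by choosing $\eta$ small. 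The main obstacle is exactly this offset from the \pitems: without padding, the additive $kL \approx k$ term would dilute the multiplicative gap to something strictly weaker than $e/(e-1)$, so the conceptual content of the proof lies in recognizing that Max-$k$-Coverage hardness is preserved under element-duplication padding and using this to make the \pitem contribution a lower-order term.
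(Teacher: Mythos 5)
Your proof is correct and follows essentially the same reduction as the paper: $k$ universal \pitems of value just above $\rho=1$, one \nitem of value just below $1$ per universe element (for buyers whose set contains it), so that welfare decomposes as (\pitem contribution) $+$ (coverage)$\cdot(1-\delta)$, with the \pitem term made negligible because $k\ll n$. The only cosmetic difference is that you force $k=o(n)$ via an explicit element-duplication padding of the Max-$k$-Coverage instance, whereas the paper simply invokes Feige's balanced instances which already come with $k\le \delta n$; both yield the same gap calculation.
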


\begin{proof}
  We give a reduction from ``balanced'' instances of the
  \textsc{Max-Coverage} problem. Such an instance consists of a set
  system with $n$ elements and $m$ sets, with each set containing
  $\nicefrac{n}{k}$ elements. A classic result of
  \cite{feige1998threshold} shows that for each $\delta > 0$,
  there exist $n$ and $k \leq n\delta$, such that it is NP-hard to
  distinguish between the following two cases: (a) there exists a perfect partition,
  i.e., $k$ sets in the set system that cover all $n$ elements
  (YES-instances), and (b) no collection of $k$ sets from the set
  system cover more than $n(1-\nf1e+\delta)$ elements (NO-instances).
  We now define a unit-$\rho$ \AVA instance consisting of:
  \begin{enumerate}
  \item $m$ buyers, where each buyer $i_S$ corresponds to a set $S$ in
    the set system,
  \item $k$ identical \emph{choice items}, which have value
    $1+(\eps/2)\cdot \nicefrac{n}{k}$ for every 
    buyer, and
  \item $n$ distinct \emph{element items}, one for each element $e$,
    which has value $1-(\eps/2)$ for the buyers $i_S$ such
    that set $S$ contains element $e$, and value zero for the other buyers.
  \end{enumerate}

  For a YES-instance of \textsc{Max-Coverage}, there is a solution
  with value $k+n$: we can assign both the choice and element items to
  the buyers corresponding to the $k$ sets in the perfect partition,
  thereby getting us value $n+k$. (The excess for each choice
  item can subsidize the deficit for the $\nicefrac{n}k$
  element items assigned to that buyer.) On the other hand, for a
  NO-instance, the $k$ buyers/sets selected by the choice items can
  give value $k$ and also subsidize at most $n(1-\nf1e+\delta)$
  element items with deficit. (No other items with deficit
  can be chosen.) Setting $\delta = \eps/2$ means the NO-instances
  have value at most
  $k+n(1-\nf1e+\delta) + n\eps/2 \leq n(1 - \nf1e + \eps)$. This gives
  a gap between instances with value at least $n$ and at most
  $n(1-\nf1e + \eps)$, proving the theorem.
\end{proof}

\subsection{Clique hardness of \GAVA}

Next, we prove that approximating \GAVA defined in  \eqref{eq:ros} is as hard as approximating the maximum independent set number in a graph. Recall that the objective in \GAVA is to maximize welfare $\sum_{ij} v_{ij} x_{ij}$ subject to the more general return-on-spend (ROS) constraints: 
\begin{gather}
 \forall j,\;\;\;\; \sum_{i} v_{ij} \; x_{ij} \geq \rho_j \cdot \bigg(
  \sum_i c_{ij}\; x_{ij}\bigg).
\end{gather}
Without loss of generality, we scale $c_{ij}$ and ensure that all $\rho_j = 1$. We show the hardness even for the case where costs depend only on the items, i.e., $c_{ij} = c_i$ for each item $i$. (The case where $c_{ij} = c_j$ for each buyer $j$ is much easier---equivalent to the \AVA problem---because we can just fold the $c_j$ term into the $\rho_j$ threshold.)

\begin{theorem}[Hardness of \GAVA]\label{thm:clique-hardness}
  For any constant $\eps>0$, it is NP-hard to approximate \GAVA for
  $n$-buyer instances with $\Omega(n^2)$ 
  items to better than a factor of
  $n^{1-\eps}$.
\end{theorem}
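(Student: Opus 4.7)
}

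The plan is to reduce from the Max Clique problem, which by H\aa stad's theorem is NP-hard to approximate within $n^{1-\eps}$ on $n$-vertex graphs. Given an input graph $G=(V,E)$ with $|V|=n$, I would build a \GAVA instance $\calI_G$ with exactly $n$ buyers (one per vertex) and $\Theta(n^2)$ items (roughly one per ordered pair of vertices), set $\rho_j=1$ for every buyer, and choose item-only costs $c_i$ together with values $v_{ij}$ so that the optimal welfare in $\calI_G$ is tightly sandwiched between two polynomial functions of $\omega(G)$. This is gap-preserving up to constants, which suffices because the Max Clique hardness holds for any $\eps>0$ (in particular, constant-factor losses can be absorbed by a mild rescaling of $\eps$).

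The construction I would try has three kinds of items. First, for each vertex $u\in V$, a \emph{self-item} $s_u$ of value $B$ for buyer $u$ and value $0$ for everyone else, with cost $c_{s_u}=B$; allocating $s_u$ to $u$ is profit-neutral for $u$ (margin $0$) but strictly infeasible for any other buyer. This makes the indicator ``buyer $u$ is \emph{active}'' correspond exactly to $u$ being allocated $s_u$. Second, for each ordered pair $(u,v)$ with $\{u,v\}\in E$, an \emph{edge item} $e_{uv}$ of value $1$ for buyers $u$ and $v$ (and $0$ for others) and cost $c_{e_{uv}}=0$, so such items are profit-positive for their two endpoints. Third, for each ordered pair $(u,v)$ with $\{u,v\}\notin E$, a \emph{penalty item} $n_{uv}$ whose values/costs are designed so that it is strictly loss-making for both $u$ and $v$ but can be absorbed only by a buyer whose surplus is large enough. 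The objective ``$\sum v_{ij}x_{ij}$'' then counts, up to the $B$ per active buyer, essentially the number of edge items allocated.

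With these ingredients, the gap argument I would run has two directions. For the completeness side, given a clique $C\subseteq V$ of size $k$, I activate every $u\in C$ (allocating each $s_u$) and allocate each edge item $e_{uv}$ with $u,v\in C$ to one of its endpoints; this is feasible (edges contribute positive margin, self-items are neutral, no penalty item is allocated) and yields welfare $\Omega(kB+k^2)$. For the soundness side, I need to argue that any feasible allocation of welfare $W$ can be converted into a clique of $G$ of size $\Omega(\sqrt{W/B})$ (or similar), by taking the set $A$ of active buyers and arguing that a pair $u,v\in A$ with $\{u,v\}\notin E$ would force some $n_{uv}$ to be allocated to a buyer unable to absorb it, contradicting feasibility. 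Choosing $B$ to be a suitable polynomial in $n$ makes the self-item contribution the dominant term, so that welfare is essentially linear in the size of the active clique, and the $n^{1-\eps}$ hardness for clique transfers.

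The main obstacle I anticipate is the soundness direction: because costs may not depend on the buyer ($c_{ij}=c_i$), I cannot directly ``punish'' a non-adjacent pair by raising a cost only on $u$ and $v$. The right fix is probably to make the penalty item $n_{uv}$ have small value for $u$ and $v$ but a cost large enough that absorbing it requires a very specific budget profile only attainable if $u$ (or $v$) is simultaneously active together with \emph{all} of its non-neighbors among the active set, which forces the active set to be a clique. Calibrating $B$, the edge-item values, and the penalty parameters so that (a) cliques remain feasible with high welfare, (b) non-cliques are infeasible, and (c) no ``free lunch'' from unallocated or cross-routed items exists, is where the bulk of the technical work would go; once these calibrations are in place, the $n^{1-\eps}$ inapproximability of Max Clique transfers to \GAVA on an instance with $n$ buyers and $\Theta(n^2)$ items, which is exactly the statement of \Cref{thm:clique-hardness}.
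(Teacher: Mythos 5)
Your high-level intuition---one buyer per vertex, a vertex item whose allocation ``activates'' the buyer, and per-edge items as a shared resource---is aligned with the paper, but the concrete mechanism you propose has a fundamental flaw, and the paper's actual construction resolves it in an essentially opposite way from what you sketch.

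The breaking issue is your reliance on \emph{penalty items} $n_{uv}$ for non-edges: in \GAVA there is no covering constraint, only $\sum_j x_{ij}\leq 1$, so any item can simply be left unallocated. A ``strictly loss-making'' item will never be allocated by an optimal (or even feasible-maximizing) solution, regardless of how you calibrate its values and costs, so it cannot enforce anything. Your own text identifies this as the main obstacle but the speculated fix (requiring a ``very specific budget profile'') does not remove it: no allocation is ever forced in a packing problem. Compounding this, your self-item $s_u$ is exactly profit-neutral ($v=B=c$), so allocating it imposes no burden on buyer $u$ and creates no coupling with the edge items; and your edge items come in two copies per edge (one per ordered pair), which eliminates the very scarcity that should make adjacent active buyers conflict.

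The paper's construction avoids penalty items entirely and instead makes the \emph{vertex item itself} a net-deficit item: it has value $M$ for buyer $j_v$ but cost $M+\deg(v)$, so taking $i_v$ alone violates the ROS constraint and forces $j_v$ to recover a deficit of exactly $\deg(v)$. The only zero-cost, positive-value items available to $j_v$ are the $\deg(v)$ edge items incident to $v$, each worth $1$ to its two endpoints, and there is exactly \emph{one} item per edge. So an active $j_v$ must collect \emph{all} of its incident edge items, and two adjacent active buyers cannot both do so, forcing the active set to be independent (equivalently, a clique in the complement). This gives $\mathsf{OPT}=\alpha(G)\cdot M+|E|$ exactly, with $M$ tuned so the vertex-item term dominates, and the $n^{1-\eps}$ hardness of Independent Set/Clique transfers. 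If you want to salvage your write-up, drop the penalty items, make the vertex item's cost strictly exceed its value by $\deg(v)$, and use a single edge item per edge; the soundness then follows from resource scarcity rather than forced absorption.
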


The proof uses a reduction from the Maximum Independent Set problem.
The reduction proceeds as follows: given a graph $G = (V,E)$ with $|V| = n$,
define $M := 2|E|/n^{\eps}$, and construct the
following \GAVA instance.
\begin{enumerate}
  
\item For each vertex $v \in V$, there is a buyer $j_v$ with $\rho_{j_v}=1$.
\item For each vertex $v \in V$, there is a \emph{vertex item} $i_v$ with item cost $c_i := M+\hbox{deg}(v)$, where $\hbox{deg}(v)$ is $v$'s degree in $G$; it has 
  value $M$ for the buyer $j_v$, and zero
  value for all other buyers. 
\item For each edge $e = (u,v) \in E$, there is an \emph{edge item} $i_e$ having zero cost; it has 
  value $1$ for buyers $j_u$ and $j_v$,
  and zero value for all others. 
\end{enumerate}

\begin{proof}[Proof of~\Cref{thm:clique-hardness}]
  If vertex item $i_v$ is allocated to buyer $j_v$, then by the 
  constraints above, all edge items $j_e$ with $e\ni v$ must be allocated to $i_v$.
  Thus, the set of vertices $U\sse V$ whose buyers are sold their respective vertex item is an independent set in $G$.
  Conversely, $U$ can be taken to be any independent set.
  Thus, the maximum value obtained by allocating vertex items is precisely $M\cdot \alpha(G)$.
  On the other hand, any optimal allocation must allocate all edge items, as this does not 
  violate any of the ROS constraints. Combining the above, we have that $OPT=\alpha(G)\cdot M + |E|$, 
  where $\alpha(G)$ is the independence number of $G$, i.e., the size of the maximum independent set of $G$.

  Finally, we use the result that for any constant $\eps>0$, it is
  NP-hard to distinguish between the following two scenarios for an
  $n$-node graph $G$: (a)~$G$ contains a clique on $n^{1-\eps}$ nodes
  (YES instances), and (b)~$G$ contains no clique on $n^{\eps}/2$
  nodes (NO instances)
  \cite{hastad1996clique,zuckerman2006linear}. This means that it is
  NP-hard to distinguish between instances of \GAVA with value at least
  $n^{1-\eps} \cdot M$ (corresponding to YES instances) from those
  with value at most $(n^\eps/2) \cdot M + |E| =  n^\eps \cdot M$
  corresponding to the NO instances, and hence proves the claim.
\end{proof}

The above hardness construction can, with small changes, show the
following hardness results. We defer these additional results' proofs, 
as well as algorithms showing the (near) tightness 
of our lower bounds for general \GAVA, to \Cref{sec:hardness-proofs}.
\begin{restatable}{theorem}{iidhardness}(Hardness of i.i.d.~\GAVA)
  For any constant $\eps>0$, it is NP-hard to $n^{1-\eps}$-approximate
  \GAVA in $n$-buyer instances with $\poly(n)$ items drawn i.i.d. from
  a known distribution.
\end{restatable}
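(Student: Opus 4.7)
The plan is to adapt the clique-based reduction of \Cref{thm:clique-hardness} to the i.i.d.~setting.  Starting from the graph $G=(V,E)$ with $|V|=n$ used there, I would define a distribution whose support is exactly the item set of the deterministic construction---one vertex-item type $i_v$ per $v\in V$ with cost $M+\deg(v)$ and value $M$ to $j_v$, and one edge-item type $i_e$ per $e=(u,v)\in E$ with zero cost and value~$1$ to both $j_u$ and $j_v$---together with probabilities $q_v$ on vertex items and $q_e$ on edge items summing to~$1$, and a sample budget $T=\poly(n)$.

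The analysis would then proceed in three steps.  First, by a Chernoff bound plus a union bound over the $O(n^2)$ types, one establishes a high-probability event $\calE$ on which every item type's realized arrival count concentrates tightly around $q_i T$.  Second, on $\calE$, YES instances admit an independent set $U$ of size $n^{1-\eps}$ in $G$; allocating every arriving copy of $i_v$ for each $v\in U$ to $j_v$ and using incident edge-item copies to cover the resulting deficits (possible because no edge of $G$ lies inside~$U$) yields value $\Omega(n^{1-\eps}\cdot M\cdot q_vT)$.  Third, in NO instances one would argue that any near-optimal allocation's ``used'' vertex set is essentially an independent set of $G$, so $\mathsf{OPT}$ is bounded by $O(n^{\eps}\cdot M\cdot q_vT+|E|\cdot q_eT)$; picking $M$ large enough that the vertex-item term dominates gives a multiplicative gap of $n^{1-2\eps}$, and reparameterising $\eps\to\eps/2$ delivers the claimed $n^{1-\eps}$-hardness.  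Since the distribution has $\poly(n)$ support and $T=\poly(n)$, the resulting instance has $\poly(n)$ items, as required.

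The main obstacle I foresee is a \emph{fractional-exploitation} phenomenon unique to the i.i.d.~setting: when an edge item has many copies, these can be split between its two endpoint-buyers, effectively allowing allocations indexed by \emph{fractional} independent sets (which can be as large as $n/2$, unrelated to $\alpha(G)$) to satisfy the ROS constraints.  On the complete graph, for instance, this lets every buyer simultaneously receive copies of its own vertex item, inflating $\mathsf{OPT}$ by a factor of $n$ compared to the integral picture and destroying the gap.  To neutralise this I would tune $q_e$ so that each edge-item type has expected arrival count close to~$1$---ruling out copy-splitting with high probability via a second-moment argument---and, if the YES-side coverage then becomes insufficient, replicate each original edge $e\in E$ into $K=\poly(n)$ distinct item types in the distribution, each of tiny probability, so that the aggregate edge-coverage capacity per original edge remains large while no individual type has enough copies to split.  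The parameters $q_v,q_e,K,T$, together with a small slack $\eps_0$ added to the vertex-item cost to absorb stochastic fluctuations, then need to be balanced so that concentration, YES-side feasibility, and NO-side integrality all hold simultaneously; verifying this balancing is the crux of the proof.
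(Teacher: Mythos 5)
You correctly identify the central difficulty---that with i.i.d.\ arrivals an edge item may appear many times and its copies can be split across the two endpoint buyers, enabling ``fractional'' independent sets and destroying the gap---but your proposed fix does not close this gap, and the paper resolves it by a different (and essential) device that your construction omits: \emph{scaling the vertex-item cost by the expected per-edge multiplicity}. Concretely, the paper sets the cost of $i_v$ to $M + R\cdot\deg(v)$, where $R = \Theta(\eps^{-2}\ln|E|)$ is chosen so that, after $T = 2(1+\eps/2)R|E|$ samples, each edge type arrives between $R$ and $(1+\eps)R$ times with high probability. A selected buyer $j_v$ then needs $R\cdot\deg(v)$ edge-copies, while each of its $\deg(v)$ incident edges supplies at most $(1+\eps)R$ copies; a short counting argument shows that consequently \emph{every} incident edge must give $j_v$ more than $R/2$ copies, leaving fewer than $R/2$ for the other endpoint, which is insufficient---so no two adjacent vertices can both be selected. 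This is the step that actually rules out copy-splitting, and it crucially relies on matching the deficit scale to the supply scale.

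Your proposal instead keeps the deterministic cost $M+\deg(v)$ and tries to force the edge supply down to $\approx 1$ copy per type (then replicating into $K$ types of tiny probability to restore YES-side coverage). But this doesn't work: if each original edge yields $\approx K/2$ unsplittable singleton-copies spread over $K$ types, a buyer $j_v$ with deficit only $\deg(v)$ can be satisfied by just one copy per incident edge, leaving ample supply for adjacent selected buyers---so the NO-side bound fails. You would have to also scale the deficit by $\Theta(K)$, at which point you have recreated the paper's construction with $R$ renamed $K$; and you flag yourself that the parameter balancing ``is the crux.'' So there is a genuine gap: the missing idea is the cost-rescaling that makes the per-edge demand nearly exhaust the per-edge supply, and the deterministic (non-second-moment) counting argument this enables. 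Without it, the i.i.d.\ adaptation does not establish the NO-side bound.
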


\begin{restatable}{theorem}{bicriteriahardness}(Hardness of Bicriteria \GAVA)\label{bicriteria-hard}
  For any $\eps>0$, it is NP-hard to obtain a solution (which can even be infeasible) to \GAVA that achieves an objective value at least $\tilde{\Omega}(\sqrt{\eps})$ times the optimal
  value (i.e. an $\tilde{O}(1/\sqrt{\eps})$-approximation), 
  while guaranteeing  the cost for each buyer is at most $1+\eps$ times
  their total value, assuming the UGC.\footnote{As usual, the soft-Oh notation hides polylogarithmic factors in its argument: i.e., $\tilde{O}(f) = f\cdot \poly\log(f)$.} 
\end{restatable}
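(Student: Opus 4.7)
My plan is to sharpen the reduction from \Cref{thm:clique-hardness} to exploit the bicriteria slack, now reducing from an UGC-hard graph problem that exhibits a $\sqrt{\eps}$-type threshold on large sparse induced subgraphs --- for example, a Unique-Games-based near-independent-set hardness, or a Raghavendra--Steurer-style balanced sparse-subgraph variant.

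The gadget will be exactly the one from \Cref{thm:clique-hardness}: for the source graph $G=(V,E)$, introduce a buyer $j_v$ per vertex, a vertex-item $i_v$ with value $M$ for $j_v$ and cost $M+\deg(v)$, and an edge-item $i_e$ per edge $e=(u,v)$ of value $1$ for $j_u$ and $j_v$ and cost $0$. Re-running the ROS accounting with bicriteria $(1+\eps)$-slack, a buyer $j_v$ that receives its own vertex-item now only needs to collect edge-items totalling $\geq (\deg(v)-\eps M)/(1+\eps)$ incident to $v$. Aggregating this across the set $S$ of ``bought'' vertices, and using that each edge-item can be assigned to at most one endpoint, the feasibility condition collapses to the purely combinatorial inequality $|E(S)|\leq O(\eps M)\cdot |S|$. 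Tuning $M = \Theta(d/\sqrt{\eps})$, where $d$ is the average degree of the source graph, this says $S$ induces a subgraph of average degree $O(\sqrt{\eps}\,d)$, and the bicriteria \GAVA value is $\Theta(M\cdot|S|)$ to leading order.

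The reduction then reads off from an UGC-hard source dichotomy: YES instances provide a set of size $\Omega(n)$ with the required sparse-induced property, yielding bicriteria \GAVA value $\Omega(M n) = \Omega(nd/\sqrt{\eps})$; NO instances admit no such set of size larger than $\tilde{O}(\sqrt{\eps}\cdot n)$, capping the achievable bicriteria \GAVA value at $\tilde{O}(nd)$. The ratio is $\tilde{O}(1/\sqrt{\eps})$, as claimed, and holds even for algorithms that may output $(1+\eps)$-infeasible solutions.

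The main obstacle is to pinpoint an UGC-hard source whose sparse-vs-dense induced-subgraph threshold is quantitatively $\sqrt{\eps}$, as opposed to a constant or a polylog factor --- the likely route is a Unique-Games-based near-independent-set or balanced sparse-subgraph hardness, tuned to the right parameter regime and invoking the Raghavendra--Steurer framework. A secondary technical point is handling the mildly asymmetric role of edge-items (they contribute to value at both endpoints but carry no cost), so that the $(1+\eps)$-slack cleanly reduces to the combinatorial density tolerance without losing additional polylogarithmic factors that would blur the target gap.
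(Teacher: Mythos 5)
Your approach has a genuine gap that the paper sidesteps entirely. You try to make the $(1+\eps)$-slack \emph{meaningful}, so a bought vertex $j_v$ may collect only $\approx (\deg(v)-\eps M)/(1+\eps)$ of its incident edge-items, and you hope this reduces to a UGC-hard ``large sparse induced subgraph'' dichotomy with a $\sqrt{\eps}$-type threshold. You correctly identify that the main obstacle is locating such a hardness source --- and that obstacle is real: no off-the-shelf UGC result gives a clean $|S|=\Omega(n)$ vs.\ $|S|=\tilde O(\sqrt\eps\,n)$ gap for sets inducing average degree $O(\sqrt\eps\,d)$ at the exact parameterization you need, and the reduction from per-vertex slack to the aggregate condition $|E(S)|\le O(\eps M)|S|$ also requires a Hall-type distribution argument that you gloss over (each edge can go to only one endpoint, so you must \emph{assign} them, not just count them).

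The paper instead makes the bicriteria slack \emph{vacuous}. It keeps the same gadget and takes $G$ to be $d$-regular, but sets $M:=d^2$ and $\eps:=1/(M+d+1)=\Theta(1/d^2)$. With that choice, if a bought vertex collects at most $d-1$ of its $d$ incident edge-items, its value is at most $M+d-1<(1-\eps)(M+d)$, which already violates the relaxed ROS constraint; so any bicriteria-feasible solution must still collect \emph{all} incident edge-items for each bought vertex, hence the bought set is an honest independent set exactly as in \Cref{thm:clique-hardness}. The reduction then directly invokes the UGC-hardness of $\tilde O(d)$-approximating Independent Set in $d$-regular graphs, and since $d=\Theta(1/\sqrt\eps)$ this immediately gives the claimed $\tilde O(1/\sqrt\eps)$ inapproximability. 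The lesson is that you do not need a new ``approximate independent set'' hardness source: for each target $\eps$ you can pick $d$ and $M$ so that the $\eps$-slack cannot be exploited at all, and the ordinary Independent Set hardness in bounded-degree graphs already yields the bicriteria tradeoff.
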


\section{Deferred Proofs of Section \ref{sec:structure}}\label{app:structure}
\subsection{Another (Offline) Reduction to Unambiguous Instances}
\label{sec:unamb-reduction}

In this section we provide an alternative, deterministic method to identify unambiguous sub-instances admitting a high-valued bundling-based solution w.r.t.~the original (entire) instance.

Given any \AVA instance $\calI = (I,J,E)$ where items may be
ambiguous, construct an unambiguous instance $\calI'$ for it by
splitting each ambiguous item $i$ by two copies: the \emph{positive
  copy} $i^+$ that has only the \pedges incident to $i$, and the
\emph{negative copy} $i^-$ that has only the \nedges. Clearly the
optimal value of \AVA on $\calI'$ is at least that on the original
instance $\calI$. 
  
\begin{lemma}
  Any bundle-based solution for the unambiguous instance $\calI'$
  of \AVA can be converted into a solution for instance $\calI$
  having at least half the value.
\end{lemma}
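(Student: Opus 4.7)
The plan is to construct from any bundle-based solution $S'$ of $\calI'$ with value $V'$ two feasible solutions for $\calI$, and argue that the better of the two retains value at least $V'/2$. Call an ambiguous item $i$ \emph{conflicted} if both copies $i^+$ and $i^-$ appear in $S'$, necessarily in distinct bundles $B^+_i$ (as \pedge) and $B^-_i$ (as \nedge), and let $C$ denote the set of conflicted items. The first solution, $S_A$, retains all bundles of $S'$ but drops each $i^-$ from $B^-_i$ for $i \in C$; since removing an \nedge from a permissible bundle only raises its average value, $S_A$ is feasible with value $V(S_A) = V' - \sum_{i\in C} v_{i, j^-(i)}$. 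The second solution, $S_B$, discards each bundle $B^+_i$ entirely (we cannot remove just its \pedge $i^+$ without breaking permissibility), with value $V(S_B) = V' - \sum_{i\in C} v_{B^+_i}$.

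To show $\max(V(S_A), V(S_B)) \geq V'/2$, it suffices to prove
\[
\sum_{i \in C} v_{i, j^-(i)} + \sum_{i\in C} v_{B^+_i} \leq V',
\]
because then $V(S_A) + V(S_B) \geq V'$. In the \emph{bipartite} case---where no bundle simultaneously plays the roles of $B^+_i$ and $B^-_{i'}$ for different conflicts $i, i'$---the two sums decompose over disjoint subsets of bundles in $S'$ (the first bounded by the total value of the $B^-$-bundles, the second by the total value of the $B^+$-bundles, with these two collections disjoint), so the inequality is immediate.

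The main obstacle is the \emph{overlap} case, in which a single bundle $B$ both plays the role of some $B^+_i$ (its \pedge is a conflict $i^+$) \emph{and} contains conflict \nedges $i'^-$ for other conflicts. Then $v_{B^+_i}$ already counts each $v_{i', j_B}$ once, so a naive summation double-counts. To address this, observe that since every bundle has a unique \pedge, in the directed \emph{overlap graph}---with an arc from $B$ to $B'$ whenever $B = B^+_i$ and $B' = B^-_i$ for some $i \in C$---every node has out-degree at most one, so the weakly-connected components are trees possibly containing a single cycle. Augment $S_A$ and $S_B$ with a third per-overlap-bundle option (namely, keep $B$ intact, thereby retaining its \pedge $i^+$ and its inner conflict \nedges $i'^-$ at the cost of destroying the corresponding ``child'' bundles $B^+_{i'}$), and resolve each component locally via an amortized charging argument against $v_{B^+_{i'}}$: the careful bookkeeping of this recursive resolution over the overlap components is the main technical step and still yields an $\calI$-solution of total value at least $V'/2$.
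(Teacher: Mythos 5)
Your setup is on the right track and in fact parallels the paper's: the directed graph you build (arc from $B^+_i$ to $B^-_i$, out-degree $\le 1$ because each bundle has a unique \pitem, so each weakly-connected component is a tree with at most one extra cycle-closing arc) is exactly the auxiliary digraph used in the paper's proof. Your ``bipartite'' case also works. The problem is precisely where you flag it, and you do not actually close it: the final paragraph asserts that augmenting $S_A, S_B$ with a third per-bundle option and doing ``careful bookkeeping of this recursive resolution over the overlap components \ldots still yields an $\calI$-solution of total value at least $V'/2$,'' but this is just restating the lemma's conclusion, not proving it. In a chain $B_1 \to B_2 \to \cdots$, the quantities $v_{B^+_{i'}}$ you propose to charge against themselves contain conflict \nedges of other items, so the local charging is not obviously acyclic and the $1/2$ bound must be argued, not asserted. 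This is the whole content of the lemma, and it is the part your write-up skips.

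For comparison, the paper's proof closes the gap cleanly by processing the digraph in two steps that each lose at most a factor of two on \emph{disjoint} portions of value: first, for each cycle it drops the \nitem on each cycle arc, charging that \nitem's value to the \pitem of the same bundle (which is strictly larger and is retained), thereby breaking all cycles; second, on the resulting forest of in-trees it two-colors bundles by depth parity and discards the lighter color class, with the small caveat that a root's \pitem is always safe (its negative copy cannot survive once cycles are broken) so the root contributes only its \nitems to the parity comparison. Your ``third option'' instinct points in this direction, but to fix the proposal you would need to replace the hand-wave with an explicit argument of this kind (or an equivalent amortization with a concrete potential/charging scheme), and verify that discarding a bundle never reintroduces an unresolved conflict elsewhere. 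As written, the proposal has a genuine gap at its central step.
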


\begin{proof}
  Suppose solution for instance $\calI'$ uses bundles
  $B_1, B_2, \cdots$. Let bundle $B_k$ contain some \pitem
  $i_k$ and some set $S_k$ of \nitems. We create an auxiliary digraph
  whose vertex set corresponds to these bundles. To create the directed edges (arcs),
  consider each item $i \in \calI$: if both the copies of some item
  $i$ from $\calI$ are used in this solution in bundles $B_a, B_b$
  (say the positive copy $i^+$ appears as $i_a$ and the negative copy
  belongs to $S_b$), then add an arc $B_a \to B_b$. By this
  construction, each bundle has a single out-arc, and hence the digraph
  created is a \emph{$1$-tree} (a bunch of components, each having a
  ``root'' which is a single node or a cycle, and then in-trees
  pointing into the vertices of the root). We now show how to remove
  these arcs, losing a factor of $2$ in the value.

  First consider any cycle $C$, and let the arcs correspond to items
  $i_1, i_2, \ldots, i_k$. Just remove the \nitems corresponding to
  these items from the bundles. Each bundle loses one \nitem, whose
  value is at most the value of its \pitem, and hence the value
  corresponding to these items reduces by a factor of at most $2$. The
  remaining arcs form a collection of branchings (directed
  trees). Each such branching has a root bundle, and the bundles fall
  into odd and even levels (with the root at level zero). We can now
  discard either the bundles at odd levels or those at even
  levels, whichever has less value. (The root bundle is an exception: we should only consider the \nitems in this bundle when making the decision.) This solution is feasible for
  $\calI$, because each item in $\calI$ is only used as either a
  \pitem or an \nitem and not both; moreover, we lose at most half
  the value of the items associated with these arcs. 
\end{proof}

\section{Deferred Proofs of Section \ref{sec:rounding-offline}}\label{app:offline}
\largeExcessNotWorthwhile*
\begin{proof}
Fix a bundle $jp$. By Constraint \eqref{cons:RoS}, we have that 
\begin{align*}
    (v_{pj} - \rho_j)\; x_{pjp} \geq \sum_{i\neq p} (\rho_j-v_{ij})\; x_{ijp} \geq \sum_{i\in L^{\beta}_{jp}} (\rho_j-v_{ij})\; x_{ijp} > \sum_{i\in L^{\beta}_{jp}} \beta \; (v_{pj}-\rho_j)\; x_{ijp}.
\end{align*}
Note that if $v_{pj} - \rho_j =0$, then $\sum_i x_{ijp}=0$ by Constraint \eqref{cons:RoS}, and if $v_{pj} - \rho_j >0$ then we can divide the above inequality by $v_{pj}-\rho_j$. Therefore, we have $\frac{1}{\beta} \; x_{pjp} \geq \sum_{i\in L^{\beta}_{jp}} x_{ijp}$. On the other hand, each \nitem $i$ has value at most $v_{ij}\leq \rho_j \leq v_{pj}$, and so 
$$\sum_{i\in L^{\beta}_{jp}} v_{ij}\; x_{ijp}\leq \frac{1}{\beta} \; v_{pj}\; x_{pjp}.$$
The lemma follows by summing both sides over all bundles $jp$.
\end{proof}

\subsection{Adding Side Constraints}

This section is dedicated to the proof of the following theorem.
\multiconstraint*

In what follows, suppose we have $K$ budget constraints, of the form $\sum_{i\to j} \ell_{ij}\leq B^{(\ell)}_j$ for $\ell\in [K]$. When fixing a particular budget constraint, we drop the superscript $\ell$.

First, to capture budget constraints to our bundling LP \eqref{bundle-LP}, we simply introduce the following additional constraints for every resource $\ell$.
\begin{align}
\sum_{i,p} \ell_{ij} \; x_{ijp} & \leq B^{(\ell)}_j \qquad \;\; \qquad \forall \text{ buyer } j, \label{eqn:per-buyer-budget} \\
\sum_{i} \ell_{ij} \; x_{ijp} & \leq B^{(\ell)}_j\cdot x_{pjp}. \qquad \forall \text{ buyer } j, \text{ \pitem \ } p. \label{eqn:per-bundle-budget}
\end{align}
The first constraints simply assert that in expectation, the cost to buyer $j$ is at most their budget, which holds since the same constraint holds for every realization.
The second constraints assert that since the $\ell$-cost of any bundle may not exceed the budget $B^{(\ell)}_j$, the expected cost of a bundle is at most the budget $B^{(\ell)}_j$, times the probability that this bundle is opened, namely $x_{pjp}$.
These constraints are valid for any bundling-based algorithm satisfying both average-value and budget constraints.
We conclude that the LP \eqref{bundle-LP} with the additional constraints \eqref{eqn:per-buyer-budget} and \eqref{eqn:per-bundle-budget} upper bounds the expected value of any average-value and budget-respecting allocation.
On the other hand, the proof of \Cref{lem:bundling} and \Cref{only-p-or-n} imply that the best bundling-based solution (after making the instance unambiguous) is a $4$-approximation of the best solution (of any kind).\footnote{The only delicate point is that budget constraints are downward closed, and since \Cref{lem:bundling} computes a sub-solution of a budget-respecting allocation, this output is itself budget-respecting.} To conclude, we have the following.
\begin{lemma}\label{lem:LP-ROS-plus-budget}
For any \AVA instance $\calI$ with budget constraints, LP \eqref{bundle-LP} together with constraints \eqref{eqn:per-buyer-budget} and \eqref{eqn:per-bundle-budget} applied to the unambiguous instance described in \Cref{only-p-or-n} has value at least $1/4$ of the optimal solution to $\calI$.
\end{lemma}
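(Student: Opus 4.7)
The plan is to combine two observations: (i) the augmented LP is a valid relaxation of any bundling-based solution respecting both the average-value and budget constraints, and (ii) the best such bundling-based solution on the unambiguous sub-instance of \Cref{only-p-or-n} approximates the optimum to within a factor of $4$, losing a factor $2$ from each of \Cref{lem:bundling} and \Cref{only-p-or-n}. Composing these two gives the $1/4$ ratio.

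For (i), I would follow the template of the proof of \Cref{bundle-lp-gap}. Fix any (randomized) bundling-based algorithm $\calA$ that respects the average-value and all $K$ budget constraints in every realization, and let $Y_{ijp}$ be the indicator that $\calA$ allocates item $i$ inside bundle $jp$. The original constraints \eqref{cons:RoS}--\eqref{cons:unique-p-item-per-bundle} hold realization-by-realization for the $Y_{ijp}$, so by linearity of expectation they hold for the marginals $x_{ijp}:=\ex{Y_{ijp}}$. For \eqref{eqn:per-buyer-budget}, the budget constraint $\sum_{i,p} \ell_{ij} Y_{ijp} \leq B^{(\ell)}_j$ holds in every realization, so taking expectations yields the LP inequality. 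For \eqref{eqn:per-bundle-budget}, I use the conditional argument: conditioned on bundle $jp$ being opened (an event of probability $x_{pjp}$), $\calA$'s bundle $jp$ has $\ell$-cost at most $B^{(\ell)}_j$, while conditioned on the bundle not being opened its $\ell$-cost is $0$; by the law of total expectation $\sum_i \ell_{ij}\,x_{ijp} = \ex{\sum_i \ell_{ij}\,Y_{ijp}} \leq B^{(\ell)}_j \cdot x_{pjp}$. The LP objective equals $\ex{\sum_{ij}v_{ij}\,\mathbf 1[\calA \text{ assigns } i \text{ to } j]}$, so any such $\calA$'s expected welfare is bounded by the LP value.

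For (ii), I apply \Cref{lem:bundling} and \Cref{only-p-or-n} to the optimum solution $x^\star$ of $\calI$ under the joint constraints. The crucial observation is that both lemmas produce sub-allocations of $x^\star$: \Cref{lem:bundling} (via the closing argument in the proof of \Cref{lem:online-bundling}) discards some edges to obtain a bundling-based $\widehat x\leq x^\star$ of value $\tfrac12 v\cdot x^\star$, and \Cref{only-p-or-n} further drops edges to produce an unambiguous, bundling-based $x'\leq \widehat x$ of value $\tfrac14 v\cdot x^\star$. Since budget constraints are downward closed in the $x$-variables (dropping allocated items only decreases each $\ell$-cost), both $\widehat x$ and $x'$ inherit feasibility for all $K$ budgets from $x^\star$. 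Thus $x'$ is a bundling-based, average-value-and-budget-feasible solution on the unambiguous sub-instance of value at least $\tfrac14\, v\cdot x^\star$.

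Putting the two parts together: applying the relaxation statement (i) to $x'$ (viewed as a degenerate randomized bundling algorithm) shows that the LP value on the unambiguous sub-instance is at least $v\cdot x'\geq \tfrac14 v\cdot x^\star$, which is the claim. I do not expect any single step to be a real obstacle; the one subtle point worth being careful about is verifying that the bundling reduction of \Cref{lem:bundling}, which was stated for plain \AVA, can be applied without destroying budget feasibility, which is exactly the downward-closure observation above. No new macros or environments are needed beyond those already in the paper.
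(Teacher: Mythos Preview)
Your proposal is correct and follows essentially the same approach as the paper: establish that the augmented LP relaxes any budget-and-average-value-respecting bundling-based allocation (your part (i), which the paper states in the paragraph preceding the lemma), then invoke \Cref{lem:bundling} and \Cref{only-p-or-n} together with the downward-closure of budget constraints to obtain a $4$-approximate bundling-based solution on the unambiguous sub-instance (your part (ii), matching the paper's footnote about downward closure). The only cosmetic point is that \Cref{only-p-or-n} already delivers the factor~$4$ directly (its proof internalizes the factor-$2$ bundling step), so you need not cite the two lemmas as contributing separate factors of~$2$; your detailed description of the two-stage sub-allocation process is nonetheless accurate.
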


We now discuss the minor changes to the design and analysis of \Cref{alg:offline-rounding} that allow us to prove a constant approximation with respect to the new LP under the \emph{small-bids assumption}, whereby $\ell_{ij}/B^{(\ell)}_j\leq (\eps \to 0)$, popular in the analysis of \emph{online} BAP (AdWords~\cite{mehta2007adwords}) algorithms.
First, our algorithm computes an optimal solution to \eqref{bundle-LP} with the additional $K$ sets of constraints of \eqref{eqn:per-buyer-budget} and \eqref{eqn:per-bundle-budget} for each of the $K$ budget constraints.
Then, in \Cref{line:permissible-test}, we only add $i$ to the single bundle $jp\in S_i$ if adding $i$ to $jp$ leaves this bundle permissible \emph{and} does not violate any of the $K$ budget constraints.

Now, fix a triple $i,j,p$, and let $\calE_1,\calE_2,\calE_3,\calE_4$ be as in the analysis of \Cref{alg:offline-rounding} (without budgets), and let $\calE^{(\ell)}_5$ be the event that the cost of items allocated in budget $jp$ is no greater than $B^{(\ell)}_j-\ell_{ij}$, i.e., the item $i$ can be added to the bundle $jp$ without violating the $\ell$-th budget constraint of buyer $j$.
We have that $i$ is allocated in bundle $jp$ if all events $\calE_1,\dots,\calE_4$ and $\bigwedge_{\ell} \calE^{(\ell)}_5$ all occur (simultaneously).
The following lower bound on $\pr{\calE^{(\ell)}_5 \mid \calE_1 \land \calE_2 \land \calE_3}$ follows by a similar coupling argument of \Cref{lem:small-excess-has-room} with an imaginary algorithm allocating items multiple times and ignoring constraints, but this time using constraints \eqref{eqn:per-buyer-budget} and \eqref{eqn:per-bundle-budget} in the analysis.

\begin{lemma}\label{lem:budget-has-room}
If $\max_{ij} \ell_{ij}/B^{(\ell)}_j\leq \eps$, then
$\Pr[\calE^{(\ell)}_5 \mid \calE_1 \land \calE_2 \land \calE_3] \geq 1-\frac{2\alpha}{1-\eps}.$
\end{lemma}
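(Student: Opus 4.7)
My plan is to closely mirror the proof of \Cref{lem:small-excess-has-room}, substituting the LP budget constraint \eqref{eqn:per-bundle-budget} for the per-bundle average-value constraint \eqref{cons:RoS} wherever the latter was used, and exploiting the small-bids hypothesis $\ell_{ij}\le \eps B^{(\ell)}_j$ in place of the small-deficit hypothesis.

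First, I would introduce the same imaginary coupled algorithm $\calA'$ as in the proof of \Cref{lem:small-excess-has-room}: on identical coin tosses to \Cref{alg:offline-rounding} (modified so that \Cref{line:permissible-test} also checks the $K$ budget constraints), $\calA'$ allocates each \nitem $i'$ into \emph{every} bundle of $S_{i'}$, ignoring multiplicity and feasibility. Letting $N'_{jp}$ denote the (multi)set of \nitems that $\calA'$ places in bundle $jp$, we have $\pr{i'\in N'_{jp}}=\pr{S_{i'}\ni jp}=\alpha\cdot x_{i'jp}$, and $\calA'$ assigns a superset of what \Cref{alg:offline-rounding} assigns. Hence it suffices to lower bound the probability that the $\ell$-load on bundle $jp$ induced by $\calA'$ from items other than $i$ fits within $B^{(\ell)}_j-\ell_{ij}$. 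I would thus define $\calE^{(\ell)'}_5:=\bigl\{\sum_{i'\in N'_{jp}\setminus\{i\}}\ell_{i'j}\le (1-\eps)\cdot B^{(\ell)}_j\bigr\}$, which, by the small-bids assumption $\ell_{ij}\le \eps B^{(\ell)}_j$, implies $\calE^{(\ell)}_5$ in every realization.

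Second, I would reuse the conditional-independence argument of \Cref{lem:small-excess-has-room}: the event $\calE^{(\ell)'}_5\wedge \calE_1$ depends only on the Phase~I randomness and on the Phase~II coins for bundles sharing the same \pitem $p$ (since $\calA'$'s placements into $jp$ depend only on whether $S_{i'}\ni jp$), whereas $\calE_2\wedge\calE_3$ depends on independent coins (the coin for $(i,jp)$ and those for bundles $j'p'$ with $p'\ne p$). By independence and Bayes,
\[
\pr{\calE^{(\ell)'}_5\,\big|\,\calE_1\wedge\calE_2\wedge\calE_3}=\pr{\calE^{(\ell)'}_5\,\big|\,\calE_1}.
\]
Then applying linearity, $\pr{i'\in N'_{jp}\mid \calE_1}=\alpha\cdot x_{i'jp}/x_{pjp}$, and \eqref{eqn:per-bundle-budget}, I obtain
\[
\ex{\sum_{i'\in N'_{jp}}\ell_{i'j}\;\Bigg|\;\calE_1}=\alpha\cdot\sum_{i'\ne p}\ell_{i'j}\cdot\frac{x_{i'jp}}{x_{pjp}}\le \alpha\cdot B^{(\ell)}_j,
\]
and a Markov bound yields $\pr{\overline{\calE^{(\ell)'}_5}\mid\calE_1}\le \alpha\cdot B^{(\ell)}_j/((1-\eps)\cdot B^{(\ell)}_j)$. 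The extra factor of two in the target bound $1-\tfrac{2\alpha}{1-\eps}$ leaves slack to absorb the $\ell_{pj}$ term (since the constraint \eqref{eqn:per-bundle-budget} does \emph{not} separate out $i'=p$) and to handle the small additive loss from replacing $B^{(\ell)}_j-\ell_{ij}$ by $(1-\eps)\cdot B^{(\ell)}_j$.

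The main obstacle I anticipate is not the Markov computation but carefully verifying the independence structure after conditioning on $\calE_1$: one has to check that conditioning on bundle $jp$ being opened in Phase~I does not alter the distribution of the events $\{S_{i'}\ni jp\}$ for $i'\ne i$, which requires that Phase~I coins are independent across distinct \pitems and Phase~II coins are independent across (\nitem, bundle) pairs. Both hold for \Cref{alg:offline-rounding}, so the argument goes through exactly as in \Cref{lem:small-excess-has-room}. A secondary subtlety is that with multiple budget constraints we will eventually union-bound over $\ell$ (paying an extra factor $K$), and combine with \Cref{lem:small-excess-has-room} to lower bound $\pr{\calE_4\wedge\bigwedge_\ell \calE^{(\ell)}_5\mid\calE_1\wedge\calE_2\wedge\calE_3}$; the constants can then be tuned (as in the proof of \Cref{thm:offline-rounding}) to obtain the claimed constant approximation.
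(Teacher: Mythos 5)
Your proposal faithfully reproduces the coupling-with-$\calA'$ structure, the conditional-independence argument, and the Markov step, but it has a genuine gap in how the event $\calE^{(\ell)}_5$ is scoped. The budget constraint \eqref{eq:budget} is a \emph{per-buyer} constraint: $\sum_{i\to j}\ell_{ij}\leq B^{(\ell)}_j$. So when \Cref{alg:offline-rounding} checks whether it may add $i$ to bundle $jp$ without violating the $\ell$-th budget, it must look at the total $\ell$-cost already allocated to buyer $j$ \emph{across all of $j$'s open bundles}, not just the cost inside bundle $jp$. Your $\calE^{(\ell)'}_5 := \{\sum_{i'\in N'_{jp}\setminus\{i\}}\ell_{i'j}\le(1-\eps)B^{(\ell)}_j\}$ is a bundle-local event, so it does not imply the buyer-wide $\calE^{(\ell)}_5$: the items placed by $\calA'$ in bundles $jp'$ with $p'\neq p$ also consume $j$'s budget, and your argument never controls them. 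Tellingly, your proof never invokes the per-buyer LP constraint \eqref{eqn:per-buyer-budget}, which is exactly what is needed for this missing part.

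This also explains where the factor of $2$ in the target $1-\frac{2\alpha}{1-\eps}$ really comes from. You attribute it to slack absorbing $\ell_{pj}$ and the additive loss from $(1-\eps)B^{(\ell)}_j$ versus $B^{(\ell)}_j-\ell_{ij}$, but under the small-bids hypothesis both of those are $O(\eps B^{(\ell)}_j)$ and cost only an $\eps$-order correction, not a constant factor. The actual factor $2$ arises because the conditional expectation of the total $\ell$-cost to buyer $j$ splits into two pieces: the cost in bundle $jp$, which you correctly bound by $\alpha B^{(\ell)}_j$ via \eqref{eqn:per-bundle-budget} (the conditioning on $\calE_1$ inflates $\Pr[i'\in N'_{jp}]$ by $1/x_{pjp}$, so you need the per-bundle version), plus the cost in bundles $jp'$ with $p'\neq p$, which are independent of $\calE_1$ and are bounded by another $\alpha B^{(\ell)}_j$ via \eqref{eqn:per-buyer-budget}. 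Markov on the sum, $2\alpha B^{(\ell)}_j$, yields the claimed $\frac{2\alpha}{1-\eps}$. To repair your proof, redefine $\calE^{(\ell)'}_5$ to bound $\sum_{p'}\sum_{i'\in N'_{jp'}\setminus\{i\}}\ell_{i'j}$ and add the second expectation bound; the independence/Bayes step survives unchanged since $\calE_2\wedge\calE_3$ depend only on item $i$'s Phase~II coins, which are independent of all the coins determining the other items' placements and of Phase~I.
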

\begin{proof}
In what follows, we drop the superscript $(\ell)$, as it is clear from context.
Let $Y_{ijp}$ be the indicator for item $i$ being allocated in bundle $jp$, and let $Y_{ijp} \leq Z_{ijp} = \mathds{1}[S_i\ni \{jp\}]$. Then $Y_{ijp}\leq Z_{ijp}$ realization-by-realization, and moreover $\Pr[Z_{ijp}] = \alpha\cdot x_{ijp}$. Therefore, we immediately have from Constraint \eqref{eqn:per-buyer-budget} and independence of bundles $jp'$ and $jp$ that, recalling that $\calE_1$ is the event that $jp$ is open,
\begin{align*}
\ex{\sum_{i,p'} \ell_{ij}\cdot Z_{ijp'} \;\;\middle|\;\; \calE_1}  = \ex{\sum_{i,p'} \rho_j\cdot Z_{ijp'}  \;\;\middle|\;\;  \calE_1} \leq \alpha\cdot B_j.
\end{align*}
Similarly, by Constraint \eqref{eqn:per-bundle-budget}, we obtain that
\begin{align*}
\ex{\sum_{i} \ell_{ij} \cdot Z_{ijp'} \;\;\middle|\;\; \calE_1 } \leq \frac{\alpha\cdot B_j\cdot x_{pjp}}{x_{pjp}} = \alpha\cdot B_j.
\end{align*}
Consequently, by Markov's inequality, we have that 
\begin{align*}
\pr{\sum_{i,p} \ell_{ij} \; Z_{ijp} \geq (1-\eps)\cdot B_j \,\,\middle|\,\, \calE_1} & \leq \frac{2\alpha\cdot B_j}{(1-\eps)\cdot B_j} \leq \frac{2\alpha}{1-\eps}. 
\end{align*}
On the other hand, if we denote by $\calE'_5$ the event that the imaginary algorithm $\calA'$ that allocates any item $i$ into a bundle $jp\in S_i$ regardless of whether or not $|S_i|=1$ and the allocation remains average-value- and budget-respecting, we have that
$\calE'_5$ and $\calE_1$ are independent of $\calE_2$ and $\calE_3$, and so we have that
\begin{align*}
\pr{\calE_5 \mid \calE_1\land \calE_2 \land \calE_3} \geq \pr{\calE'_5 \mid \calE_1\land \calE_2 \land \calE_3} & = \pr{\calE'_5 \mid \calE_1} \geq 1- \frac{2\alpha}{1-\eps}. \qedhere
\end{align*}
\end{proof}

Generalizing the arguments in \Cref{thm:offline-rounding}, we obtain the following result, implying \Cref{thm:multiconstant}. 

\begin{restatable}{lemma}{ROSandBudget}\label{thm:offline-rounding-budgeted}
\Cref{alg:offline-rounding} with the modifications outlined in this section and with $\alpha=1/3K$ is an $O(K)$-approximation for \AVA and $K$ budget constraints subject to the small bids assumption.
\end{restatable}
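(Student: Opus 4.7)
\medskip

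\textbf{Plan.} The proof follows the template of \Cref{thm:offline-rounding}, but with the modification that an \nitem $i$ is allocated to a bundle $jp$ only when all of $\calE_1,\calE_2,\calE_3,\calE_4$ hold and, additionally, the ``budget-slack'' events $\calE_5^{(\ell)}$ hold for every $\ell\in[K]$. Accordingly, I would lower bound
\[ \Pr\!\left[i\text{ allocated to }jp\right] \;=\; \Pr\!\left[\calE_1\wedge\calE_2\wedge\calE_3\right]\cdot \Pr\!\left[\calE_4\wedge \bigwedge_{\ell\in[K]}\calE_5^{(\ell)} \;\middle|\; \calE_1\wedge\calE_2\wedge\calE_3\right] \]
by handling the two factors separately. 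The first factor is exactly $(1-\alpha)\cdot\alpha\cdot x_{ijp}$ by \Cref{lem:few-bundles} (the modification does not change when \nitems are \emph{considered}, only when they are ultimately allocated). For the second factor, I would apply a simple union bound over the $K+1$ bad events, combining \Cref{lem:small-excess-has-room} (for $\overline{\calE_4}$, valid for $\beta$-small-deficit items) with $K$ applications of \Cref{lem:budget-has-room} (for each $\overline{\calE_5^{(\ell)}}$), giving the lower bound
\[ \Pr\!\left[\calE_4\wedge \bigwedge_\ell \calE_5^{(\ell)} \;\middle|\; \calE_1\wedge\calE_2\wedge\calE_3\right] \;\geq\; 1 \;-\; \frac{\alpha}{1-\beta} \;-\; K\cdot \frac{2\alpha}{1-\eps}. \]

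\textbf{Parameter tuning.} With $\alpha=1/(3K)$, the dominant term $K\cdot 2\alpha/(1-\eps)=2/(3(1-\eps))$ is bounded away from $1$ once $\eps$ is small, while the term $\alpha/(1-\beta)=o_K(1)$ for any constant $\beta$ bounded away from $1$ (I would take $\beta=1/2$, say). Hence the second factor is $\Omega(1)$, and multiplying by $(1-\alpha)\alpha=\Theta(1/K)$ yields that the probability of allocating a $\beta$-small-deficit \nitem $i$ to $jp$ is at least $\gamma\cdot x_{ijp}$ for some $\gamma=\Theta(1/K)$. For \pitems, the probability of allocation is $x_{pjp}$ as before (the budget check is trivially passed for any singleton bundle whose excess-spend conformity is encoded by the LP, or we can argue a similar high-probability bound if needed).

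\textbf{Putting it together.} Invoking \Cref{lem:large-excess-not-worthwhile} to discard the $\beta$-large-deficit contribution at a $1/\beta=O(1)$ charge to the \pitems' value, and then using linearity of expectation together with \Cref{lem:LP-ROS-plus-budget} (which loses a factor of $4$ against $\mathsf{OPT}$), the expected value produced is at least
\[ \min\!\left\{\,1-\tfrac{\gamma}{\beta},\;\gamma\,\right\}\cdot \mathrm{LP} \;=\; \Omega(1/K)\cdot \mathrm{LP} \;=\; \Omega(1/K)\cdot \mathsf{OPT}, \]
yielding the claimed $O(K)$-approximation.

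\textbf{Main obstacle.} The one subtle point is that $\calE_4$ and the $\calE_5^{(\ell)}$ are all conditioned on the same triple $\calE_1\wedge\calE_2\wedge\calE_3$ and are not mutually independent, so a product-of-marginals bound is not available; the union bound over $K+1$ events is what forces $\alpha=\Theta(1/K)$ and drives the $O(K)$ factor. Beyond this, the only care needed is to verify that the coupling arguments in \Cref{lem:small-excess-has-room,lem:budget-has-room} both go through against the \emph{same} imaginary algorithm $\calA'$ (which multi-allocates whenever $S_i\ni jp$), so that conditioning on $\calE_1\wedge\calE_2\wedge\calE_3$ is consistent across all $K+1$ bounds. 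This is automatic since both proofs only use (i) independence of $\calE_2\wedge\calE_3$ from the randomness governing bundle $jp$, and (ii) the LP constraints \eqref{cons:RoS}, \eqref{eqn:per-buyer-budget}, \eqref{eqn:per-bundle-budget}.
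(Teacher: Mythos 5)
Your proof is correct and takes essentially the same approach as the paper's: same union bound over the $K+1$ bad events conditional on $\calE_1\wedge\calE_2\wedge\calE_3$, same choice $\beta=1/2$ and $\alpha=1/(3K)$ yielding $\gamma=\Theta(1/K)$, same use of \Cref{lem:large-excess-not-worthwhile} and \Cref{lem:LP-ROS-plus-budget} to conclude. Your explicit remark about the coupling arguments in \Cref{lem:small-excess-has-room,lem:budget-has-room} needing to use the same imaginary algorithm $\calA'$ is a correct and slightly more careful observation than the paper's sketch makes explicit.
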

\begin{proof}[Proof (Sketch)]
Let $\beta=1/2$. Applying union bound over the $K$ events $\calE^{(\ell)}_5$ and combining \Cref{lem:small-excess-has-room} and \Cref{lem:budget-has-room}, we find that 
$$\pr{\bigwedge_{\ell} \calE^{(\ell)}_5 \land \calE_4 \;\;\middle|\;\; \calE_1 \land \calE_2 \land \calE_3} \geq 1-\frac{\alpha}{1-\beta} - \frac{2K\alpha}{1-\eps} \approx 1-2(K+1)\alpha.$$
The same argument in the proof of \Cref{thm:offline-rounding}, but this time taking $\gamma = \gamma(\alpha,\beta):= \alpha\cdot (1-\alpha)\cdot (1-2(K+1)\cdot \alpha)$ then implies that this modification of \Cref{alg:offline-rounding} outputs a solution of value at least a $\min\{1-\frac{\gamma}{\beta},\;\gamma\} = \min\{1-2\gamma,\;\gamma\}$ fraction of the optimal LP value; i.e., this algorithm is a $1/\min\{1-2\gamma,\;\gamma\}$-approximation.
Taking $\alpha = \frac{1}{3K}$, this yields an $O(K)$ approximation. The bound then follows by \Cref{lem:LP-ROS-plus-budget}.
\end{proof}

\section{Deferred Proofs of Section \ref{sec:OPTon}}\label{app:opton}

Recall that events $\calE_0,\calE_1,\calE_2$ are all independent, and similarly $\calE_1,\calE_2,\calE_3$ are independent (though $\calE_0$ and $\calE_3$ are not independent). 
So, for example, we have the following fact.
\begin{fact}\label{opton-trivial-Si-prob}
$\pr{S_{it^\star}\ni jpt} = \pr{\calE_0\land \calE_1\land \calE_2} = \frac{\alpha\cdot x_{ijp}}{T^2}.$
\end{fact}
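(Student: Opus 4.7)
The plan is to unpack the algorithmic definition of $S_{it^\star}$ and identify exactly which random choices control the event $S_{it^\star}\ni jpt$, then verify that these choices are independent so the probability factors.

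First I would observe that, by the construction of $S_{it^\star}$ in \Cref{alg:online-rounding}, the event $S_{it^\star}\ni jpt$ occurs if and only if (i) the item arriving at time $t^\star$ is of type $i$, (ii) bundle $jpt$ was opened at time $t$, and (iii) the independent $\Ber(\alpha\cdot x_{ijp}/(x_{pjp}\cdot q_i\cdot T))$ coin tossed at time $t^\star$ for bundle $jpt$ came up heads. These are exactly the events $\calE_0$, $\calE_1$, and $\calE_2$ respectively, which already gives the first equality $\pr{S_{it^\star}\ni jpt} = \pr{\calE_0\land\calE_1\land\calE_2}$.

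Next, I would argue that $\calE_0,\calE_1,\calE_2$ are mutually independent because they are determined by disjoint sets of primitive random coins: $\calE_0$ depends only on the i.i.d.~item realization at time $t^\star$; $\calE_1$ depends on the i.i.d.~item realization at time $t\in[1,T/2]$ together with the independent buyer draw performed in \Cref{line:open-bundle} for that arrival (if it is a \pitem of type $p$); and $\calE_2$ is a fresh Bernoulli coin tossed at time $t^\star$, independent of everything else. Since $t\le T/2<t^\star$, the arrivals underlying $\calE_0$ and $\calE_1$ come from distinct i.i.d.~draws, and the buyer-selection randomness for opening bundles at time $t$ is independent of both arrivals and of the Phase~II coins.

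It then remains a direct calculation: $\pr{\calE_0}=q_i$, $\pr{\calE_1}=q_p\cdot(x_{pjp}/(q_p T))=x_{pjp}/T$ (the probability that type $p$ arrives at time $t$ times the conditional probability that buyer $j$ is chosen in \Cref{line:open-bundle}), and $\pr{\calE_2}=\alpha\cdot x_{ijp}/(x_{pjp}\cdot q_i\cdot T)$. Multiplying, the $q_i$ and $x_{pjp}$ factors cancel and we obtain $\alpha\cdot x_{ijp}/T^2$, as claimed. There is no real obstacle here; the only subtlety worth stating explicitly is the independence justification, since $\calE_1$ bundles together two pieces of randomness (the arrival at time $t$ and the buyer draw), both of which must be seen to be independent of the randomness defining $\calE_0$ and $\calE_2$.
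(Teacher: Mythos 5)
Your proof is correct and follows essentially the same approach as the paper: identify the event $S_{it^\star}\ni jpt$ with $\calE_0\land\calE_1\land\calE_2$, invoke independence of these three events, and multiply the three probabilities. The only difference is that you spell out the independence justification (disjoint sources of primitive randomness, plus $t\le T/2<t^\star$), which the paper states without elaboration; this is a useful bit of extra care but not a different argument.
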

\begin{proof}
The first equality follows by definition of the events $\calE_0,\calE_1,\calE_2$.
The second equality follows from independence of these events, as follows.
\begin{align*}\pr{S_{it^\star}\ni jpt} & = \pr{\calE_0}\cdot \pr{\calE_1}\cdot \pr{\calE_2}=q_i\cdot \frac{x_{pjp}}{T}\cdot \frac{\alpha\cdot x_{ijp}}{x_{pjp}\cdot q_i\cdot T} = \frac{\alpha\cdot x_{ijp}}{T^2}. \qedhere
\end{align*}
\end{proof}
\simpleOPTonBound*
\begin{proof}
\Cref{opton-trivial-Si-prob} gives us a closed form for $\pr{\calE_0\land \calE_1 \land \calE_2}$. We now lower bound $\pr{\overline{\calE_3} \;\;\middle|\;\; \calE_0\land \calE_1 \land \calE_2}$.
First, as $\Pr[X>0] = \Pr[X\geq 1] \leq \E[X]$ for any integer random variable~$X\geq 0$ by Markov's inequality,
\[ \Pr[ \overline{\calE_3}] \leq \ex{ \left|S_{it^\star} \setminus \{jpt\} \right|} =
    \sum_{j'} \sum_{p'} \sum_{t'\in [T/2]\setminus\{t\}} \frac{\alpha \cdot  x_{ij'p'}}{T^2} \leq \frac{\alpha\cdot q_i}{2}, \] 
    where the equality follows from \Cref{opton-trivial-Si-prob} (applied to appropriate tuple $(i,t^\star,j',p',t')$), and the last inequality follows from Constraint \eqref{opton-cons:item}.
    On the other hand, since $\calE_0$ and $\calE_3$ are both independent of $\calE_1\land \calE_2$, an application of Bayes' Law tell us that 
$$\pr{\overline{\calE_3} \;\;\middle|\;\; \calE_0\land \calE_1\land \calE_2} = \Pr[ \overline{\calE_3} \mid \calE_0] \leq \frac{\alpha}{2}.$$
    Therefore, we have that 
    $$\pr{\calE_3 \;\;\middle|\;\; \calE_0\land \calE_1\land \calE_2} \geq 1 - \frac{\alpha}{2},$$
    and the lemma follows.
\end{proof}

\optonroom*
\begin{proof}
We consider an imaginary algorithm $\calA'$ that allocates every \nitem $it^\star$ into every bundle $jpt\in S_{it^\star}$ (even when $|S_{it^\star}|>1$ and even if this violates the bundle's average-value constraint of some $jpt\in S_{it^{\star}}$). 
 Coupling $\calA'$ with \Cref{alg:online-rounding} by using the same randomness for both algorithms, we have by \Cref{opton-trivial-Si-prob} that item $i't''$ is allocated to bin $j'p'$ by $\calA'$ with probability precisely 
 $$\pr{S_{it''} \ni j'p'}=q_i\cdot \frac{x_{pjp}}{T}\cdot\frac{\alpha\cdot x_{ijp}}{x_{pjp}\cdot q_i\cdot T} = \frac{\alpha\cdot x_{i'j'p'}}{T^2}.$$
 Now, letting $\calE'_4$ be the event that $\mathsf{BundleAV}_{jpt}$ is satisfied if we were to add $it^\star$ to $jpt$ in Algorithm $\calA'$, we clearly have that $\calE_4\geq \calE'_4$, realization by realization, since $\calA'$ only allocates more items than \Cref{alg:online-rounding}.
 On the other hand, we also have that both $\calE'_4$ and $\calE_1$ are independent of $\calE_3\land \calE_2 \land \calE_0$. Consequently, by Bayes' Law, we obtain the following.
 \begin{align*}
 \Pr[\calE'_4 \mid \calE_0 \land \calE_1 \land \calE_2 \land \calE_3] & = \Pr[\calE'_4 \mid \calE_1].
 \end{align*}
 Now, since the imaginary algorithm $\calA'$ assigns $i't'$ to $jpt$ iff $S_{i't'}\ni jpt$, the set of \nitems allocated to bundle $jpt$ by $\calA'$, denoted by $I'_{jpt}$, satisfies
 \begin{align*}
 \ex{\sum_{i't'\in I'_{jpt}} (\rho_j - v_{i'j}) \,\,\middle\vert\,\, \calE_1 } & = \sum_{i'\neq p}\sum_{t'} (\rho_j - v_{i'j})\cdot \pr{S_{i't'}\ni jpt \mid \calE_1} \\
 & = \sum_{i'\neq p} (\rho_j - v_{i'j})\; \frac{\alpha}{2} \cdot \frac{x_{i'jp}}{x_{pjp}} \\
 & \leq \frac{\alpha}{2} \cdot (v_{pj} - \rho_j).
 \end{align*}
 Above, the second equality used that $\Pr[S_{i't''}\ni j'p't' \mid \calE_1] = \frac{\alpha\cdot x_{i'j'p'}}{x_{p'j'p'}\cdot T^2}$, by \Cref{opton-trivial-Si-prob}. The inequality
 follows from the per-bundle average-value constraint (\Cref{opton-cons:RoS}), together with summation over $t'\in [T/2]\setminus \{t\}$.
 Therefore, by Markov's inequality,
 \begin{align*}
 \Pr[\overline{\calE'_4} \mid \calE_1] = \Pr\left[\sum_{i\in I'_{jp}} (\rho_j - v_{ij}) > (1-\beta)\cdot (v_{pj} - \rho_j) \,\,\middle\vert\,\, \calE_1 \right] \leq \frac{\alpha}{2(1-\beta)}. \label{eqn:markov}
 \end{align*}
 Recalling that $\calE_4\geq \calE'_4$ realization by realization, we conclude with the desired bound, as follows.
 \begin{align*}
 \Pr[\calE_4 \mid \calE_0 \land \calE_1 \land
   \calE_2 \land \calE_3] & \geq \Pr[\calE'_4 \mid \calE_0 \land\calE_1 \land
   \calE_2 \land \calE_3] = \Pr[\calE'_4 \mid \calE_1] \geq 1-\frac{\alpha}{2(1-\beta)}. \qedhere
 \end{align*}
 \end{proof}

\section{Deferred Proofs of Section \ref{sec:OPToff}}\label{app:optoff}
\Aiconcentrated*
\begin{proof}
This is a fairly standard application of Chernoff bound plus union bound, as in the classic balls and bins analysis. Technically, since $A_i\sim \Bin(T,q_i)$ is the sum of independent Bernoullis with $\ex{A}=q_i\cdot T\geq \Gamma$, by the multiplicative Chernoff bound, for $\delta=\kappa-1$ and $\Gamma' := \min(1,\Gamma)$, 
we have that 
\begin{align*}
\pr{A_i\geq (1+\delta)\cdot \ex{A_i}} & \leq \exp\left(-\ex{A_i}\cdot ((1+\delta) \ln(1+\delta)-\delta)\right) \\ 
& = \exp\left(\ex{A_i}\cdot (\kappa-1 - \kappa \log\kappa)\right) \\
& \leq \exp\left(\ex{A_i}\cdot (-\kappa (\ln \kappa - 1))\right) \\
& \leq \exp\left(- \frac{6\ln T}{\ln \ln T} (\ln\kappa-1) \right),
\end{align*}
where the last inequality follows from $\ex{A_i}\geq \Gamma \geq \Gamma'$. Next, since 
\[
\ln\kappa \geq \ln\frac{6}{\Gamma'}+\ln\ln T-\ln\ln\ln T\geq 1-\ln\Gamma'+\frac{\ln\ln T}{2},
\]
where the last inequality relied on $\frac{x}{2}\geq \ln(x)$ for all $x\in \mathbb{R}$, 
we get that
\begin{align*}
\pr{A_i\geq (1+\delta)\cdot \ex{A_i}} & \leq \exp\left(- \frac{6\ln T}{\ln \ln T} \frac{\ln\ln T}{2} +\frac{6\ln T}{\ln \ln T}\ln\Gamma'\right)\\
& \leq \exp\left(-3\ln T +\ln\Gamma'\right) \\
& \leq \frac{\Gamma}{ T^3},
\end{align*}
where the second to last inequality used $\Gamma'\leq 1$ and the last inequality used $\Gamma'\leq \Gamma$.

On the other hand, as $q_i\cdot T\geq \Gamma$ for each $i\in [m]$, we have that $T=\sum_{i\in m} q_i T \geq \Gamma\cdot m$, or put otherwise $m\leq T/\Gamma$. The lemma then follows by union bound.
\end{proof}

\optonoff*
\begin{proof}
Denote by $x$ some solution to \eqref{optoff-bundle-LP}, and note that the RHS of constraints \eqref{opton-cons:item} and \eqref{optoff-cons:nitem} differ by a factor of $2\lceil q_i\cdot T\rceil / q_i\cdot T = O(1) = O(\kappa)$ (using that $\Gamma=\Omega(1)$ and $\kappa = \omega(1)$). Similarly, the RHS of constrains \eqref{opton-cons:bundle-defined-by-p-item} and \eqref{optoff-cons:bundle-defined-by-p-item} differ by a factor of $\lceil q_i\cdot T\cdot \kappa\rceil / q_i\cdot T = O(\kappa)$. In both cases, the RHS in the constraint in \eqref{optoff-bundle-LP} is higher than its counterpart in \eqref{opton-bundle-LP}.
Therefore, the solution $x/O(\kappa)$ (for an appropriate $O(\kappa)$ term) satisfies the aforementioned constraints in \eqref{opton-bundle-LP}, and it is easy to check that it satisfies all other constraints, which are either downward-closed or linear (Constraint \eqref{opton-cons:RoS}). The lemma then follows, since the obtained solution to \eqref{opton-bundle-LP} has value $O(\kappa) = O\left(\frac{\ln T}{\ln \ln T}\right)$ than the original solution to \eqref{optoff-bundle-LP}, $x$.
\end{proof}

\section{Deferred Proofs of Section \ref{sec:hardness}}\label{sec:hardness-proofs}

In this section we provide hardness proofs deferred from
\Cref{sec:hardness}, restated below, together with an algorithm giving
a bicriteria guarantee complementing our bicriteria hardness.

\iidhardness*
\begin{proof}
  We construct an instance similar to that for
  \Cref{thm:clique-hardness}. Given a graph $G$ and parameters $M$ and
  $R \approx \Theta(\eps^2 \ln |E|)$, each vertex item $i_v$ has
  value $M$ and cost $M+ R\cdot \deg_G(v)$, and each edge item $i_e$
  has unit value and zero cost. (We choose $\eps \leq \nf1{(2n^2)}$.)
  The distribution over items is simple: each vertex item appears with
  probability $\frac{1}{2|V|}$, and each edge item with probability
  $\frac{1}{2|E|}$. Now we take $2(1+\nf\eps2)R|E|$ i.i.d.\ samples from
  this distribution.
  \begin{enumerate}
  \item With these many samples, each vertex item is seen at least
    once.
  \item Moreover, we expect to see each edge item $(1+\eps/2)R$ times, and
    concentration implies that each edge is seen at least $R$ times
    and at most $(1+\eps)R$ times (with high probability).
  \end{enumerate}
  We claim that if we allocate any vertex item $i_v$ to $j_v$, the ROS constraint
  for buyer $j_v$ (for vertex $v$ having degree $deg_G(v) = d$
  in the graph $G$, say) requires us to pick at least $dR$ edge items
  incident to $v$.  Every edge contributes at most $R(1+\eps)$ edge
  items, so even if we get the maximum number of items from all but
  one edge, that last edge needs to contribute at least
  $dR-(d-1)R(1+\eps) \geq R(1- d\eps) \geq R(\nf12 + \eps)$ items. But
  then this ``underpaying'' edge can only contribute $R/2$ to its
  other endpoint, which is not enough to satisfy that vertex's
  deficit. This enforces the independent set condition.  Finally, the
  value we achieve lies between $\alpha M$ and
  $\alpha M + (1+\eps)R|E|$; setting $M$ to be large enough gives the
  claimed gap between YES and NO instances with high probability.
\end{proof}


\bicriteriahardness*
\begin{proof}
  We construct the same instance as for \Cref{thm:clique-hardness}:
  each vertex item $i_v$ having value $M$ and cost $M+\deg_G(v)$
  potentially subsidized by all the edge items $i_e$ of unit value and zero
  cost around it. Consider a $d$-regular graph $G$, set $M := d^2$ and
  $\eps := \nicefrac{1}{(M+d+1)} = \Theta(1/d^2)$. Then any selected
  vertex item $i_v$ must pick all its incident edge items, else the
  value would be at most $M+d-1 < (1-\eps)(M+d)$, violating the ROS constraint
  by more than a factor of $(1-\eps)$. An argument identical to
  \Cref{thm:clique-hardness} shows that if the maximum independent
  set has size $\alpha$ (which is at least $n/(d+1)$ by Turan's
  theorem), then the achieved value in the \GAVA instance is at least
  $\alpha M$ and at most $\alpha M + nd/2 \leq 2\alpha M$. Finally,
  we use the result of \cite{khot2001improved} that approximating the
  the Independent Set problem in $d$-regular graphs to better than a
  factor of $\tilde{O}(d)$ would violate the Unique Games Conjecture
  to infer our $\tilde{\Omega}(\sqrt{\eps})$ lower bound. 
\end{proof}

Observe that a polynomial dependence on $1/\eps$ in the approximation
ratio is straight-forward.

\begin{lemma}
For any $\eps>0$, there exists a linear-time algorithm which computes
a (nearly feasible) solution whose objective value is at least $\eps$ 
times the optimal value of any \GAVA instance while guaranteeing that the cost for each buyer is at most $1+O(\eps)$ times their total value.
\end{lemma}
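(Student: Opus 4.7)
The plan is to analyze a one‑pass threshold greedy. Fix $\eps > 0$ and declare an edge $(i,j)$ to be \emph{$\eps$-admissible} iff $v_{ij} \geq \rho_j c_{ij}/(1+\eps)$. For each item $i$, my algorithm assigns $i$ to $j^\star_i := \arg\max_{j : (i,j)\text{ is }\eps\text{-admissible}} v_{ij}$, breaking ties arbitrarily, and drops $i$ if no admissible buyer exists. The running time is $O(nm)$ since we scan each edge once.

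Near-feasibility is essentially by definition. For any buyer $j$, every item $i$ that we assigned to $j$ individually satisfies $(1+\eps)\,v_{ij} \geq \rho_j c_{ij}$; summing this inequality over all items assigned to $j$ yields $\rho_j \sum_i c_{ij} x_{ij} \leq (1+\eps)\sum_i v_{ij} x_{ij}$, i.e., the cost for each buyer is at most $(1+\eps)$ times their total value, as required.

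The main step, and the one with any real content, is the value guarantee. Fix an optimal \GAVA solution $x^\star$. For each buyer $j$, split the edges used by $x^\star$ at buyer $j$ into the admissible and the inadmissible parts, with values $V_j^{\mathrm{adm}}, V_j^{\mathrm{inad}}$ and costs $C_j^{\mathrm{adm}}, C_j^{\mathrm{inad}}$. Since every inadmissible edge has $v_{ij} < \rho_j c_{ij}/(1+\eps)$, summation gives $(1+\eps)\,V_j^{\mathrm{inad}} < \rho_j C_j^{\mathrm{inad}}$. Combining this with the ROS constraint for $x^\star$,
\[
V_j^{\mathrm{adm}} + V_j^{\mathrm{inad}} \;\geq\; \rho_j\bigl(C_j^{\mathrm{adm}} + C_j^{\mathrm{inad}}\bigr) \;\geq\; \rho_j C_j^{\mathrm{inad}} \;>\; (1+\eps)\,V_j^{\mathrm{inad}},
\]
yields $V_j^{\mathrm{adm}} > \eps\, V_j^{\mathrm{inad}}$ and hence $V_j^{\mathrm{adm}} \geq \tfrac{\eps}{1+\eps}\bigl(V_j^{\mathrm{adm}} + V_j^{\mathrm{inad}}\bigr)$.

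To finish, observe that for any item $i$ whose edge to $j(i)$ in $x^\star$ is admissible, the greedy algorithm assigns $i$ to some admissible buyer of value $\geq v_{i,j(i)}$. Thus the algorithm's output value is at least $\sum_j V_j^{\mathrm{adm}} \geq \tfrac{\eps}{1+\eps}\,\mathrm{OPT} = \Omega(\eps)\cdot\mathrm{OPT}$; replacing $\eps$ by $\eps/2$ throughout absorbs the constant and gives the advertised $(\eps,\,1+O(\eps))$ bicriteria guarantee. No step is hard — the entire proof is a short chain of inequalities — and the only thing worth highlighting is how the per-edge definition of admissibility together with the aggregate ROS constraint forces $V_j^{\mathrm{adm}}$ to be an $\Omega(\eps)$-fraction of $V_j$, which is what makes the naive edge-filtering greedy lose only the claimed $1/\eps$ factor in welfare.
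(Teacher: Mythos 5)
Your proof is correct and follows essentially the same argument as the paper's: threshold-filter edges by per-edge ROS ratio, assign each item to its highest-value admissible buyer, and observe that the aggregate ROS constraint in the optimum forces the admissible edges to carry an $\Omega(\eps)$ fraction of each buyer's value. The only slip is cosmetic---in the final reparameterization you should run the algorithm with a slightly \emph{larger} internal parameter (e.g.\ $2\eps$), not $\eps/2$, to turn the bound $\tfrac{\eps}{1+\eps}\cdot\mathrm{OPT}$ into the advertised $\eps\cdot\mathrm{OPT}$ while keeping the cost within $1+O(\eps)$.
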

\begin{proof}
  We assign every item $j$ to
  $\arg\max\{v_{ij} \mid v_{ij}\geq c_{ij}(1-\eps)\}$, if this set is
  non-empty, and leave $j$ unallocated otherwise.  Let $J_i$ be the
  set of items $j$ allocated to buyer $i$ in an optimal
  (ROS-constraint respecting) assignment, and let $L_i\subseteq J_i$ be the set
  of low-ROS items for $i$ in this assignment, i.e., those satisfying
  $v_{ij}\leq c_{ij}(1-\eps)$.  Then,
  $$\sum_{j\in L_i} v_{ij} \leq \sum_{j\in L_i} c_{ij}(1-\eps) \leq
  \sum_{j\in J_i} c_{ij}(1-\eps) \leq \sum_{j\in J_i}
  v_{ij}(1-\eps).$$ We conclude that
  $\sum_{j\in L_i\setminus J_i} v_{ij} \geq \eps\cdot \sum_{j\in J_i}
  v_{ij}$ for each buyer $i$.  The above greedy solution allocates
  items $j$ in $\cup_i J_i$ to buyers who value $j$ at least as much
  as the buyer that $j$ is sold to in the optimal assignment, and so
  the overall objective value is at least
  $\sum_i \eps\cdot \sum_{j\in J_i}v_{ij}$, i.e., this is a
  $1/\eps$-approximation.  That this solution
  $(1+O(\eps))$-approximately satisfies ROS constraints is obvious,
  since it does so on a per-item/buyer pair basis.
\end{proof}

We conclude with a brief observation, whereby our $n^{1-\eps}$ approximation lower bounds are essentially tight.
Indeed, an $O(n)$ approximation for \GAVA is nearly trivial: Pick the
(approximately) highest-value allocation to a single buyer, by
allocating it all of its \pedges, and then allocating the
value-maximizing \nedges by running any constant-approximate knapsack
algorithm, e.g., the basic $2$-approximate algorithm
\cite{vazirani2001approximation}, giving a $2n$-approximation.

\end{document}